\documentclass[11pt,a4paper]{article}

\usepackage[top=1in,bottom=1in,right=1in,left=1in]{geometry}
\usepackage[T1]{fontenc}
\usepackage{luainputenc}

\usepackage{epsfig, float}
\usepackage{xfrac}

\usepackage{enumitem}
\usepackage{float}
\usepackage{lscape}
\usepackage{setspace} 
\usepackage{graphicx}
\usepackage{mathcomp}
\usepackage{dsfont}
\usepackage{lmodern}

\usepackage{todonotes}

\usepackage{amsmath,amssymb}
\usepackage{amsfonts}
\usepackage{url}
\usepackage{bbm}
\usepackage{mathrsfs} 
\usepackage[english]{babel}
\usepackage[hidelinks]{hyperref}

\usepackage{array}

\newcommand{\R}{\mathbb{R}}
\newcommand{\C}{\mathbb{C}}
\newcommand{\N}{\mathbb{N}}

\newcommand{\F}{\mathscr{F}}
\newcommand{\id}{\mathbbm{1}}
\newcommand{\supp}{{\rm supp} \, }

\newcommand{\dom}{{\rm dom}}

\newcounter{deefinition}
\newtheorem{theorem}{Theorem} 
\newtheorem{lemma}[theorem]{Lemma}

\newtheorem{corollary}[theorem]{Corollary}

\newtheorem{definition}[deefinition]{Definition}

\newenvironment{proof}[1][Proof:]{\begin{trivlist}
\item[\hskip \labelsep {\bfseries #1}]}{\end{trivlist}}
\newenvironment{proofU}[1][Proof of Theorem \ref{thm:uniqueness}:]{\begin{trivlist}
\item[\hskip \labelsep {\bfseries #1}]}{\end{trivlist}}

\newenvironment{remark}[1][Remark:]{\begin{trivlist}
\item[\hskip \labelsep {\bfseries #1}]}{\end{trivlist}}

\newcommand{\qed}{\hfill\ensuremath{\square}}
\newcolumntype{C}[1]{>{\centering\let\newline\\\arraybackslash\hspace{0pt}}m{#1}}

\bibliographystyle{unsrturl}


\begin{document}

\title{\Large\textsc{Multi-time dynamics of the\\
Dirac-Fock-Podolsky model of QED}}

\author{Dirk-Andr\'e Deckert\thanks{deckert@math.lmu.de} \ and  Lukas
Nickel\thanks{nickel@math.lmu.de} \\[0.2cm] Mathematisches Institut,
Ludwig-Maximilians-Universit\"at\\ Theresienstr.\ 39, 80333 M\"unchen, Germany
}

\date{\today}

\maketitle

\begin{abstract} 
    Dirac, Fock, and Podolsky \cite{DiracFockPodolsky} devised a relativistic
    model in 1932 in which a fixed number of $N$ Dirac electrons interact
    through a second-quantized electromagnetic field. It is formulated with the
    help of a multi-time wave function $\psi(t_1,\mathbf{x}_1,...,t_N,
    \mathbf{x}_N)$ that generalizes the Schr\"odinger multi-particle wave function
    to allow for a manifestly relativistic formulation of wave mechanics. The
    dynamics is given in terms of $N$ evolution equations that have to be solved
    simultaneously.  Integrability imposes a rather strict constraint on the
    possible forms of interaction between the $N$ particles and makes the
   rigorous construction of interacting dynamics a long-standing problem, also present
    in the modern formulation of quantum field theory. 
    For a simplified version
    of the multi-time model, in our case describing $N$ Dirac electrons that
    interact through a relativistic scalar field, we prove well-posedness of the
    corresponding multi-time initial value problem and discuss the mechanism and
    type of interaction between the charges. For the sake of mathematical rigor
    we are forced to employ an ultraviolet cut-off in the scalar field. Although
    this again breaks the desired relativistic invariance, this violation occurs
    only on the arbitrary small but finite length-scale of this cut-off. In view
    of recent progress in this field, the main mathematical challenges faced in
    this work are, on the one hand, the unboundedness from below of the free
    Dirac Hamiltonians and the unbounded, time-dependent interaction terms, and
    on the other hand, the necessity of pointwise control of the multi-time wave
    function. \\

    \textbf{Keywords:} multi-time wave functions, relativistic quantum
    mechanics, scalar field, quantum electrodynamics, consistency condition,
    partial differential equations, invariant domains
\end{abstract}

\section{Introduction}

\subsection{The need for multi-time models}

The multi-time formalism for relativistic wave mechanics was first developed in
works of Dirac \cite{DiracRQM, DiracFockPodolsky} and Bloch
\cite{blochmultitime} and after Tomonaga's famous paper \cite{tomonaga}
ultimately lead towards the modern relativistic formulation of QFT. At its
base, the main observation is that the Schr\"odiger wave function for a many-body
system contains only one time variable $t$ and $N$ position variables
$\mathbf{x}_i$, $i=1,\ldots,N$, in other words a configuration of $N$
space-time coordinates $(t,\mathbf{x}_i)$, $i=1,\ldots,N$, on an equal-time
hypersurface ${t}\times\mathbb R^3$ in Minkowski space. A Lorentz-boost will in
general lead to a configuration of space-time points $(t_i',\mathbf{x}_i')$,
$i=1,\ldots,N$, with pair-wise distinct $t_i',t_j'$, hence, a Schr\"odinger
wave function defined on equal time hypersurfaces will fail to have the desired
transformation properties under Lorentz boosts. A natural way to extend the wave function on equal-time hypersurfaces is the multi-time wave function $\psi(t_1,\mathbf{x}_1,...,t_N,\mathbf{x}_N)$, an object which lives on a subset of $\mathbb R^{4N}$.

In recent years, there has been a renewed interest in constructing mathematically
rigoros multi-time models, see \cite{multitimegeneral} for an overview. Some of the current efforts to understand Dirac's multi-time models focus on the well-posedness of the corresponding initial value problems \cite{multitimeNoPotentials, 1d_model, NTeilchenPaper, neurodi, markuss},
other works also ask the question how the multi-time formalism could be exploited to avoid the infamous ultraviolet divergence of relativistic QFT and how a varying number of particles by means of creation and annihilation processes can be addressed
\cite{multi-timeQFT, multitimePaircreation, MultitimeIBC}. Beside
being candidate models for fundamental formulations of relativistic wave
mechanics, a better mathematical understanding of such multi-time evolutions
may also be beneficial regarding more technical discussions, such as the control of
scattering estimates on vacuum expectation values of products of interacting
field operators; see e.g.\ \cite{dybalski}.

Many contemporary treatments of multi-time models are yet not entirely
satisfactory as they either have technical deficiencies, e.g., do not allow to
treat unbounded Hamiltonians, or define interactions whose nature are
conceptually not entirely clear or experimentally adequate. Also our treatment
presented in this work is not fully satisfactory by those standards, as for the
sake of mathematical rigor we need to introduce an ultraviolet cut-off that in
turn breaks the Lorentz-invariance of the model. Nevertheless, building on
previous works, we still achieve a substantial improvement since we can allow
for unbounded Hamiltonians in the evolution equations. Furthermore, the
violation of Lorentz-invariance only occurs on the finite but arbitrary small
length-scale of the cut-off. Since the mathematically rigorous treatment of
multi-time evolutions is independent of the ultraviolet divergences of
relativistic interaction, we believe that it is advantageous for the progress in
both topics to separate the discussion between formulations of multi-time
dynamics and the divergences of quantum field theory at first. Later, it may well
be that the understanding of multi-time evolution leads to new possibilities to
encode relativistic interaction without causing ultraviolet divergences. \\

This work is divided into three parts. First, we give an informal introduction
to the model at hand in subsection \ref{sec:motivationdfp}. The mathematical
definition of this model is then given in section \ref{sec:defdfpmodel} where
we state our main results on existence, uniqueness, and interaction of
solutions, i.e., Theorem~\ref{thm:mainthm}, Theorem~\ref{thm:uniqueness}, and
Theorem~\ref{thm:interaction}, respectively. The corresponding proofs are
provided in section \ref{sec:theproofs}.  

\subsection{The multi-time model} \label{sec:motivationdfp} 

In our choice of model we follow closely the Dirac, Fock, Podolsky (DFP) model given in the paper
\cite{DiracFockPodolsky}, which we informally introduce in this
subsection and formally define in the next one. This model is supposed to
describe the relativistic interaction between $N$ persistent Dirac electrons.
The only simplification we assume for the model treated in this paper in
comparison to the original DFP model is that the electromagnetic
interaction is replaced by the one of a scalar field. This allows to avoid the
additional complication of electromagnetic gauge freedom. A ready
choice for the evolution equations of the multi-time wave function
$\psi(x_1,...,x_N)$ is a system of $N$ Hamiltonian equations,
\begin{equation} \label{eq:firstthing}
    i \partial_{t_j} \psi(x_1,...,x_N) = \mathcal{H}_j \psi(x_1,...,x_N), \quad j=1,...,N,
\end{equation}
with a suitable \textit{partial Hamiltonian} $\mathcal{H}_j$ for each particle.
In \cite{blochmultitime}, Bloch argued that it is necessary for the existence
of solutions to \eqref{eq:firstthing} that an integrability condition for the
different times $t_j$, the so-called \textit{consistency condition}
\begin{equation} \label{eq:consistencycond}
    \left[ \mathcal{H}_j, \mathcal{H}_k \right] + i \frac{\partial
    \mathcal{H}_j}{\partial t_k} - i \frac{\partial \mathcal{H}_k}{\partial
t_j} = 0, \quad \forall j \neq k,
\end{equation}
is satisfied in the domain of $\psi$, which is usually taken as the set of
space-like configurations in $\R^{4N}$.

 Let $\mathcal{H}^0_j = -i \gamma^0_j \boldsymbol{\gamma}_j \cdot
\boldsymbol{\nabla}_j + \gamma^0_j m$ be the free Dirac operator acting on particle $j$, with the usual gamma matrices $\gamma^\mu_j$. For the free multi-time evolution with Hamiltonians $\mathcal{H}_j =
\mathcal{H}^0_j$,
condition \eqref{eq:consistencycond} is fulfilled.  For the introduction of a non-trivial interaction, however, the consistency condition poses a serious obstacle. If one takes as partial Hamiltonians 
\begin{equation}
    \mathcal{H}_j = \mathcal{H}^0_j + V_j(x_1,...,x_N),
\end{equation}
with interaction potentials, i.e.\ multiplication operators, $V_j$, it is
hardly possible to fulfill \eqref{eq:consistencycond}. Using this insight, it
was shown in \cite{multitimeNoPotentials, consistencynickeldeckert} that
systems of multi-time Dirac equations with relativistic interaction potentials
fail to admit solutions.

Already in 1932 in \cite{DiracRQM}, Dirac pointed out an ingenious way to
circumvent this problem, namely, by second quantization. He observed that in
case the ``potential''  is not a multiplication operator, but a 
Fock space valued field operator $\varphi(x)$, the consistency condition
\eqref{eq:consistencycond} can be retained although it will turn out
that an interaction is present. The Hamiltonians in question are of the form
\begin{equation} \label{eq:Hamiltonian}
    \mathcal{H}_j = \mathcal{H}^0_j + \varphi(t_j, \mathbf{x}_j),
\end{equation}
all containing one and the same second quantized scalar field $\varphi$ on
space-time $\R^4$, fulfilling the wave equation
\begin{equation}
    \label{eq:fieldeqns}
    \square_{x} \varphi(x) = \left(\partial_t^2 - \triangle_{\mathbf{x}} \right)
    \varphi(t,\mathbf{x})   = 0,
\end{equation}
as well as the canonical commutation relation
\begin{equation} \label{eq:CCR}
    \left[ \varphi(x_j), \varphi(x_k) \right] = i \Delta(x_j,x_k),
\end{equation}
with $\Delta$ being the Pauli-Jordan function \cite{tomonaga, schwinger} given
in \eqref{eq:Paulijordan}. It is well-known that \eqref{eq:CCR} implies
\begin{equation}
    \left[ \varphi(x_j), \dot{\varphi}(x_k) \right]_{t_j=t_k} = i \delta^{(3)}(\mathbf{x}_j - \mathbf{x}_k) .
\end{equation}
This ensures the consistency of the system of equations in the sense of
\eqref{eq:consistencycond} since 
\begin{equation} \label{eq:drr}
    \Delta(x_j, x_k) = 0 \quad \text{if} \ x_j, x_k \ \text{are space-like related}.
\end{equation}
A natural choice for a representation of the field operator fulfilling
\eqref{eq:CCR} is the one on standard Fock space.  The multi-time wave-function
$\psi(x_1,...,x_N)$ can then be thought of as taking values in a bosonic Fock
space. This second quantization of $\varphi(x)$ is the key feature to
understand how the seemingly ``free'' evolutions in \eqref{eq:fieldeqns} in
fact allow to mediate interaction between the Dirac electrons. In fact, an
informal computation (see \cite{DiracFockPodolsky}) shows that \eqref{eq:fieldeqns} and \eqref{eq:drr} imply
for the field operator $\varphi_H(x):=\varphi_H(t,\mathbf{x})=U(t)^\dagger
\varphi(0,\mathbf{x}) U(t)$, where $U(t)$ denotes the time evolution of the
$N$-body system on equal-time hypersurfaces, that
\begin{equation} 
    \label{eq:ehrenfestdelta}
    \left(\partial_t^2 - \triangle_{\mathbf{x}} \right)\varphi_H(t,\mathbf{x})
    = - \sum_{j=1}^N \delta^{(3)}
    \left(\hat{\mathbf{x}}_j(t) - \mathbf{x}\right), 
\end{equation}
where $\hat{\mathbf{x}}_j(t)=U(t)^\dagger \hat{\mathbf{x}}_jU(t)$ denotes the
position operator of the $k$-th electron in the Heisenberg picture. The
right-hand side of \eqref{eq:ehrenfestdelta} now demonstrates the effective
source terms influencing the scalar field which in turn couples the motion of
the $N$ electrons. A rigoros version of this informal computation is given as
Theorem~\ref{thm:interaction}.

\paragraph{Mathematical challenges.} There are three main difficulties we have
to overcome for a mathematical solution theory of the model.  
\begin{enumerate}
    \item As it is well-known \cite{schweber}, the scalar field model is
        badly ultraviolet divergent. A standard way to defer the discussion of
        this problem and nevertheless continue the mathematical discussion is
        the introduction of a ultraviolet cut-off in the scalar field.  This
        cut-off, which can be thought of as smearing out the scalar field with
        a smooth and compactly supported function $\rho$ with diameter
        $\delta>0$, ensures well-definedness of the model, however, breaks
        Lorentz on the length scale $\delta$ as it smears out the right-hand
        side of \eqref{eq:drr} as can be seen from
        \eqref{eq:interactioninourmodel} below.  This will
        furthermore force us to take as domain $\mathscr{S}_\delta$, defined in
        \eqref{eq:defSdelta} below, for the multi-time wave function instead of
        all space-like configurations on $\mathbb R^{4N}$. Since
        $\mathscr{S}_\delta$ is not an open set in $\R^{4N}$ a
        simple notion of differentiability is not sufficient anymore which is
        reflected in our choice of solution sense in
        Definition~\ref{def:solution}.
    \item We need sufficient regularity in the
        solution candidates to allow for point-wise evaluation. It is decisive
        for our proofs that we find a dense set $\mathscr{D}$ of smooth
        functions which is left invariant by the single-time evolutions.
        Furthermore, the majority of methods employed in the
        literature on Schr\"odinger Hamiltonians (see e.g.\
        \cite{Hasler2008}) rely on boundedness from
        below, and hence, do not apply to our setting as the free Dirac
        Hamiltonian is not bounded from below.   
    \item
        Since we add unbounded and time-dependent interaction terms to the free
        Dirac Hamiltonians, already the study of the corresponding single-time
        equations generated by the Hamiltonians $\mathcal{H}_j(t)$ in
        \eqref{eq:Hamiltonian} is subtle. Abstract theorems such as the one of
        Kato \cite{katopaper} or Yosida \cite[ch.\ XIV]{yosida}  about the
        existence of a propagator $U(t,s)$ require time-independence of the
        domain $\dom(\mathcal{H}_j(t))$, which in our case is unknown. 
\end{enumerate}

Beside the introduction of an ultraviolet cut-off, which will be defined in the
next section, there is a further difference compared to the original
formulation of Dirac, Fock, Podolsky, namely that the multi-time wave function
$\psi$ of $N$ particles has $N$ time arguments and not an additional ``field
time'' argument.  This is because we formulate the field degrees of freedom in
momentum space and in the Dyson picture, leading to a time-dependent
$\varphi(t,\mathbf{x})$ but no free field Hamiltonian in $\mathcal{H}_j$. The
choice of a field time as in \cite{DiracFockPodolsky} corresponds to choosing a
space-like hypersurface $\Sigma$ (in that paper, only equal-time hypersurfaces
$\Sigma_t$ are considered) on which the field degrees of freedom are evaluated.
Our formulation is mathe\-ma\-ti\-cal\-ly convenient since the Hilbert space is
fixed and not hypersurface-dependent. It is always possible to choose a
hypersurface and perform the Fourier transformation to obtain field modes in
position space.  

\section{Definition of the model and main results} \label{sec:defdfpmodel}

We now put the model described by the informal equations \eqref{eq:firstthing},
\eqref{eq:Hamiltonian}, \eqref{eq:CCR} into a mathematical rigoros context and
define a solution sense, see Def.\ \ref{def:solution} below, which will allow
us to formulate our main results about existence and uniqueness of solutions.
As the model describes the interaction of $N$ electrons with a scalar field, an
operator on Fock space, there are two main ingredients we need to define:
the field operator and the multi-time evolution equations.

\paragraph{Field operator with Cut-off.}  We follow the standard  quantization
procedure. The Fock space is constructed by means of a direct sum
of symmetric tensor products of the one-particle Hilbert space $L^2(\R^3,\C)$
of complex valued square integrable functions on $\R^3$:
\begin{equation}
    \label{eq:fockspace}
    \F = \bigoplus_{n=0}^\infty L^2({\R^3},\C)^{\odot n},
\end{equation}
where  $\odot$ denotes the symmetric tensor product.  In our setting, we think
of $\R^3$ as momentum space.  The total Hilbert space, in which the wave
function $\psi(t_1,\cdot,...,t_N,\cdot)$ is contained for fixed time
$t_1,\ldots,t_N\in\R$, is given by
\begin{equation}
    \label{eq:hilbertspaceidentification}
    \mathscr{H} = L^2(\R^{3N}, \F^{K}) \cong L^2(\R^{3N},\C^K) \otimes \F  \cong
    L^2(\R^{3N},\C) \otimes \F^K,
\end{equation}
with $K = 4^N$ denoting the dimension of spinor space of the $N$ Dirac
electrons.  
In view of \eqref{eq:fockspace} and \eqref{eq:hilbertspaceidentification},
we use the notation
\begin{equation} 
    \begin{split}
        \text{for a.e.} \ (x_1,...,x_N): \psi(x_1,...,x_N) = \left(
            \psi^{(n)}(x_1,...,x_N) \right)_{n \in \N_0}, \\ \text{so that} \ \left(
            (\mathbf{k}_1,...,\mathbf{k}_n) \mapsto
        \psi^{(n)}(x_1,...,x_N;\mathbf{k}_1,...,\mathbf{k}_n) \right) \in \C^K
        \otimes L^2({\R^3},\C)^{\odot n}
    \end{split}
\end{equation}
to denote the $n$-particles sectors of Fock space $\F$
and distinguish between functions with values in $\F$ and $\C^K$.
A dense set in $\F$ are the finite particle vectors
$\F_{\operatorname{fin}}$. On this set,
we can define for square integrable $f$, as in Nelson's paper \cite{nelson},
the annihilation
\begin{equation}
    \left( \int d^3\mathbf{k} ~ f(\mathbf{k}) a(\mathbf{k}) \psi \right)^{(n)} (\mathbf{k}_1,...,\mathbf{k}_n) = \sqrt{n+1} \int d^3\mathbf{k} ~ f(\mathbf{k}) \psi^{(n+1)} (\mathbf{k},\mathbf{k}_1,...,\mathbf{k}_n) 
\end{equation}
and creation operators
\begin{equation}
    \left( \int d^3\mathbf{k} ~ f(\mathbf{k}) a^\dagger(\mathbf{k}) \psi \right)^{(n)} (\mathbf{k}_1,...,\mathbf{k}_n)  = \frac{1}{\sqrt{n}} \sum_{j=1}^n f(\mathbf{k}_j) \psi^{(n-1)} (\mathbf{k}_1,...,\widehat{\mathbf{k}_j},...,\mathbf{k}_n),
\end{equation}
in which a variable with hat is omitted. The field mass is $\mu \geq 0$ and the
energy $\omega(\mathbf{k}) = \sqrt{\mathbf{k}^2 + \mu^2}$, which allows to
define the free field Hamiltonian
\begin{equation}
    \left( \mathcal{H}_f \psi \right)^{(n)}(\mathbf{k}_1,...\mathbf{k}_n) = \sum_{j=1}^n \omega(\mathbf{k_j}) \psi(\mathbf{k}_1,...,\mathbf{k}_n),
\end{equation}
as self-adjoint operator on its domain $\dom(\mathcal{H}_f) \subset \F$; see
\cite{spohn}. 
We will later use the notation $\dom(\mathcal{H}_f^\infty):=
\bigcap_{j=0}^\infty \dom(\mathcal{H}_f^j)$.\\ 

Before we can define the scalar field, we need to introduce the cut-off as
final ingredient. Let $B_r(\mathbf{x})$ denote the open ball in $\R^3$ of
radius $r$ around $\mathbf{x}\in\R^3$. For this we introduce a smooth and
compactly supported real-valued 
function
\begin{equation} \label{eq:cutoff}
    \rho \in C^\infty_c(\R^{3},\R) \ \text{such that} \ \supp(\rho) \subset
    B_{\delta/2}(\mathbf{0}) ,
\end{equation} 
which can later be thought of as smearing out the point-like interaction to be
mediated by the scalar field by a charge form factor $\rho$.  The Fourier
transform $\hat{\rho}(\mathbf{k})$ is an element of the Schwartz space of
function of rapid decay with not necessarily compact support. For each particle
index $j=1,...,N$, we can now define the time-dependent scalar field
\begin{equation} \label{eq:definitionoffield}
    \varphi_j(t) \psi := \int d^3\mathbf{k} ~ \left[ \left(
            \frac{\hat{\rho}(\mathbf{k})}{\sqrt{\omega(\mathbf{k})}} e^{-i
            \omega(\mathbf{k}) t} e^{i \mathbf{k} \cdot \hat{\mathbf{x}}_j}  a(\mathbf{k})
            + \frac{\hat{\rho}^\dagger(\mathbf{k})}{\sqrt{\omega(\mathbf{k})}}  e^{i
            \omega(\mathbf{k}) t} e^{-i \mathbf{k} \cdot \hat{\mathbf{x}}_j}
    a^\dagger(\mathbf{k}) \right) \psi \right]
\end{equation}
for sufficiently regular $\psi\in\mathscr{H}$.  Here, $\hat{\mathbf{x}}_j$ is
the position operator of the $j$-th particle which acts on a multi-time wave
function by $\hat{\mathbf{x}}_j \psi(t_1,\mathbf{x}_1,...,t_j, \mathbf{x}_j,
...) = \mathbf{x}_j \psi(t_1,\mathbf{x}_1,...,t_j, \mathbf{x}_j, ...)$. 
The necessity of the cut-off function $\rho\in  C^\infty_c(\R^3)$ can be
seen from the fact that if we had chosen $\rho(\mathbf{x})=\delta^3(\mathbf{x})$
which for reasons of Lorentz invariance would be physically desirable but would
imply $\hat{\rho}= (2\pi)^{-3/2}$, the domain of the second summand in
$\varphi_j(t)$ would be $\{0\}$, which is a manifestation of the mentioned
ultraviolet problem. With a square integrable $\hat{\rho}$, the field operator
is self-adjoint on a dense domain; see \cite{spohn}. An equivalent definition
is possible by direct fiber integrals, see \cite{zenkstockmeyer, arai}. Despite
the notation, one should not think of the $\varphi_j$ as being $N$ different
fields, the index just denotes in a brief way that the single scalar field is
evaluated at the coordinates of particles $j$, i.e.\ at $\mathbf{x}_j$.

This allows to define the one-particle Hamiltonians as follows:
\begin{equation} \label{eq:DFPpsi}
    \mathcal{H}_j(t) = \mathcal{H}^0_j + \varphi_j(t), \quad j=1,...,N.
\end{equation}

\paragraph{Multi-Time Evolution Equations and Solution Sense.} 
As domain for our multi-time wave function on configuration space-time, we take
those configurations of space-time points which are at equal times or have a
space-like distance of at least $\delta$, i.e. 
\begin{equation} \label{eq:defSdelta}
\mathscr{S}_\delta := \left\lbrace \left. (x_1,...,x_N) \in \R^{4N}\right| \forall j \neq k: t_j = t_k \ \text{or} \ \|\mathbf{x}_j - \mathbf{x}_k\| > |t_j-t_k| + \delta  \right\rbrace.
\end{equation}
The multi-time wave function will hence be represented as a map $\psi:
\mathscr{S}_\delta \to \F^K$.

The natural notion of a solution to our multi-time system \eqref{eq:firstthing}
would be a smooth function mapping from $\mathscr{S}_\delta$ to the Fock space
$\F^K$. However, the above introduced Hilbert space $\mathscr{H}$ on $\R^{3N}$
allows to apply on a lot of functional analytic methods, and thus, simplifies
the mathematical analysis considerably. This is why it is helpful to at first
define a solution as a map $\psi: \R^N \to \mathscr{H}, (t_1,...,t_N) \mapsto
\psi(t_1,...,t_N)$ and require it to solve the system
\eqref{eq:firstthing} on the space-time configurations in $\mathscr{S}_\delta$.
The latter involves the difficulty that the domain $\mathscr{S}_\delta$ is not
an open set in $\R^{4N}$ so that partial derivatives with respect to time
coordinates cannot be straightforwardly defined in this set. 

\begin{figure}[h] \begin{center} 
\includegraphics[scale=1.1]{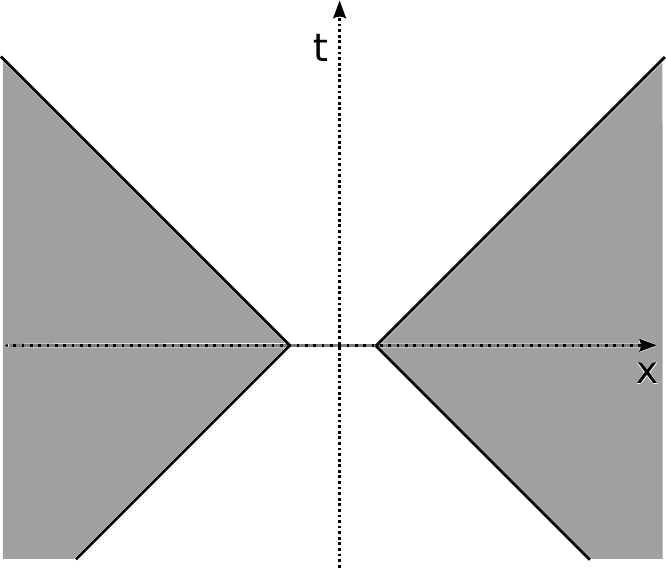}
\caption{The set $\mathscr{S}_\delta$ is depicted in grey, for two particles in relative coordinates. Because of the line at $t=t_1-t_2=0$, this is obviously not an open set in configuration space-time. At the origin, for example, the partial derivative $\partial_{t_1}$ cannot be computed inside the set.} \label{fig:Sdelta}
\end{center} \end{figure}

In order to cope with this difficulty, we adapt a method to define partial derivatives in $\mathscr{S}_\delta$ that
was also employed by Petrat and Tumulka \cite[sec.\ 4]{multitimeNoPotentials}.
If all times are pair-wise different, the usual partial derivatives exist.
However, this is not the case at points where for some $j \neq k$, $t_j = t_k$
while $\| \mathbf{x}_j - \mathbf{x}_k \| \leq \delta$. For those configurations
we will only take the derivative with respect to the common time
coordinate. This is implemented as follows: Each point
$x=(x_1,...,x_N) \in \mathscr{S}_\delta$ defines a partition of $\{ 1,...,N\}$
into non-empty disjoint subsets $P_1,...,P_L$ by the equivalence relation that
is the transitive closure of the relation that holds between $j$ and $k$
exactly if\footnote{This gives exactly the partition called $FP_{q^4}$ by
Petrat and Tumulka.} $\|\mathbf{x}_j - \mathbf{x}_k\| \leq |t_j-t_k| + \delta$.
We call this the \textit{corresponding partition} to $x$. By
\eqref{eq:defSdelta}, all particles in one set $P_i$ of the partition
necessarily have the same time coordinate, i.e.\ $\forall i \in \{1,...,L\} ~
\forall j,k \in P_i$, we have $t_j=t_k$. We write this common time coordinate
as $t_{P_i}$ for each $i=1,...,L$. 

The partial derivative with respect to $t_{P_i}$ can now be defined for a
differentiable function $\psi: \R^N \to \mathscr{H}$ as
\begin{equation} \label{eq:partialder}
\left( \frac{\partial}{\partial t_{P_i}} \psi(t_1,...,t_N)\right)(\mathbf{x}_1,...,\mathbf{x}_N) := \sum_{j \in P_i} \left( \frac{\partial}{\partial t_j} \psi(t_1,...,t_N)\right) (\mathbf{x}_1,...,\mathbf{x}_N),
\end{equation}
provided that the expression on the right-hand side is well-defined. By this
definition, $\frac{\partial}{\partial t_{P_i}} \psi$ can be obtained solely by
limits of sequences of configurations inside $\mathscr{S}_\delta$,  so changing
the function $\psi$ outside of the relevant domain $\mathscr{S}_\delta$ will
not matter for the derivative, and thus not affect its status of being a
solution. With this notation at hand, we define:

\begin{definition}[Solution Sense] \label{def:solution} 
    For each set $A \subset \{1,...,N\}$, define the respective Hamiltonian
\begin{equation} \label{eq:defHA}
\mathcal{H}_A(t) := \sum_{j \in A} \left( \mathcal{H}^0_j +  \varphi_j(t) \right).
\end{equation}
A \emph{solution of the multi-time system} is a function $\psi: \R^N \to \mathscr{H}, (t_1,...,t_N) \mapsto \psi(t_1,...,t_N)$ such that the following hold:
\begin{enumerate}
    \item[i)] Time derivatives: $\psi$ is differentiable.
    \item[ii)] Pointwise evaluation: For every
        $(t_1,\mathbf{x}_1,...,t_N,\mathbf{x}_N) \in \mathscr{S}_\delta$, and
        for all $j=1,...,N$, the following pointwise evaluations are
        well-defined:
        \begin{equation} \label{eq:pointwiseev} \begin{split}
            \Big(\psi(t_1,...,t_N)\Big)(\mathbf{x}_1,...,\mathbf{x}_N), \\ \left(\partial_{t_j} \psi(t_1,...,t_N)\right)(\mathbf{x}_1,...,\mathbf{x}_N), \\ \Big(\mathcal{H}_j(t_j) \psi(t_1,...,t_N)\Big)(\mathbf{x}_1,...,\mathbf{x}_N).
        \end{split} 
    \end{equation}
\item[iii)] Evolution equations: For every $x =
    (t_1,\mathbf{x}_1,...,t_N,\mathbf{x}_N) \in \mathscr{S}_\delta$ with
    corresponding partition $P_1,...,P_L$,  the equations
    \begin{equation} \label{eq:systemstrongsolution}
        \left( \frac{\partial}{\partial t_{P_j}} \psi(t_1,...,t_N)\right)(\mathbf{x}_1,...,\mathbf{x}_N)=\Big( \mathcal{H}_{P_j} (t_{P_j})  \psi \left(t_1,...,t_N \right) \Big) (\mathbf{x}_1,...,\mathbf{x}_N) , \quad j=1,...,L,
    \end{equation}
    where the left hand side is defined by \eqref{eq:partialder}, are
    satisfied.
\end{enumerate}
\end{definition}

Due to the unfamiliar structure of the domain $\mathscr{S}_\delta$ and our
compact notation, this definition may look complicated at first sight. However,
the complication is only due to the introduction of the cut-off $\rho$ which
led to the definition of $\mathscr{S}_\delta$. The purpose of the whole effort is simply 
to restrict the system \eqref{eq:firstthing} to those time
directions in which taking the derivative is admissible in
$\mathscr{S}_\delta$. It may be helpful to take a quick look at
Eq.\ \eqref{eq:syys} which shows the explicit form of the multi-time system for
the special case of $N=2$. We emphasize that with our notation in
\eqref{eq:defHA}, the index of the Hamiltonian is actually a set, for example
$\mathcal{H}_{\{1,2\}} = \mathcal{H}_{1} + \mathcal{H}_{2}$ denoting the mutual
Hamiltonian of particles $1$ and $2$. 

As a final ingredient, we define a dense domain in $\mathscr{H}$:
\begin{equation} \label{eq:defD}
    \mathscr{D} := C^\infty_c(\R^{3N},\C^K) \otimes \F \cap L^2(\R^{3N}, \C^K) \otimes \dom(\mathcal{H}_f^\infty).
\end{equation}
Our first main result is on the existence of solutions given initial values in
$\mathscr{D}$.
\begin{theorem}[Existence] \label{thm:mainthm}     
   Let $\psi^0 \in \mathscr{D}$. Then there is a solution of the multi-time
    system $\psi$ in the sense of definition \ref{def:solution} which satisfies
    $\psi(0,...,0) = \psi^0$ pointwise. In particular, there is such a solution
    $\psi$ fulfilling
    \begin{equation} \label{eq:psibarisgood}
        \psi (t_1, \cdot, ..., t_N, \cdot) \in \mathscr{D} \quad \forall (t_1,...,t_N) \in \R^N.
    \end{equation}
\end{theorem}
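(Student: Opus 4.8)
The plan is to construct the solution by iterated single-time evolution, exploiting the consistency condition that the authors have set up. The core object will be a family of strongly continuous, unitary propagators: for each non-empty $A \subset \{1,\dots,N\}$, a propagator $U_A(t,s)$ generated by the time-dependent Hamiltonian $\mathcal{H}_A(t) = \sum_{j\in A}(\mathcal{H}^0_j + \varphi_j(t))$ of Definition~\ref{def:solution}. The first major step — which I expect to be the principal obstacle — is to establish existence of each $U_A(t,s)$ as a unitary propagator on $\mathscr{H}$ that leaves the dense domain $\mathscr{D}$ of \eqref{eq:defD} invariant, together with quantitative control of how it acts on $\mathscr{D}$. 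Because $\mathcal{H}^0_j$ is not bounded below, and because $\varphi_j(t)$ is unbounded and time-dependent with an a priori unknown operator domain, the classical Kato/Yosida machinery (which needs a time-independent form domain or operator domain) does not apply directly. The natural route is a Nelson-type commutator estimate: use $N:=\mathcal{H}_f + 1$ (or a suitable power of it) as a comparison operator, verify that $\varphi_j(t)$ is $N^{1/2}$-bounded uniformly in $t$ on compacts and that the commutator $[\mathcal{H}_A(t), N]$ is again controlled by $N$, and then invoke an abstract result in the style of Nelson's ``commutator theorem'' / the Faris--Lavine or the Sch'minck--Yajima type construction to get a propagator that preserves $\dom(\mathcal{H}_f^\infty)$. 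The $C^\infty_c$-in-$\mathbf{x}$ part of $\mathscr{D}$ is more delicate: the free Dirac evolution $e^{-it\mathcal{H}^0_j}$ does \emph{not} preserve compact support, so I would instead track a weaker but propagating property, e.g.\ Schwartz-class decay in $\mathbf{x}_j$ together with the needed regularity, and show it suffices for all pointwise evaluations in Definition~\ref{def:solution}; alternatively one keeps $C^\infty_c$ data and shows the solution stays in a space that still admits pointwise evaluation. Since the theorem as stated asks for $\psi(t_1,\cdot,\dots,t_N,\cdot)\in\mathscr{D}$, the cleaner variant is to redefine/enlarge the tracked space to one that is genuinely propagation-invariant and contains $\mathscr{D}$; I would prove the invariance statement for \emph{that} space and note $\psi^0\in\mathscr{D}$ lands in it.

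The second step is the multi-time gluing. Given $\psi^0\in\mathscr{D}$, define $\psi$ on $\mathscr{S}_\delta$ by applying the single-time propagators successively along the ``time axes'' associated with the corresponding partition. Concretely: order the distinct time values among $t_1,\dots,t_N$, start from the all-equal-times hypersurface (where $U_{\{1,\dots,N\}}$ acts), and then peel off the blocks of the partition one at a time, each block $P_i$ evolved by $U_{P_i}$ from the common time up to its actual time; more generally, for an arbitrary point of $\mathscr{S}_\delta$, build a monotone path in $\R^N$ from the origin to $(t_1,\dots,t_N)$ that stays within the region where the partition structure is controlled, and evolve block-wise along that path. The key point is well-definedness: the value $\psi(t_1,\dots,t_N)$ must not depend on the chosen path. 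This is exactly where the consistency condition \eqref{eq:consistencycond} enters, which here is the statement that $[\mathcal{H}_j(t_j),\mathcal{H}_k(t_k)]$ vanishes on the relevant configurations because the smeared Pauli--Jordan commutator $[\varphi_j(t_j),\varphi_k(t_k)]$ is proportional to a function supported where $\|\mathbf{x}_j-\mathbf{x}_k\|\le|t_j-t_k|+\delta$ — precisely the complement (within configurations with $t_j\ne t_k$) of the constraint in \eqref{eq:defSdelta}. Hence on $\mathscr{S}_\delta$ the partial Hamiltonians of disjoint blocks commute when applied to our solution candidate, the corresponding propagators commute, and path-independence follows. I would phrase this as: any two monotone lattice paths differ by elementary square moves, and each square move is the commuting of $U_{P}(t+h,t)$ with $U_{Q}(t'+h',t')$ for disjoint blocks $P,Q$, valid on $\mathscr{S}_\delta$ by the support property of $\Delta_\rho$.

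The third step is verification that this $\psi$ satisfies (i)--(iii) of Definition~\ref{def:solution}. Differentiability in each $t_j$ and the pointwise evaluations of $\psi$, $\partial_{t_j}\psi$, and $\mathcal{H}_j(t_j)\psi$ follow from Step~1: since $\psi(t_1,\dots,t_N)$ lies in the propagation-invariant subspace of $\mathscr{D}$-type regularity (smooth in momentum with enough decay, Fock-regular), it is in $\dom(\mathcal{H}^0_j)\cap\dom(\varphi_j(t_j))$ and the relevant vectors are continuous functions on $\R^{3N}$ with values in $\F^K$ by Sobolev embedding; the time derivative is computed from the generator of the corresponding propagator. For (iii), at a point with corresponding partition $P_1,\dots,P_L$, the block-wise construction means that moving in the $t_{P_i}$ direction is governed exactly by $U_{P_i}$, whose generator is $\mathcal{H}_{P_i}(t_{P_i})$; so $\partial_{t_{P_i}}\psi = \mathcal{H}_{P_i}(t_{P_i})\psi$, and the definition \eqref{eq:partialder} of $\partial_{t_{P_i}}$ as the sum of the $\partial_{t_j}$, $j\in P_i$, is consistent because within a block all $t_j$ are equal and the path is taken diagonally in those coordinates. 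Finally $\psi(0,\dots,0)=\psi^0$ holds by construction. The one genuinely hard technical nucleus remains Step~1 — building the single-time propagators for unbounded-below Dirac Hamiltonians with unbounded time-dependent perturbations and proving the invariance of a suitable dense regular domain — everything else is bookkeeping around the consistency/support property and the commutation of disjoint-block propagators.
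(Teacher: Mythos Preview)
Your overall architecture matches the paper's, but there are two concrete gaps.

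First, your claim that ``the free Dirac evolution $e^{-it\mathcal{H}^0_j}$ does \emph{not} preserve compact support'' is false. The Dirac equation is a symmetric hyperbolic system with finite propagation speed (the speed of light), so compactly supported data stay compactly supported, with the support growing only inside the forward light cone. This is not a side remark: it is exactly what the paper uses (Lemma~\ref{thm:lemmacausal}) to show that $U_A(t,s)$ genuinely leaves $\mathscr{D}$ invariant (Corollary~\ref{thm:corollaryDinvariant}), and your proposed detour through a Schwartz-type enlargement would fail to deliver the theorem as stated, which asks for $\psi(t_1,\dots,t_N)\in\mathscr{D}$. More importantly, finite propagation speed is the second ingredient, beyond the commutator identity, that makes the multi-time gluing work at all.

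Second, you treat ``the corresponding propagators commute'' as an immediate consequence of the vanishing of $[\varphi_j(t_j),\varphi_k(t_k)]$ on the relevant configurations. It is not. The propagator $U_B(t_B,s)$ is a nonlocal object: even if $[\mathcal{H}_A(t_A),\mathcal{H}_B(t_B)]\zeta$ vanishes at a given $(\mathbf{x}_1,\dots,\mathbf{x}_N)$, nothing a priori prevents $U_B$ from transporting mass in from configurations where the commutator does \emph{not} vanish. The paper's substantial technical core (Lemma~\ref{thm:subclaim}) is precisely the statement that $\big([\mathcal{H}_A(t_A),U_B(t_B,s)]\zeta\big)(\mathbf{x}_1,\dots,\mathbf{x}_N)=0$ on the right set, and its proof requires (i) approximating the unbounded $\varphi_A$ by bounded $\varphi_A^\varepsilon$ so that $t\mapsto U_B(t,s)\varphi_A^\varepsilon\zeta$ is differentiable, (ii) a Duhamel-type representation of the commutator, and (iii) tracking supports using the finite propagation speed of $U_B$ to confine the ``source'' term $[\varphi_B(t),\varphi_A(t_A)]$ to the region where it contributes. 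Your ``elementary square moves'' argument would need exactly this lemma (or the stronger $[U_A,U_B]$ version), and that is where the real work lies; it cannot be read off from the support of the smeared Pauli--Jordan function alone.

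As a smaller point: the paper's construction of $U_A(t,s)$ is cleaner than the generic Kato/Nelson route you sketch. One passes to the interaction-picture Hamiltonian $\widetilde{\mathcal{H}}_{A,s}=\mathcal{H}_f+\mathcal{H}_A(s)$, which is \emph{time-independent} and essentially self-adjoint on $\mathscr{D}$ by the commutator theorem, gets a one-parameter unitary group from Stone, and then undoes the free field evolution. This sidesteps the time-dependent-domain issue entirely.
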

The second main result is on the uniqueness of solutions in $\mathscr{D}$.
\begin{theorem}[Uniqueness] \label{thm:uniqueness}
    Let $\psi^0 \in \mathscr{D}$. Let $\psi_1$ and $\psi_2$ be two solutions of
    the multi-time system in the sense of definition \ref{def:solution} which
    both satisfy $\psi_k(0,...,0) = \psi^0$ pointwise for $k=1,2$.  Then we
    have for all $(t_1,\mathbf{x}_1,...,t_N,\mathbf{x}_N) \in
    \mathscr{S}_{\delta}$:
    \begin{equation}
        \Big( \psi_1(t_1,...,t_N) \Big) (\mathbf{x}_1,...,\mathbf{x}_N) = \Big( \psi_2(t_1,...,t_N) \Big) (\mathbf{x}_1,...,\mathbf{x}_N).
    \end{equation}
\end{theorem}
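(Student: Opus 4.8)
The plan is to reduce the statement to uniqueness for a finite chain of single-time Cauchy problems and to bridge the gap between the pointwise equations on $\mathscr{S}_\delta$ and $L^2$-estimates by means of the finite propagation speed of the free Dirac operators.

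By linearity of the conditions in Definition~\ref{def:solution}, the difference $\phi := \psi_1 - \psi_2$ is again a solution of the multi-time system, and since all time coordinates coincide at the origin — so that every spatial configuration lies in $\mathscr{S}_\delta$ — the hypothesis forces $\phi(0,\ldots,0) = 0$ in $\mathscr{H}$. It therefore suffices to show $\bigl(\phi(t_1,\ldots,t_N)\bigr)(\mathbf{x}_1,\ldots,\mathbf{x}_N) = 0$ for every $(t_1,\mathbf{x}_1,\ldots,t_N,\mathbf{x}_N) \in \mathscr{S}_\delta$. Fix such a configuration; after relabelling particles and treating blocks with equal time jointly we may assume that its corresponding partition $P_1,\ldots,P_L$ has strictly increasing block times $s_1 < \cdots < s_L$. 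I connect the time component $(0,\ldots,0)$ to $(t_1,\ldots,t_N)$ in $\R^N$ by a polygonal path: a first segment along the diagonal from $0$ to $s_1$, and for $i = 1,\ldots,L-1$ an $i$-th segment moving the times of $A_i := P_{i+1}\cup\cdots\cup P_L$ from $s_i$ to $s_{i+1}$ while the times of $F_i := P_1\cup\cdots\cup P_i$ stay frozen at $s_1,\ldots,s_i$. A short computation with \eqref{eq:defSdelta} shows that the whole path, together with each backward light-cone of $\mathcal{H}_{A_i}$ over the $i$-th segment, stays inside $\mathscr{S}_\delta$: along that segment the frozen particles are always earlier in time than the moving ones, so displacing a moving particle's position by at most the segment length $s_{i+1}-s_i$ — its maximal propagation distance there — is absorbed by the slack between the space-like separations in the target configuration and those required during the segment, the strict inequalities in \eqref{eq:defSdelta} providing the remaining room.

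On the diagonal segment the evolution equation reads $i\,\tfrac{d}{dt}\psi(t,\ldots,t) = \mathcal{H}_{\{1,\ldots,N\}}(t)\,\psi(t,\ldots,t)$ and holds for \emph{all} spatial configurations, so — using the mapping properties of the single-time evolution and the domain inclusions from the proof of Theorem~\ref{thm:mainthm} — it is an identity in $\mathscr{H}$; the energy identity $\tfrac{d}{dt}\|\phi(t,\ldots,t)\|^2 = 2\,\mathrm{Im}\,\langle\phi(t,\ldots,t),\mathcal{H}_{\{1,\ldots,N\}}(t)\phi(t,\ldots,t)\rangle = 0$, valid by self-adjointness, gives $\phi(s_1,\ldots,s_1) = 0$ in $\mathscr{H}$. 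On each subsequent segment only the times in $A_i$ move and the system reduces to the single-time equation $\tfrac{\partial}{\partial t_{A_i}}\psi = \mathcal{H}_{A_i}(t_{A_i})\psi$, which however holds only at configurations in $\mathscr{S}_\delta$ and hence \emph{not} on a full-measure subset of the relevant time slice — this is the main obstacle. It is handled by the standard domain-of-dependence estimate for the Dirac-type operator $\mathcal{H}_{A_i}(t)$: its field part $\varphi_j(t)$ acts by (Fock-operator-valued) multiplication in the electron position $\mathbf{x}_j$, contributing neither to the propagation speed nor, after integration by parts, to a boundary term, while its symmetry annihilates the bulk term. Running this estimate down the backward light-cones from the previous paragraph — all of which remain in $\mathscr{S}_\delta$ — propagates the vanishing of $\phi$ from the end of one segment to the end of the next; since $\phi$ vanishes on the entire slice at the end of the diagonal segment, an induction over the remaining $L-1$ segments shows $\phi = 0$ on all of $\{\mathbf{y} : (t_1,\mathbf{y}_1,\ldots,t_N,\mathbf{y}_N)\in\mathscr{S}_\delta\}$, and in particular at the target configuration.

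Two technical points will require care. First, the domain-of-dependence estimates are $L^2$-statements, whereas $\mathscr{S}_\delta$ pins certain spatial coordinates inside the cones; here one exploits that the pointwise evaluations demanded by Definition~\ref{def:solution}(ii) furnish a spatially continuous representative, so that ``$\phi = 0$ almost everywhere on an open $\mathscr{S}_\delta$-slice'' can be upgraded to ``everywhere'' and fed into the next step. Second, to apply the energy identities one needs the solution slices to lie in the domains of the relevant self-adjoint Hamiltonians and to possess enough local spatial regularity for the integration by parts; both should follow from the solution sense together with the a priori estimates developed for Theorem~\ref{thm:mainthm}.
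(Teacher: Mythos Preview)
Your strategy is essentially the paper's own: reduce to $\phi=\psi_1-\psi_2$ with zero initial data, use the equal-time equation (which holds on all of $\R^{3N}$) together with self-adjointness to get $\phi=0$ on the diagonal, and then peel off one distinct time at a time by a domain-of-dependence argument, the key geometric check being that the relevant backward light-cones stay inside $\mathscr{S}_\delta$; the paper organizes this as a backward induction on the number $L$ of distinct time values (reducing $L+1$ to $L$ by pulling the \emph{largest} block back), whereas you run the same iteration forward, but the blocks moved and the cone estimates coincide step for step. The two technical points you flag are exactly what the paper absorbs into its single-time machinery: the first is covered by Lemma~\ref{thm:lemmacausal} (finite propagation speed/domain of dependence for each $U_A$), and the second by invoking the uniqueness part of Theorem~\ref{thm:existsU} rather than redoing an energy estimate from scratch---so you can replace your ``should follow'' by direct citations to those results.
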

To illustrate that our model is indeed interacting, we provide a rigoros
version of Eq.\ \eqref{eq:ehrenfestdelta} for the case of our model, in other words, the Ehrenfest equation for the scalar field operator.

\begin{theorem}\label{thm:interaction} 
    For every $t \in \R$ and $\mathbf{x} \in \R^3$, let us abbreviate the
    solution to given initial values $\psi^0 \in \mathscr{D}$ at equal times as
    $\psi^t:=U_{\{1,...,N\}}(t,0)\psi^0$ and
    $\mathcal{H}^t:=\mathcal{H}_{\{1,...,N\}}(t)$ and write
    $\varphi(t,\mathbf{x})$ for the field operator acting as 
    \begin{equation}
        \label{eq:definitionoffieldx} \varphi(t,\mathbf{x}) \psi := \int
        d^3\mathbf{k} ~ \left[ \left(
        \frac{\hat{\rho}(\mathbf{k})}{\sqrt{\omega(\mathbf{k})}} e^{-i
\omega(\mathbf{k}) t} e^{i \mathbf{k} \cdot \mathbf{x}}  a(\mathbf{k}) +
\frac{\hat{\rho}^\dagger(\mathbf{k})}{\sqrt{\omega(\mathbf{k})}}  e^{i
\omega(\mathbf{k}) t} e^{-i \mathbf{k} \cdot \mathbf{x}} a^\dagger(\mathbf{k})
\right) \psi \right].  
\end{equation} 
Then, the following equation holds:
\begin{equation} \label{eq:interactioninourmodel} \square \left\langle \psi^t,
    \varphi(t,\mathbf{x}) \psi^t \right\rangle = - \sum_{k=1}^N \left\langle
\psi^t, \rho ** \delta(\hat{\mathbf{x}}_k - \mathbf{x}) \psi^t \right\rangle,
\end{equation} 
where $\square := \partial_t^2 - \triangle_{\mathbf{x}}$, and
the double convolution defined as in \eqref{eq:doubleconvolution} is here
understood as a shorthand notation for 
\begin{equation} 
    \begin{split}
\label{eq:deltastarstar} \rho ** \delta(\hat{\mathbf{x}}_k - \mathbf{x}) & =
\int d^3\mathbf{y}_1 \int d^3\mathbf{y}_2 \ \rho(\mathbf{y}_1)
\rho(\mathbf{y}_2)
\delta(\hat{\mathbf{x}}_k-\mathbf{y}_1-(\mathbf{x}-\mathbf{y}_2)) .  \\ & =
\int d^3\mathbf{y}_1 \ \rho(\mathbf{y}_1) \rho(\mathbf{x} - \hat{\mathbf{x}}_k
+ \mathbf{y}_1).  \end{split} 
\end{equation} 
\end{theorem}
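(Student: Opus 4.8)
The plan is to compute $\square \langle \psi^t, \varphi(t,\mathbf{x})\psi^t\rangle$ by moving the space-time derivatives inside the inner product and using the equal-time evolution equation $i\partial_t \psi^t = \mathcal{H}^t \psi^t$, which is the $L=1$ case of Definition~\ref{def:solution} and is provided by Theorem~\ref{thm:mainthm}. First I would justify, using the regularity $\psi^t \in \mathscr{D}$ established in Theorem~\ref{thm:mainthm} together with the boundedness of $\varphi(t,\mathbf{x})$-type operators on vectors controlled by powers of $\mathcal{H}_f$ (so that all the operators $\mathcal{H}^t$, $\partial_t\varphi(t,\mathbf{x})$, $\partial_t^2\varphi(t,\mathbf{x})$, $\triangle_\mathbf{x}\varphi(t,\mathbf{x})$ applied to $\psi^t$ stay in $\mathscr{H}$ and depend continuously on the parameters), that we may differentiate under the integral sign and write
\begin{equation}
\partial_t \langle \psi^t, \varphi(t,\mathbf{x})\psi^t\rangle
= \langle \psi^t, \big(\partial_t\varphi(t,\mathbf{x}) + i[\mathcal{H}^t,\varphi(t,\mathbf{x})]\big)\psi^t\rangle,
\end{equation}
where the commutator is understood as a sesquilinear form on $\mathscr{D}$. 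Since $\varphi(t,\mathbf{x})$ is a multiplication operator in the Dirac variables, it commutes with the free Dirac part $\sum_j\mathcal{H}^0_j$ up to the gradient terms, and the only surviving contributions to $[\mathcal{H}^t,\varphi(t,\mathbf{x})]$ come from the kinetic terms $-i\gamma^0_j\boldsymbol{\gamma}_j\cdot\boldsymbol{\nabla}_j$ acting on the $\mathbf{x}_j$-dependence of $\varphi(t,\mathbf{x})$ (there is none, as $\varphi(t,\mathbf{x})$ depends on the external parameter $\mathbf{x}$, not on $\hat{\mathbf{x}}_j$) and from $[\varphi_j(t_j),\varphi(t,\mathbf{x})]$, which is a c-number given by the smeared Pauli--Jordan function.

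Next I would iterate: apply $\partial_t$ once more and collect the terms. The key structural fact is that at equal times the field commutators reduce, via the canonical commutation relations \eqref{eq:CCR}--\eqref{eq:drr} smeared with $\rho$, to
\begin{equation}
[\varphi_j(t),\varphi(t,\mathbf{x})] = 0, \qquad [\dot\varphi_j(t),\varphi(t,\mathbf{x})] = -i\,(\rho ** \delta)(\hat{\mathbf{x}}_j - \mathbf{x}),
\end{equation}
where $\dot\varphi_j$ denotes the $t$-derivative of $\varphi_j$; the second identity is exactly the computation behind \eqref{eq:deltastarstar}, obtained by doing the $\mathbf{k}$-integral of $\hat\rho(\mathbf{k})\hat\rho^\dagger(\mathbf{k})e^{i\mathbf{k}\cdot(\hat{\mathbf{x}}_j-\mathbf{x})}$, which produces $(2\pi)^3$ times the autocorrelation of $\rho$, i.e.\ the double convolution. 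Meanwhile $\square$ acting on the explicit plane-wave kernel $e^{\pm i\omega(\mathbf{k})t}e^{\pm i\mathbf{k}\cdot\mathbf{x}}/\sqrt{\omega(\mathbf{k})}$ gives a factor $-(\omega(\mathbf{k})^2 - \mathbf{k}^2) = -\mu^2$, so the ``free'' part of $\square\langle\psi^t,\varphi\psi^t\rangle$ is $-\mu^2\langle\psi^t,\varphi(t,\mathbf{x})\psi^t\rangle$ if $\mu>0$ and vanishes if $\mu=0$; I would either carry the $\mu^2$ term along (matching a Klein--Gordon rather than wave operator) or note that the statement as written takes $\mu = 0$ in $\square$. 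The interaction source then emerges from the double commutator $[\mathcal{H}^t,[\mathcal{H}^t,\varphi(t,\mathbf{x})]]$: the first bracket is linear in $\dot\varphi_j$-type terms (c-number times operator), the second bracket hits the remaining $\varphi_j(t)$ in $\mathcal{H}^t$ and yields, summed over $j$, exactly $-\sum_k (\rho**\delta)(\hat{\mathbf{x}}_k-\mathbf{x})$.

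The main obstacle I expect is the analytic bookkeeping, not the algebra: one must justify that $t\mapsto \psi^t$ is twice differentiable as an $\mathscr{H}$-valued map with $\partial_t^2\psi^t = -(\mathcal{H}^t)^2\psi^t - i(\partial_t\mathcal{H}^t)\psi^t \in \mathscr{H}$, that $\varphi(t,\mathbf{x})$, $\partial_t\varphi(t,\mathbf{x})$ and $\triangle_{\mathbf{x}}\varphi(t,\mathbf{x})$ map $\psi^t$ into $\mathscr{H}$ with the requisite continuity in $(t,\mathbf{x})$, and that all the exchanges of $\partial_t$, $\triangle_{\mathbf{x}}$ with the $\mathbf{k}$-integral defining $\varphi$ and with the inner product are licit. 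These are all consequences of the standard Fock-space estimates $\|\varphi(t,\mathbf{x})(\mathcal{H}_f+1)^{-1/2}\| \leq C\|\hat\rho/\sqrt{\omega}\|_{L^2}$ (and analogues with extra factors of $\mathbf{k}$ or $\omega$ for the derivatives, which stay finite because $\hat\rho$ is Schwartz) together with the invariance $\psi^t \in \mathscr{D} \subset \dom(\mathcal{H}_f^\infty)$ from Theorem~\ref{thm:mainthm}; I would state these as a lemma and then the theorem follows by the two differentiations described above. A minor subtlety is that $[\mathcal{H}^t,\varphi(t,\mathbf{x})]$ is a priori only a form on $\mathscr{D}$, so I would phrase the intermediate identities as statements about the scalar quantity $\langle\psi^t,\cdot\,\psi^t\rangle$ throughout rather than as operator identities, which is harmless since that scalar is all the theorem asserts.
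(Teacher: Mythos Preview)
Your overall approach is the same as the paper's: differentiate $\langle\psi^t,\varphi(t,\mathbf{x})\psi^t\rangle$ twice in $t$ using $i\partial_t\psi^t=\mathcal{H}^t\psi^t$, identify $\ddot\varphi(t,\mathbf{x})=(\triangle_{\mathbf{x}}-\mu^2)\varphi(t,\mathbf{x})$ from the explicit plane-wave kernel, and extract the source from an equal-time commutator. Your remarks on the $\mu^2$ term and on the analytic bookkeeping (Fock-space bounds plus $\psi^t\in\mathscr{D}$) are also in line with what the paper implicitly uses.

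There is, however, a genuine confusion in the step where you locate the source. You correctly record that $[\varphi_j(t),\varphi(t,\mathbf{x})]=0$ at equal times, hence $[\mathcal{H}^t,\varphi(t,\mathbf{x})]=0$. But then you say the source ``emerges from the double commutator $[\mathcal{H}^t,[\mathcal{H}^t,\varphi(t,\mathbf{x})]]$''; this object is identically zero by what you just established, so it cannot produce anything. The formula $\ddot O_H=-[H,[H,O_H]]$ holds in the Heisenberg picture with a time-\emph{independent} $H$, not here, where $\varphi(t,\mathbf{x})$ carries the free-field time dependence explicitly and $\mathcal{H}^t$ does not contain $\mathcal{H}_f$. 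In this interaction picture the two differentiations give
\[
\partial_t\langle\psi^t,\varphi\psi^t\rangle=\langle\psi^t,\dot\varphi\psi^t\rangle,\qquad
\partial_t^2\langle\psi^t,\varphi\psi^t\rangle=\langle\psi^t,\ddot\varphi\psi^t\rangle+i\langle\psi^t,[\mathcal{H}^t,\dot\varphi]\psi^t\rangle,
\]
and the source is the \emph{single} commutator $i[\mathcal{H}^t,\dot\varphi(t,\mathbf{x})]=i\sum_k[\varphi_k(t),\dot\varphi(t,\mathbf{x})]$. This is the paper's route. Note also that the relevant equal-time CCR is $[\varphi_k(t),\dot\varphi(t,\mathbf{x})]=i\,(\rho**\delta)(\hat{\mathbf{x}}_k-\mathbf{x})$, with the dot on the external-$\mathbf{x}$ field, not on $\varphi_k$; your displayed identity $[\dot\varphi_j(t),\varphi(t,\mathbf{x})]=-i(\rho**\delta)(\hat{\mathbf{x}}_j-\mathbf{x})$ is consistent with it (by antisymmetry and evenness of $\rho**\delta$), but it is the former that actually appears. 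Once you replace the double-commutator sentence by this single-commutator computation, your plan coincides with the paper's proof.
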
 %

We observe that the Ehrenfest equation \eqref{eq:interactioninourmodel} for
the scalar fiel features a ``source term'' on the right hand side. It consists
of the $N$ electrons as sources whose point-like nature is smeared out by the
form factors $\rho$ comprising the ultraviolet
cut-off. The two occurrences of $\rho$ in the double convolution $\rho **
\delta$ arise like this: In the computation, the source term is introduced by
means of the commutation relation \eqref{eq:drr}. The latter features two
occurrences of $\varphi$ whereas each $\varphi$ bares one $\rho$ in its definition in
\eqref{eq:definitionoffieldx}.\\

The remaining section of the paper provides the proofs of the above theorems.
It is divided in section~\ref{sec:theconcept}, which explains the strategy of
proof regarding existence of solutions, section~\ref{sec:proofsec1}, which
collects necessary results about the single-time evolution operators,
section~\ref{sec:proofsec2}, which constructs the multi-time evolution and
provides the proof of Theorem~\ref{thm:mainthm}, section~\ref{sec:uniqueness},
which asserts the uniqueness of solutions, i.e., Theorem~\ref{thm:uniqueness},
and finally, section~\ref{sec:interactionofdfp}, which carries out the 
computation for the proof of Theorem~\ref{thm:interaction}.

\section{Proofs} \label{sec:theproofs}

\subsection{Strategy of proof for existence of solutions}
\label{sec:theconcept}

Before treating the general case in the following sections, it is helpful to
explain our strategy of proof in the simplest case of $N=2$ as there we can
easily make the index partitions fully explicit and do not obstruct ideas in
the compact partitioning notation introduced above. For the treatment of the
general case, however, the compact notation will prove very helpful to tackle 
the additional difficulties.\\

In the case of $N=2$, we are looking for a pointwise evaluable solution $\psi:
\R^2 \to \mathscr{H}$ to the system
\begin{equation} \label{eq:syys} \begin{split}
 & \left.  \begin{array}{r}
\Big( i \partial_{t_1} \psi(t_1,t_2)\Big)(\mathbf{x}_1,\mathbf{x}_2)  = \Big( \mathcal{H}_1(t_1) \psi(t_1,t_2)\Big)(\mathbf{x}_1,\mathbf{x}_2)
\\ \Big( i \partial_{t_2} \psi(t_1,t_2)\Big)(\mathbf{x}_1,\mathbf{x}_2)  = \Big( \mathcal{H}_2(t_2) \psi(t_1,t_2)\Big)(\mathbf{x}_1,\mathbf{x}_2)
\end{array} \right\rbrace   \ \text{if} \ \| \mathbf{x}_1 - \mathbf{x}_2\| > \delta + |t_1-t_2|, 
\\ & \qquad \  \Big( i \partial_{t} \psi(t,t)\Big)(\mathbf{x}_1,\mathbf{x}_2)   \ =  \Big( \mathcal{H}_{\{1,2\}}(t) \psi(t,t) \Big)(\mathbf{x}_1,\mathbf{x}_2)    \hspace{0.36cm} \text{if} \ t_1=t_2=t, 
\end{split} \end{equation} 
where $\mathcal{H}_{\{1,2\}} = \mathcal{H}_1 + \mathcal{H}_2$. Note that there
is a little bit of redundancy in this system, since the second case is implied
by the first if $t_1=t_2$ and $ \| \mathbf{x}_1 - \mathbf{x}_2\| > \delta +
|t_1-t_2|$. The relevance of the second case comes from the points where the
times are equal, but the particles have smaller distance than $\delta$, i.e.\
the line in figure \ref{fig:Sdelta}.  

The first step is to show that evolution operators $U_{\{1\}}, U_{\{2\}},
U_{\{1,2\}}$, one for each of the single equations in \eqref{eq:syys}, exist.
These evolutions satisfy the usual properties of two-parameter propagators and,
for all $\psi$ in a suitable domain, generate a time evolution fulfilling
\begin{equation}
i \frac{\partial}{\partial t} U_A (t,s) \psi =  \mathcal{H}_A(t) U_A(t,s) \psi, \quad A \in \{ \{ 1 \}, \{ 2 \}, \{ 1,2\} \}. 
\end{equation}
An essential property of $U_A$ that we will need is that it makes the support
of a wave function grow only within its future (or
backwards) lightcone, as it is common for Dirac propagators. A further necessary
ingredient that has to be proven is the invariance of smooth functions under
the time evolutions.  This will be established by commutator theorems following
Huang \cite{huang}.

\begin{figure}[h] 
\begin{center} \includegraphics[scale=1]{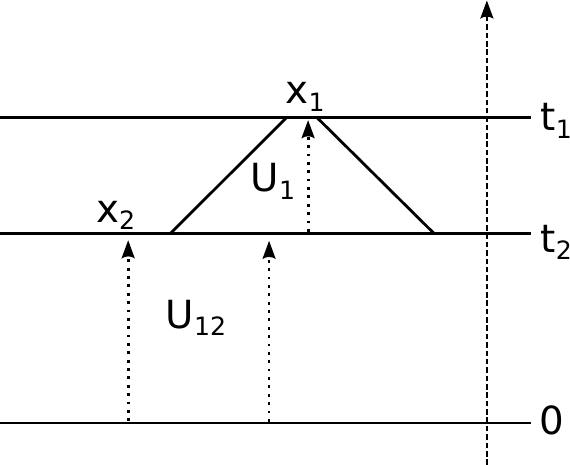}
\caption{Depiction of the multi-time evolution. First, the initial values are evolved from time $0$ to $t_2$ with the common propagator $U_{\{1,2\}}$, then only the degrees of freedom of particle $1$ are brought to time $t_1$ by applying $U_1$. This works consistently because $x_2$ is outside of the backward lightcone of $x_1$ with an additional distance of $\delta$, as sketched here.} \label{fig:Us}
\end{center} \end{figure} 

In the second step, a candidate for the solution can directly be constructed
with the help of the evolution operators $U_A$. Given smooth initial values
$\psi^0$ at $t_1=t_2=0$, we define
\begin{equation}
\psi (t_1, t_2) =  U_{\{1\}}(t_1,t_2)U_{\{1,2\}}(t_2,0)\psi^0.
\end{equation}
The idea is: First evolve both particles simultaneously up to time $t_2$ and
then only evolve the first particle to $t_1$. If more times are added, we need
to order them increasingly such that we do not ``move back and forth'' in the
time coordinates.  It is necessary, as mentioned above, to prove that the $U_A$
operators keep functions sufficiently regular to be able to define $\psi$ in a
pointwise sense and obtain a differentiable function. 

By definition, $i \partial_{t_1} \psi(t_1,t_2)  = \mathcal{H}_1(t_1)
\psi(t_1,t_2)$ holds. If both times are equal, the equation $i \partial_{t}
\psi(t,t)  =  \mathcal{H}_{\{1,2\}}(t) \psi(t,t)$ is also fulfilled. For the
derivative with respect to $t_2$, one has
\begin{equation}
\Big( i \partial_{t_2} \psi (t_1, t_2)\Big)(\mathbf{x}_1, \mathbf{x}_2) = \left( U_{\{1\}}(t_1,t_2) \mathcal{H}_2(t_2) U_{\{1,2\}}(t_2,0)\psi^0 \right) (\mathbf{x}_1, \mathbf{x}_2).
\end{equation}
To show that $\psi$ solves the multi-time equations, $U_1$ and $\mathcal{H}_2$ have to commute on the configurations with minimal space-like distance $\delta$. By taking another derivative, and after treating some difficulties that originate in the unboundedness of $\mathcal{H}_2(t_2)$, we will be able to reduce this to the consistency condition
\begin{equation}
\left( \left[ \mathcal{H}_1(t_1), \mathcal{H}_2(t_2) \right] \psi(t_1,t_2) \right) (\mathbf{x}_1,\mathbf{x}_2) = 0.
\end{equation}
The crucial ingredients in this step are that the commutators vanish at
configurations inside our domain of definition $\mathscr{S}_\delta$, and that the
supports grow at most with the speed of light.

\subsection{Dynamics of the single-time equations} \label{sec:proofsec1}

In this section, we consider a fixed set $A \subset \{ 1,...,N \}$ with the respective Hamiltonian $\mathcal{H}_A(t)$ defined in \eqref{eq:defHA} and construct a corresponding time evolution operator $U_A(t,s)$. This is contained in the following theorem, which uses the subsequent Lemmas \ref{thm:katorellichHtilde} and \ref{thm:lemmaUtilde}. The subsection continues with important properties of the operator $U_A(t,s)$, namely the spreading of data with at most the speed of light (Lemma \ref{thm:lemmacausal}) and the invariance of certain smooth functions (Lemma \ref{thm:invariance}, Corollary \ref{thm:corollaryDinvariant}), namely those in the important set $\mathscr{D}$ defined in \eqref{eq:defD}.
We denote the identity map by $\id$.

\begin{theorem} \label{thm:existsU}
There exists a unique two-parameter family of unitary operators $U_A (t,s): \mathscr{H} \to \mathscr{H}$ with the properties that for all $t,s,r \in \R$,
\begin{enumerate}
\item $U_A(t,t) = \id$,
\item $U_A(t,r) = U_A(t,s)U_A(s,r)$,
\item If $\psi \in \mathscr{D} $, then $\left. \frac{\partial }{ \partial t} U_A(t,s) \psi \right|_{t=s} = -i \mathcal{H}_A(s) \psi $.
\end{enumerate}
\end{theorem}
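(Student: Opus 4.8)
The plan is to construct $U_A(t,s)$ as the propagator generated by the time-dependent family $\mathcal{H}_A(t) = \sum_{j\in A}(\mathcal{H}^0_j + \varphi_j(t))$ by the standard Kato-type scheme, but routed through an auxiliary ``comparison'' operator to circumvent the third difficulty mentioned in the introduction, namely that $\dom(\mathcal{H}_A(t))$ is $t$-dependent (or at least unknown). First I would fix a convenient reference operator, the natural candidate being $\widetilde{\mathcal{H}} := \sum_{j\in A}\mathcal{H}^0_j + \mathcal{H}_f$ (or a suitable power/shift of it), which is self-adjoint, $t$-independent, has a known domain, and dominates the interaction; this is presumably the content of the Lemma~\ref{thm:katorellichHtilde} referred to in the text. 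The key analytic input is a Kato--Rellich-type estimate: the field operators $\varphi_j(t)$ are infinitesimally bounded relative to $\mathcal{H}_f^{1/2}$ (hence relative to $\widetilde{\mathcal{H}}$), uniformly in $t$, because the cut-off $\hat\rho/\sqrt{\omega}$ is square-integrable, by the usual $N_\tau$ estimates for creation/annihilation operators (Nelson's bounds). This gives that each $\mathcal{H}_A(t)$ is self-adjoint on a common domain $\dom(\widetilde{\mathcal{H}})$ and that the graph norms of $\mathcal{H}_A(t)$ and $\widetilde{\mathcal{H}}$ are equivalent uniformly in $t$ on compact time intervals.

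With a common domain secured, the construction of the unitary propagator $U_A(t,s)$ satisfying properties 1 and 2 follows from the classical theory for time-dependent Hamiltonians (Kato \cite{katopaper}, or Yosida \cite[ch.~XIV]{yosida}, or Reed--Simon II, Thm.~X.70): one needs (a) the common domain $D=\dom(\widetilde{\mathcal{H}})$ invariant under each $e^{-is\mathcal{H}_A(t)}$ — which one gets from the commutator/domain-stability estimates, presumably Lemma~\ref{thm:lemmaUtilde} — and (b) strong continuity, indeed norm-continuity in the appropriate sense, of $t\mapsto \mathcal{H}_A(t)(\widetilde{\mathcal{H}}+i)^{-1}$, which reduces to continuity of $t\mapsto \varphi_j(t)(\mathcal{H}_f+1)^{-1/2}$; the latter is clear from the explicit $e^{\pm i\omega t}$ dependence and dominated convergence, with $\hat\rho/\sqrt\omega\in L^2$ providing the integrable majorant. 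Uniqueness of the family with properties 1--3 is the standard Gronwall argument: given two such families, differentiate $U_A^{(1)}(t,s)U_A^{(2)}(s,r)\psi$ in $s$ on the dense set $\mathscr{D}$, using property 3, to see it is constant.

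Property 3, the generator identity on $\mathscr{D}$, I would obtain by first proving differentiability of $t\mapsto U_A(t,s)\psi$ for $\psi$ in the common domain $D$ (standard once (a) and (b) hold, giving $i\partial_t U_A(t,s)\psi = \mathcal{H}_A(t)U_A(t,s)\psi$ for $\psi\in D$), and then noting $\mathscr{D}\subset D$ and evaluating at $t=s$. One subtlety is that $\mathscr{D}$ as defined in \eqref{eq:defD} is a stronger space than $D$ (it demands $C^\infty_c$ in the $\mathbf{x}$-variables and all powers of $\mathcal{H}_f$), so strictly speaking property 3 only needs $\mathscr{D}\subset\dom(\mathcal{H}_A(s))$, which is immediate; the full invariance of $\mathscr{D}$ itself is deferred to Lemma~\ref{thm:invariance}/Corollary~\ref{thm:corollaryDinvariant} and is not needed here. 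I expect the main obstacle to be the verification of the domain-invariance condition (a) together with the associated uniform-in-$t$ commutator bound $\|[\mathcal{H}_A(t),(\widetilde{\mathcal{H}}+i)^{-1}]\|$-type estimate: this is exactly where the unboundedness of the free Dirac operators (no semiboundedness to exploit) bites, and it is why the paper invokes a commutator-theorem approach following Huang \cite{huang} rather than a form-method or a Dyson-series bound; controlling $[\,\mathcal{H}^0_j,\varphi_k(t)\,]$ and $[\,\varphi_j(t),\varphi_k(t')\,]$ on the reference domain, uniformly in the time parameters, is the technical heart of the matter.
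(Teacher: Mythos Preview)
Your proposal has a genuine gap at the very first analytic step. You want to show that each $\mathcal{H}_A(t)=\sum_{j\in A}\mathcal{H}^0_j+\varphi_j(t)$ is self-adjoint on the common domain $\dom(\widetilde{\mathcal{H}})$ with $\widetilde{\mathcal{H}}=\sum_{j\in A}\mathcal{H}^0_j+\mathcal{H}_f$, via Kato--Rellich. But $\mathcal{H}_A(t)-\widetilde{\mathcal{H}}=\sum_{j\in A}\varphi_j(t)-\mathcal{H}_f$, and $\mathcal{H}_f$ is \emph{not} relatively bounded with respect to $\widetilde{\mathcal{H}}$: since the Dirac part is unbounded below, one can choose (approximate) product states with Dirac energy $-\mu$ and field energy $\mu$, making $\widetilde{\mathcal{H}}\psi$ small while $\mathcal{H}_f\psi$ is large. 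Equivalently, $\varphi_j(t)$ is controlled only by $\mathcal{H}_f^{1/2}$, and $\mathcal{H}_A(t)$ contains no field term that would let you dominate $\mathcal{H}_f^{1/2}$; the free Dirac operators see only the particle variables. So neither Kato--Rellich nor the Kato/Yosida propagator theorems apply as you describe, and the paper's introduction says exactly this: $\dom(\mathcal{H}_A(t))$ is not known to be $t$-independent. Your guesses about Lemmas~\ref{thm:katorellichHtilde} and~\ref{thm:lemmaUtilde} are accordingly off: the former is a commutator-theorem (Faris--Lavine) essential self-adjointness result that does \emph{not} identify the domain, and the latter is not a domain-invariance statement.

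The paper's actual route is quite different and exploits the specific structure of the time-dependence. Since $\varphi_j(t)=e^{i\mathcal{H}_f t}\varphi_j(0)e^{-i\mathcal{H}_f t}$, the operator $\widetilde{\mathcal{H}}_{A,s}:=\mathcal{H}_f+\sum_{j\in A}\bigl(\mathcal{H}^0_j+\varphi_j(s)\bigr)$ is, for each fixed $s$, time-\emph{independent}; Lemma~\ref{thm:katorellichHtilde} shows it is essentially self-adjoint on $\mathscr{D}$ via the commutator theorem, so Stone's theorem yields a one-parameter unitary group $\widetilde{U}_{A,s}$. One then sets
\[
U_A(t,s):=e^{i\mathcal{H}_f(t-s)}\,\widetilde{U}_{A,s}(t-s).
\]
Property~1 is immediate; property~3 is a two-line computation using $e^{i\mathcal{H}_f(t-s)}\widetilde{\mathcal{H}}_{A,s}e^{-i\mathcal{H}_f(t-s)}=\widetilde{\mathcal{H}}_{A,t}$; and the cocycle property~2 follows from the intertwining relation $e^{i\mathcal{H}_f(r-s)}\widetilde{U}_{A,s}(\tau)e^{-i\mathcal{H}_f(r-s)}=\widetilde{U}_{A,r}(\tau)$, which is the actual content of Lemma~\ref{thm:lemmaUtilde}. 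Uniqueness is then the standard norm-conservation argument on $\mathscr{D}$ (here one \emph{does} use Corollary~\ref{thm:corollaryDinvariant}). In short, the paper avoids the time-dependent-domain problem entirely by passing to the interaction picture, reducing to a single time-independent Hamiltonian, rather than by trying to force a common domain for the family $\mathcal{H}_A(t)$.
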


\begin{remark}
The third property in the theorem is slightly weaker than in the common case of
time-independent Hamiltonians, where one can prove that the derivative exists
for all functions in the domain of the Hamiltonian. But in our case, since we
do not know whether $\dom(\mathcal{H}(t))$ is independent of $t$,
we have to reside to a common domain like $\mathscr{D}$ .
\end{remark}

\begin{proof}
We first prove the \textit{existence} of $U_A$. Consider for a fixed $s \in \R$ the time-independent Hamiltonian
\begin{equation} \label{eq:defineHtildealpha}
\widetilde{\mathcal{H}}_{A,s} := \mathcal{H}_f + \sum_{j \in A}\left( \mathcal{H}^0_j + \varphi_j(s) \right).
\end{equation}
It is proven below in Lemma \ref{thm:katorellichHtilde} that this Hamiltonian is essentially self-adjoint on the dense domain $\mathscr{D}$. Therefore, there is a strongly continuous unitary one-parameter group $\widetilde{U}_{A,s}$ with the property that if $\psi \in \dom (\widetilde{\mathcal{H}}_{A,s})$, then $ \frac{\partial }{ \partial t} \widetilde{U}_{A,s}(t) \psi = -i \widetilde{\mathcal{H}}_{A,s} \psi $. We can transform back to the Hamiltonian without tilde by setting
\begin{equation}
U_A (t,s) :=  e^{ i \mathcal{H}_f (t-s)} \widetilde{U}_{A,s}(t-s) \quad \forall t,s \in \R.
\end{equation}
We have to check that the such defined two-parameter family of unitary operators satis\-fies the properties listed in the theorem. 
\begin{enumerate}
\item For all $t \in \R$, $U_A(t,t) = \id$ follows immediately by  $\widetilde{U}_{A,s}(0)=\id$.
\item We compute for any $t,s,r \in \R$,
\begin{equation} \begin{split}
U_A(t,s) U_A(s,r) & =  e^{ i \mathcal{H}_f (t-s)} \widetilde{U}_{A,s}(t-s) e^{ i \mathcal{H}_f (s-r)} \widetilde{U}_{A,r}(s-r) 
\\ & = e^{ i \mathcal{H}_f (t-r)} \underbrace{e^{ i \mathcal{H}_f (r-s)} \widetilde{U}_{A,s}(t-s) e^{ i \mathcal{H}_f (s-r)}}_{=\widetilde{U}_{A,r}(t-s) \ \text{by Lemma \ref{thm:lemmaUtilde}, part 2}} \widetilde{U}_{A,r}(s-r) 
\\ & = e^{ i \mathcal{H}_f (t-r)} \widetilde{U}_{A,r}(t-s) \widetilde{U}_{A,r}(s-r)  
\\ & = U_A(t,r).
\end{split}\end{equation}
\item Let  $\psi \in \mathscr{D}$ and $t,s \in \R$, then also $\psi \in \dom(\mathcal{H}_f) \cap \dom( \widetilde{\mathcal{H}}_{A,s})$, and 
\begin{equation} \begin{split}
& \left. i \partial_t U_A (t,s) \psi(s) \right|_{t=s} 
\\ = & \left[ -\mathcal{H}_f e^{i \mathcal{H}_f (t-s)} \widetilde{U}_{A,s}(t-s) \psi(s) + e^{i \mathcal{H}_f (t-s)} \widetilde{\mathcal{H}}_{A,s} \widetilde{U}_{A,s}(t-s) \psi(s)  \right]_{t=s} 
\\ = & \left[ -\mathcal{H}_f  \psi(t) + e^{i \mathcal{H}_f (t-s)} \widetilde{\mathcal{H}}_{A,s} e^{-i \mathcal{H}_f (t-s)} \psi(t)  \right]_{t=s} 
\\ = & \left[ -\mathcal{H}_f  \psi(t) + \widetilde{\mathcal{H}}_{A,t}  \psi(t)  \right]_{t=s} = \mathcal{H}_A (s) \psi(s),
\end{split}\end{equation}
where we used in the last line the statement of Lemma \ref{thm:lemmaUtilde}, part 1. This establishes the third property and hence existence.
\end{enumerate}
We now prove \textit{uniqueness} of $U_A$. Assume there are two families $U_A(t,s)$ and $U'_A(t,s)$ with all required properties. Pick some $\psi^0 \in  \mathscr{D}$,  then $\psi(t) := U_A(t,0) \psi^0$ and  $\psi'(t) := U'_A(t,0) \psi^0$ are differentiable w.r.t to $t$ by the invariance of $\mathscr{D}$ (Corollary  \ref{thm:corollaryDinvariant}). By li\-nea\-rity, also $w(t):=\psi(t)-\psi'(t)$ satisfies the differential equation $i \partial_t w(t) = \mathcal{H}_A(t) w(t)$. Note that $w(0) = 0$. Because $\mathcal{H}_A(t)$ is self-adjoint for all times, the norm is preserved during time evolution:
\begin{equation}
 i \partial_t \left\langle w(t), w(t) \right\rangle = - \left\langle \mathcal{H}_A(t)w(t), w(t) \right\rangle +  \left\langle w(t), \mathcal{H}_A(t)w(t) \right\rangle = 0 \end{equation}  
Therefore, also $w(t)$ must have norm zero, so $\psi(t) = \psi'(t) ~ \forall t \in \R$, which proves that the families $U_A(t,s)$ and $U'_A(t,s)$ are in fact identical. 
\qed
\end{proof}

We have used the statements of the following two lemmas:

\begin{lemma} \label{thm:katorellichHtilde}
Let $t,s \in \R$.  The Hamiltonian $\mathcal{H}_A(t)$ and the operator $\widetilde{\mathcal{H}}_{A,s}$ defined in \eqref{eq:defineHtildealpha} are essentially self-adjoint on the domain $\mathscr{D}$ defined in \eqref{eq:defD}.

\end{lemma}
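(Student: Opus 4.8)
The plan is to prove essential self-adjointness via the Kato–Rellich theorem (or, more precisely, a commutator/Nelson-type argument since the perturbation is unbounded), by showing that the interaction terms $\varphi_j(s)$ are infinitesimally bounded with respect to a suitable self-adjoint reference operator. Since all the operators in question differ only by the addition of the free field Hamiltonian $\mathcal{H}_f$ and a finite sum of shifts $\mathcal{H}^0_j$, it suffices to treat one representative case; I would first reduce to proving essential self-adjointness of $\widetilde{\mathcal{H}}_{A,s} = \mathcal{H}_f + \sum_{j\in A}(\mathcal{H}^0_j + \varphi_j(s))$ on $\mathscr{D}$, and then recover $\mathcal{H}_A(t)$ by the same estimates (the extra term $\mathcal{H}_f$ only helps, since it is positive and self-adjoint on $\dom(\mathcal{H}_f)$).

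\medskip
The key steps, in order, would be as follows. First, identify the reference operator: take $\mathcal{H}_{\mathrm{ref}} := \mathcal{H}_f + \sum_{j\in A}\mathcal{H}^0_j$, or more conveniently the ``number-plus-field-energy'' operator $N_f + \mathcal{H}_f + 1$, which is essentially self-adjoint on $\mathscr{D}$ (the Dirac part $\mathcal{H}^0_j$ is already essentially self-adjoint on $C^\infty_c(\R^{3N},\C^K)\otimes\F$ by standard Fourier-multiplier arguments, and it commutes with the field operators in a benign way). Second, establish the basic relative bound on the field: by the standard $N_f^{1/2}$-estimates (as in Nelson's paper \cite{nelson}), for $\psi\in\mathscr{D}$ one has
\begin{equation}
\|\varphi_j(s)\psi\| \leq C\left(\|\hat\rho/\sqrt\omega\|_{L^2} + \|\hat\rho/\omega\|_{L^2}\right)\left\|(N_f+1)^{1/2}\psi\right\|,
\end{equation}
where both norms on the right are finite because $\hat\rho$ is Schwartz and $\omega(\mathbf{k})\geq\mu$ when $\mu>0$ (and the low-momentum behaviour of $1/\omega$ is still square-integrable in three dimensions even when $\mu=0$, since $\hat\rho$ is bounded near the origin). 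The crucial point is that $\|e^{\pm i\mathbf{k}\cdot\hat{\mathbf{x}}_j}\|=1$, so the position-operator factors do not spoil these bounds. Third, since $N_f+1 \leq C'(\mathcal{H}_f + 1)$ is false in general (the field is massless-friendly but $N_f$ is not $\mathcal{H}_f$-bounded when $\mu=0$), I would instead use a Nelson commutator theorem / Glimm–Jaffe–Nelson-type argument: show that $\mathcal{H}_f + N_f$ (or $\mathcal{H}_f$ together with $N_f$ as an auxiliary comparison operator) dominates the commutator $[\varphi_j(s), \widetilde{\mathcal{H}}_{A,s}]$ in form sense, so that Nelson's commutator theorem (in the version stated in \cite{huang}, which the paper already invokes) applies with comparison operator $\mathcal{H}_f + N_f + 1$. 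Concretely: (i) $\pm\varphi_j(s) \leq \varepsilon(\mathcal{H}_f+N_f) + C_\varepsilon$ as forms on $\mathscr{D}$; (ii) $|[\varphi_j(s), \mathcal{H}_f + N_f]| = |[\varphi_j(s),\mathcal{H}_f]| \lesssim \mathcal{H}_f + N_f + 1$, which follows because the commutator with $\mathcal{H}_f$ just inserts an extra factor of $\omega(\mathbf{k})$ into the integral, i.e. replaces $\hat\rho/\sqrt\omega$ by $\sqrt\omega\,\hat\rho$, still an $L^2$ function of $\mathbf{k}$ since $\hat\rho$ is Schwartz. These two facts feed directly into the commutator theorem and yield essential self-adjointness of $\widetilde{\mathcal{H}}_{A,s}$ on $\mathscr{D}$, and the identical estimates give it for $\mathcal{H}_A(t)$ as well.

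\medskip
\textbf{The main obstacle} I anticipate is the bookkeeping at $\mu=0$ (massless field): one must check that all the momentum-space norms $\|\omega^{a/2}\hat\rho\|_{L^2(\R^3)}$ that appear — for $a = -1$ in $\varphi_j$, $a=+1$ in the commutator $[\varphi_j,\mathcal{H}_f]$, and possibly $a=-1$ again in the $N_f$ comparison — are all finite, which they are because $\hat\rho\in\mathscr{S}(\R^3)$ is bounded and rapidly decaying, and $\omega^{-1}=|\mathbf{k}|^{-1}$ is locally $L^2$ in three dimensions; but it requires care to state the estimates uniformly and to verify that $\mathscr{D}$ (with the $\dom(\mathcal{H}_f^\infty)$ factor) is genuinely a core, i.e. invariant under the resolvents/semigroups involved and mapped into itself by $\mathcal{H}_f$ and by $\varphi_j(s)$. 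A secondary technical point is that $\hat\rho$ appears both as $\hat\rho(\mathbf{k})$ and $\hat\rho^\dagger(\mathbf{k})$ — since $\rho$ is real-valued, $\hat\rho^\dagger(\mathbf{k}) = \overline{\hat\rho(\mathbf{k})} = \hat\rho(-\mathbf{k})$ up to convention, so $\varphi_j(s)$ is symmetric on $\mathscr{D}$, which is a prerequisite for applying the commutator theorem; this symmetry should be recorded explicitly. Everything else — the self-adjointness of $\mathcal{H}^0_j$, the positivity of $\mathcal{H}_f$, the strong commutativity of $\mathcal{H}^0_j$ with the Fock-space operators since they act on different tensor factors up to the $e^{i\mathbf{k}\cdot\hat{\mathbf{x}}_j}$ twist — is routine and can be dispatched quickly.
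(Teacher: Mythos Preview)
Your overall strategy---a Nelson/Faris--Lavine commutator theorem, fed by the standard field-operator estimates---matches the paper's approach in spirit, but there is a genuine gap in your choice of comparison operator. The commutator theorem requires a positive self-adjoint $K\geq 1$ with $\|H\phi\|\leq c\|K\phi\|$ for the \emph{full} Hamiltonian $H$, not just the perturbation. Neither of your candidates works: $\mathcal{H}_f+\sum_{j\in A}\mathcal{H}^0_j$ is not bounded below (the free Dirac spectrum is $(-\infty,-m]\cup[m,\infty)$), so it cannot serve as a positive $K$; and $\mathcal{H}_f+N_f+1$ acts only on the Fock factor, so it cannot dominate the Dirac pieces $\mathcal{H}^0_j$ acting on the particle $L^2$ factor---the bound $\|\mathcal{H}^0_j\phi\|\leq c\|(\mathcal{H}_f+N_f+1)\phi\|$ simply fails.

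The paper's resolution (following Arai and Lieb--Loss) is to take $K_A:=\sum_{j\in A}(-\triangle_j)+\mathcal{H}_f+1$. The Laplacian is positive \emph{and} controls the Dirac operator via the algebraic identity $(\boldsymbol{\alpha}_j\cdot\nabla_j)^2=\triangle_j$, which gives $\|-i(\boldsymbol{\alpha}_j\cdot\nabla_j)\phi\|^2=\langle\phi,-\triangle_j\phi\rangle$ and hence $\|\mathcal{H}^0_j\phi\|\leq C\|K_A\phi\|$; the field bound $\|\varphi_j\phi\|\leq C(\|\mathcal{H}_f^{1/2}\phi\|+\|\phi\|)$ then goes through exactly as you wrote. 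For the commutator condition one must also handle $[\varphi_j,-\triangle_j]$ in addition to $[\varphi_j,\mathcal{H}_f]$; both produce field operators with $\hat\rho$ replaced by $k^a\hat\rho$ or $\omega\hat\rho$, still Schwartz, so the same $L^2$ estimates apply. Two further remarks: the paper never introduces $N_f$---the bounds go directly through $\mathcal{H}_f^{1/2}$---so your $\mu=0$ worry about $N_f$ versus $\mathcal{H}_f$ never arises; and Huang's theorem, which you invoke, is used in the paper only for domain invariance under the propagator, not for essential self-adjointness---the relevant tool here is the Faris--Lavine commutator theorem (Reed--Simon X.37).
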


The following proof is a generalization of an argument by Arai \cite{arai} and a similar argument given in \cite[app.\ C]{lieblossstability}.

\begin{proof}
Let $t,s \in \R$. We want to prove essential self-adjointness of $\widetilde{\mathcal{H}}_{A,s}$ using the commutator theorem \cite[theorem X.37]{reedsimon2}, nicely proven in \cite{farislavine}. It is easy to see that the same argumentation can then also be applied to $\mathcal{H}_A(t)$, which just has one term less. Consider
\begin{equation} \label{eq:Kcommutatortheorem}
K_A:= \sum_{j \in A} -\triangle_j + \mathcal{H}_f + 1. 
\end{equation}
This operator is essentially self-adjoint on $\mathscr{D}$ due to well-known results (see e.g.\ \cite{reedsimon2}) and certainly satisfies $K_A \geq 1$. Therefore, to apply the commutator theorem, we need to prove:
\begin{enumerate}
\item $\exists c \in \R$ such that $\forall \phi \in \mathscr{D}$, $\|(\widetilde{\mathcal{H}}_{A,s})\phi\| \leq c \| K_A \phi\|$.
\item $\exists d \in \R$ such that $\forall \phi \in \mathscr{D}$, $|\left\langle \widetilde{\mathcal{H}}_{A,s}\phi, K_\alpha \phi \right\rangle - \left\langle K_A \phi,  \widetilde{\mathcal{H}}_{A,s}\phi\right\rangle| \leq d \| K_A^{\sfrac{1}{2}} \phi\|$.
\end{enumerate}
\textit{Proof of 1.} We make use of the standard estimates (see e.g.\ \cite{nelson}) valid for all $\psi \in \dom (\mathcal{H}_f^{\sfrac{1}{2}})$ and $f \in L^2(\R^3,\C)$,
\begin{equation} \label{eq:boundsoncreationannihilation}
\left\|  \int d^3\mathbf{k} ~ f(\mathbf{k}) a(\mathbf{k}) \psi \right\| \leq \left\| f \right\|_2  \left\| \mathcal{H}_f^{\sfrac{1}{2}} \psi \right\|, \quad \left\|  \int d^3\mathbf{k} ~ f(\mathbf{k}) a^\dagger(\mathbf{k}) \psi \right\| \leq \left\| f \right\|_2  \left\|\mathcal{H}_f^{\sfrac{1}{2}} \psi \right\| + \left\| f \right\|_2 \| \psi \|.
\end{equation}
Now let $\phi \in \mathscr{D}$. We have by the triangle inequality
\begin{equation}
\left\| \widetilde{\mathcal{H}}_{A,s} \phi \right\| \leq \sum_{j \in A} \left( \left\| \mathcal{H}^0_j \phi \right\| + \left\| \varphi_j \phi \right\| \right) + \left\| \mathcal{H}_f \phi \right\|,
\end{equation}
so we need to bound each of the summands on the right hand side. $ \left\| \mathcal{H}_f \phi \right\| \leq  \left\| K_A \phi \right\|$ is clear since $1$ and $-\triangle$ are positive operators. Next we consider the free Dirac operator,
\begin{equation}
 \left\| \mathcal{H}^0_j \phi \right\| \leq m  \left\| \phi \right\| +  \left\| -i (\boldsymbol{\alpha}_j \cdot \nabla_j) \phi \right\|.
\end{equation}
The derivative term needs closer inspection,
\begin{equation}
 \left\| -i (\boldsymbol{\alpha}_j \cdot \nabla_j) \phi \right\|^2 = \left\langle \phi, -(\boldsymbol{\alpha}_j \cdot \nabla_j)^2 \phi \right\rangle = \left\langle \phi, - \triangle \phi \right\rangle,
\end{equation}
where only the Laplacian survives because the $\alpha$-matrices anticommute and the derivatives commute. Continuing with the Cauchy-Schwarz inequality and the elementary inequality $\sqrt{ab} \leq \frac{1}{2} (a + b) \ \forall a,b \geq 0$, we obtain
\begin{equation}
 \left\| -i (\boldsymbol{\alpha}_j \cdot \nabla_j) \phi \right\| \leq \sqrt{\left\langle \phi, - \triangle \phi \right\rangle} \leq \sqrt{ \left\| \phi \right\|  \left\| - \triangle \phi \right\|} \leq \frac{1}{2} \left(  \left\| \phi \right\| +  \left\| -\triangle \phi \right\| \right).
\end{equation}
Again, since all the summands in $K_A$ are positive operators, this directly leads to
\begin{equation}
 \left\| \mathcal{H}^0_j \phi \right\| \leq C  \left\| K_A \phi \right\|.
\end{equation}
In the whole article, $C$ denotes an arbitrary positive constant that may be different each time. For the interaction term, we see that the factor $\frac{\hat{\rho}(\mathbf{k})}{\sqrt{\omega(\mathbf{k})}}$ is in $L^2$ since $\hat{\rho}$ being a Schwartz function ensures rapid decay at infinity and since the singularity at $k=0$ (present only for $\mu=0$)  is integrable. This allows the use of \eqref{eq:boundsoncreationannihilation}, giving
\begin{equation} \label{eq:boundonphi}
 \left\| \varphi_j \phi \right\| \leq C \left( \left\| \mathcal{H}_f^{\sfrac{1}{2}} \phi \right\| + \left\| \phi \right\| \right),
\end{equation}
and with one more application of Cauchy-Schwarz,
\begin{equation}
\left\| \mathcal{H}_f^{\sfrac{1}{2}} \phi \right\| \leq \left\| \phi \right\|^{\sfrac{1}{2}}  \left\| \mathcal{H}_f \phi \right\|^{\sfrac{1}{2}} \leq \frac{1}{2} \left(\left\| \phi \right\| + \left\| \mathcal{H}_f \phi \right\| \right), 
\end{equation}
we are done with the proof that there is a constant $c$ (not depending on $\phi$) with  $\|(\widetilde{\mathcal{H}}_{A,s})\phi\| \leq c \| K_A \phi\|$.
\\ \textit{Proof of 2.} As in the previous step, we can bound the summands in $\widetilde{\mathcal{H}}_{A,s}$ one by one. We first observe that $\mathcal{H}_f$ and $\mathcal{H}^0_j$ commute with $K_A$. For the interaction term, we have
\begin{equation} \label{eq:Sabine}
\left[ \varphi_j, K_A \right] = \left[ \varphi_j, - \triangle_j \right] + \left[ \varphi_j, \mathcal{H}_f \right],
\end{equation}
so let us compute
\begin{equation} \label{eq:calculationphiderivative} \begin{split}
& \left\langle \varphi_j \phi, \triangle_j \phi \right\rangle - \left\langle \triangle_j \phi, \varphi_j \phi \right\rangle =  \sum_{a=1}^3 \left\langle \frac{\partial}{\partial x_j^a}  \phi,  \frac{\partial}{\partial x_j^a} \varphi_j \phi \right\rangle - \left\langle  \frac{\partial}{\partial x_j^a} \varphi_j \phi, \frac{\partial}{\partial x_j^a} \phi \right\rangle  
\\ & = ~  2i \sum_{a=1}^3 \Im \left\langle \frac{\partial}{\partial x_j^a}  \phi,  \frac{\partial}{\partial x_j^a} \varphi_j \phi \right\rangle 
\\ & = ~  2i \sum_{a=1}^3 \Im \left\langle \frac{\partial}{\partial x_j^a}  \phi, \int d^3\mathbf{k} ~ \left[ \left( \frac{-ik^a_j \hat{\rho}(\mathbf{k}) }{\sqrt{\omega(\mathbf{k})}} e^{-i \omega(\mathbf{k}) t} e^{i \mathbf{k} \cdot \hat{\mathbf{x}}_j}  a(\mathbf{k}) + c.c.\ \right) \phi \right]  \right\rangle, 
\end{split}\end{equation}
where the last equality holds since $\left\langle \frac{\partial}{\partial x_j^a}  \phi, \varphi_j \frac{\partial}{\partial x_j^a} \phi \right\rangle$ is real; and ``c.c'' denotes the hermitian conjugate of the preceding term.  Since $\hat{\rho}$ is a Schwartz function, also $-ik_j^a \hat{\rho}(\mathbf{k})$ is, so we get from the estimate \eqref{eq:boundsoncreationannihilation}
\begin{equation}
\left| \left\langle \varphi_j \phi, \triangle_j \phi \right\rangle - \left\langle \triangle_j \phi, \varphi_j \phi \right\rangle \right| \leq C \left( \left\| \mathcal{H}_f^{\sfrac{1}{2}} \phi \right\| + \left\| \phi \right\| \right)  \leq 2C \left\| K_A^{\sfrac{1}{2}} \phi \right\|.
\end{equation}
For the second term in \eqref{eq:Sabine}, we look at the commutator of $\varphi_j$ and $\mathcal{H}_f$. This amounts to a time derivative of $\varphi_j(t)$, which gives an expression like in the last line of \eqref{eq:calculationphiderivative}, but where the function $-ik^a_j \hat{\rho}(\mathbf{k})$ is replaced by $-i \omega(\mathbf{k}) \hat{\rho}(\mathbf{k})$. This is again a Schwartz function. Using estimate \eqref{eq:boundsoncreationannihilation} again for that function, we obtain
\begin{equation}
|\left\langle \varphi_j \phi, \mathcal{H}_f \phi \right\rangle - \left\langle \mathcal{H}_f \phi, \varphi_j \phi \right\rangle| \leq C  \left\| K_A^{\sfrac{1}{2}} \phi \right\|.
\end{equation}
This means we have shown that there is a constant $d$ (independent of $\phi$), such that \begin{equation}
\left| \left\langle \widetilde{\mathcal{H}}_{A,s}\phi, K_A \phi \right\rangle - \left\langle K_A \phi,  \widetilde{\mathcal{H}}_{A,s}\phi\right\rangle \right| \leq d \| K_A^{\sfrac{1}{2}} \phi\|.
\end{equation}
This is the second necessary ingredient for the application of the commutator theorem, which gives the statement of the lemma. 
\qed
\end{proof}

\begin{lemma} \label{thm:lemmaUtilde}
The self-adjoint Hamiltonian $\widetilde{\mathcal{H}}_{A,s}$  and the unitary group $\widetilde{U}_{A,s}$ it generates satisfy the following properties for all $r,s,t \in \R$:
\begin{enumerate}
\item $e^{i \mathcal{H}_f (t-s)} \widetilde{\mathcal{H}}^n_{A,s} e^{-i \mathcal{H}_f (t-s)} = \widetilde{\mathcal{H}}^n_{A,t} \quad \forall n \in \N$, whenever both sides are well-defined.
\item $e^{ i \mathcal{H}_f (r-s)} \widetilde{U}_{A,s}(t-s) e^{ i \mathcal{H}_f (s-r)} = \widetilde{U}_{A,r}(t-s)$.
\end{enumerate}
\end{lemma}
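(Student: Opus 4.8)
The plan is to deduce both parts from a single fact: conjugation by the free-field unitary $e^{i\mathcal{H}_f\tau}$ implements a time translation of the cut-off field, $e^{i\mathcal{H}_f\tau}\,\varphi_j(s)\,e^{-i\mathcal{H}_f\tau}=\varphi_j(s+\tau)$ on a suitable dense domain; everything else is bookkeeping.

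First I would establish this building block on the $n$-particle sectors, where $\mathcal{H}_f$ acts as multiplication by $\sum_{a=1}^n\omega(\mathbf{k}_a)$, so that $e^{i\mathcal{H}_f\tau}$ multiplies $\psi^{(n)}$ by $e^{i\tau\sum_a\omega(\mathbf{k}_a)}$. Inserting this into the definitions of $a(\mathbf{k})$ and $a^\dagger(\mathbf{k})$ gives $e^{i\mathcal{H}_f\tau}a(\mathbf{k})e^{-i\mathcal{H}_f\tau}=e^{-i\omega(\mathbf{k})\tau}a(\mathbf{k})$ and $e^{i\mathcal{H}_f\tau}a^\dagger(\mathbf{k})e^{-i\mathcal{H}_f\tau}=e^{i\omega(\mathbf{k})\tau}a^\dagger(\mathbf{k})$. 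Since $e^{i\mathcal{H}_f\tau}$ acts only on the Fock factor, it commutes with the position operators $\hat{\mathbf{x}}_j$, with the free Dirac operators $\mathcal{H}^0_j$, and with $\mathcal{H}_f$ itself; combining this with the phases $e^{\pm i\omega(\mathbf{k})s}$ appearing in \eqref{eq:definitionoffield} yields $e^{i\mathcal{H}_f\tau}\varphi_j(s)e^{-i\mathcal{H}_f\tau}=\varphi_j(s+\tau)$, hence, taking $\tau=t-s$,
\[
  e^{i\mathcal{H}_f(t-s)}\,\widetilde{\mathcal{H}}_{A,s}\,e^{-i\mathcal{H}_f(t-s)}\;=\;\mathcal{H}_f+\sum_{j\in A}\bigl(\mathcal{H}^0_j+\varphi_j(t)\bigr)\;=\;\widetilde{\mathcal{H}}_{A,t},
\]
which is part 1 for $n=1$. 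The general $n$ follows because conjugation by a fixed unitary is multiplicative on products: $e^{i\mathcal{H}_f\tau}\widetilde{\mathcal{H}}^n_{A,s}e^{-i\mathcal{H}_f\tau}=\bigl(e^{i\mathcal{H}_f\tau}\widetilde{\mathcal{H}}_{A,s}e^{-i\mathcal{H}_f\tau}\bigr)^n=\widetilde{\mathcal{H}}^n_{A,t}$. I would first verify these identities on $\mathscr{D}$ — which is invariant under $e^{i\mathcal{H}_f\tau}$, since this unitary does not touch the $C^\infty_c(\R^{3N},\C^K)$ factor and, commuting with $\mathcal{H}_f$, preserves $\dom(\mathcal{H}_f^\infty)$ — and then extend to the maximal set on which both sides make sense, using that $e^{i\mathcal{H}_f\tau}$ maps $\dom(\widetilde{\mathcal{H}}^n_{A,s})$ onto $\dom(\widetilde{\mathcal{H}}^n_{A,t})$.

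For part 2 I would invoke the elementary fact that for self-adjoint $H$ and unitary $V$ one has $Ve^{-iH\tau}V^{-1}=e^{-i(VHV^{-1})\tau}$. Choosing $V=e^{i\mathcal{H}_f(r-s)}$ and $H=\widetilde{\mathcal{H}}_{A,s}$, part 1 (with $t$ replaced by $r$) gives $V\widetilde{\mathcal{H}}_{A,s}V^{-1}=\widetilde{\mathcal{H}}_{A,r}$, whence
\[
  e^{i\mathcal{H}_f(r-s)}\,\widetilde{U}_{A,s}(t-s)\,e^{i\mathcal{H}_f(s-r)}\;=\;e^{-i\widetilde{\mathcal{H}}_{A,r}(t-s)}\;=\;\widetilde{U}_{A,r}(t-s).
\]
Alternatively, and in the spirit of the uniqueness argument already used for Theorem~\ref{thm:existsU}, one can set $W(\tau):=e^{i\mathcal{H}_f(r-s)}\widetilde{U}_{A,s}(\tau)e^{-i\mathcal{H}_f(r-s)}$, note $W(0)=\id$, and differentiate on $\psi\in\mathscr{D}$: since $\widetilde{U}_{A,s}(\tau)$ preserves $\dom(\widetilde{\mathcal{H}}_{A,s})$, part 1 gives $i\partial_\tau W(\tau)\psi=\widetilde{\mathcal{H}}_{A,r}W(\tau)\psi$; as $\widetilde{U}_{A,r}(\tau)\psi$ solves the same equation with the same initial datum, norm conservation of the difference (exactly as in the uniqueness part of Theorem~\ref{thm:existsU}) forces $W(\tau)=\widetilde{U}_{A,r}(\tau)$ on the dense set $\mathscr{D}$, hence on all of $\mathscr{H}$.

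I expect the only genuine subtlety to be the domain bookkeeping in part 1 for $n\geq2$: the algebraic manipulations are immediate on the core $\mathscr{D}$, but to state the operator identity ``whenever both sides are well-defined'' one must record explicitly that unitary conjugation by $e^{i\mathcal{H}_f\tau}$ intertwines the self-adjoint closures and all their powers. Everything else reduces to the standard statement that $e^{i\mathcal{H}_f t}$ implements time translation of the free field, here specialised to the smeared operators $\varphi_j$.
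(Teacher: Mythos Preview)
Your argument is correct. Part 1 is essentially the paper's own proof: it too computes the $n=1$ case by noting that $e^{i\mathcal{H}_f(t-s)}$ commutes with $\mathcal{H}_f$ and $\mathcal{H}^0_j$ and implements $\varphi_j(s)\mapsto\varphi_j(t)$, and then gets the general $n$ by inserting $e^{-i\mathcal{H}_f(t-s)}e^{i\mathcal{H}_f(t-s)}$ between the factors of $\widetilde{\mathcal{H}}^n_{A,s}$. You add a bit more detail on why the conjugation shifts the time argument of $\varphi_j$, which the paper takes for granted.

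For part 2 the paper proceeds differently: it invokes the analytic vector theorem to obtain a dense set $\mathscr{A}$ of analytic vectors for $\widetilde{\mathcal{H}}_{A,s}$, expands $\widetilde{U}_{A,s}(t-s)$ as a convergent power series on $e^{i\mathcal{H}_f(r-s)}(\mathscr{A})$, applies part 1 term by term to convert $\widetilde{\mathcal{H}}^n_{A,s}$ into $\widetilde{\mathcal{H}}^n_{A,r}$, recognises the result as $\widetilde{U}_{A,r}(t-s)$, and then extends by density. Your first route, $V e^{-iH\tau}V^{-1}=e^{-i(VHV^{-1})\tau}$ from the functional calculus, is more direct and avoids the detour through analytic vectors altogether; your second route via uniqueness of the generated one-parameter group is equally valid. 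The paper's argument has the virtue of actually using the $n\ge 2$ case of part 1 (which otherwise is not needed anywhere), whereas your spectral-calculus argument needs only $n=1$. Either way the conclusion is the same, and your approach is arguably cleaner.
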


\begin{proof} Let $r,s,t \in \R$.
\begin{enumerate}
\item We have for $n=1$
\begin{equation}
\begin{split}
e^{i \mathcal{H}_f (t-s)} \widetilde{\mathcal{H}}_{A,s} e^{-i \mathcal{H}_f (t-s)} & = e^{i \mathcal{H}_f (t-s)}\left( \mathcal{H}_f + \sum_{j \in A} \mathcal{H}^0_j +  \varphi_j(s) \right) e^{-i \mathcal{H}_f (t-s)} 
\\ & = \mathcal{H}_f + \sum_{j \in A} \mathcal{H}^0_j + e^{i \mathcal{H}_f (t-s)}  \varphi_j(s) e^{-i \mathcal{H}_f (t-s)} 
\\ & =  \widetilde{\mathcal{H}}_{A,t}.
\end{split}\end{equation}
The statement for arbitrary $n \in \N$ follows directly from the $n=1$ case, which can be seen by inserting the identity $\id = e^{-i \mathcal{H}_f (t-s)} e^{i \mathcal{H}_f (t-s)}$ between the factors of  $\widetilde{\mathcal{H}}_{A,s}$,
\begin{equation}
e^{i \mathcal{H}_f (t-s)} \widetilde{\mathcal{H}}^n_{A,s} e^{-i \mathcal{H}_f (t-s)} =  \prod_{k=1}^n e^{i \mathcal{H}_f (t-s)} \widetilde{\mathcal{H}}_{A,s} e^{-i \mathcal{H}_f (t-s)} = \widetilde{\mathcal{H}}^n_{A,t}.
\end{equation}
\item By the analytic vector theorem, the set $\mathscr{A}$ of analytic vectors for $\widetilde{\mathcal{H}}_{A,s}$ is dense. Hence its image under the unitary map $e^{i \mathcal{H}_f (r-s)}$ is also dense. Let $\psi \in e^{i \mathcal{H}_f (r-s)}\left( \mathscr{A} \right)$. We can write 
\begin{align}
e^{ i \mathcal{H}_f (r-s)} \widetilde{U}_{A,s}(t-s) e^{ i \mathcal{H}_f (s-r)}\psi & = e^{ i \mathcal{H}_f (r-s)} \sum_{n=0}^\infty \frac{i^n (t-s)^n}{n!} \widetilde{\mathcal{H}}_{A,s}^n e^{ i \mathcal{H}_f (s-r)} \psi \nonumber
\\ & = \sum_{n=0}^\infty \frac{i^n (t-s)^n}{n!}e^{ i \mathcal{H}_f (r-s)} \widetilde{\mathcal{H}}_{A,s}^n e^{ i \mathcal{H}_f (s-r)} \psi  \nonumber
\\ & = \sum_{n=0}^\infty \frac{i^n (t-s)^n}{n!} \widetilde{\mathcal{H}}_{A,r}^n  \psi , 
\end{align}
where we used part 1 of the lemma in the last step. The series converges, so $\psi$ is analytic for $\widetilde{\mathcal{H}}_{A,r}$, which proves
\begin{equation} \label{eq:eqindenseset}
e^{ i \mathcal{H}_f (r-s)} \widetilde{U}_{A,s}(t-s) e^{ i \mathcal{H}_f (s-r)}\psi  = \widetilde{U}_{A,r}(t-s) \psi, \quad \forall  \psi \in e^{i \mathcal{H}_f (r-s)}\left( \mathscr{A} \right).
\end{equation}
Equation \eqref{eq:eqindenseset} tells us that the bounded operators $e^{ i \mathcal{H}_f (r-s)} \widetilde{U}_{A,s}(t-s) e^{ i \mathcal{H}_f (s-r)}$ and $\widetilde{U}_{A,r}(t-s)$ agree on a dense set, which implies that they are equal. \qed
\end{enumerate} 
\end{proof}

The next lemma is about the causal structure of our equations. It uses the usual definition of addition of sets,
\begin{equation}
M + R := \{ m+r | m \in M, r \in R \}.
\end{equation}
In order to simplify notation, it is implied that vectors in $\R^{3N}$ and $\R^3$ can be added by just changing the respective $j$-th coordinate, e.g.\ $(\mathbf{x}_1, \mathbf{x}_2) + \mathbf{y}_2 \equiv (\mathbf{x}_1, \mathbf{x}_2 + \mathbf{y}_2)$.

\begin{lemma} \label{thm:lemmacausal}\noindent
\begin{enumerate}
\item The evolution operators $U_A$ do not propagate data faster than light, i.e.\ if for $R \subset \R^{3N}$ we have $\supp \psi \subset R$, then for all $t,s \in \R$,
\begin{equation} \label{eq:harry}
\supp \left(U_A(t,s)\psi \right) \subset R + \sum_{j \in A} B_{|t-s|}(\mathbf{x}_j).
\end{equation} 
\item Let $\psi$ be the solution of $i \partial_{t} \psi = \mathcal{H}_A(t) \psi$ with smooth initial values given as \\ $\psi(0,...,0) = \psi^0$. Then for all $t \in \R$, $\psi(t, \mathbf{x}_1, ..., t, \mathbf{x}_N)$ is uniquely determined by specifying initial conditions on $\sum_{j \in A} \overline{B}_{|t|}(\mathbf{x}_j)$.
\end{enumerate}
\end{lemma}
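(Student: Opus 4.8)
My plan is to derive both parts from the classical local energy estimate for first-order symmetric hyperbolic systems, exploiting that the interaction $\varphi_j(t)$ enters $\mathcal{H}_A(t)$ as a (Fock-operator-valued) \emph{multiplication} operator in $\mathbf{x}_j$ which is self-adjoint on $\F$. First I would reduce part 1 to $\psi\in\mathscr{D}$: $U_A(t,s)$ is unitary and the property ``$\supp\phi$ is contained in the closed set $F$'' is stable under $\mathscr{H}$-limits, so the general case follows by density. For $\psi\in\mathscr{D}$ the curve $\tau\mapsto\psi(\tau):=U_A(\tau,s)\psi$ stays in $\mathscr{D}$ by Corollary~\ref{thm:corollaryDinvariant}; hence every $\psi(\tau)$ is smooth and compactly supported in $\mathbf{x}$, lies in $\dom(\mathcal{H}_f^\infty)$ pointwise in $\mathbf{x}$, and (combining the cocycle identity of Theorem~\ref{thm:existsU} with its property~3 and the invariance of $\mathscr{D}$) solves $i\partial_\tau\psi(\tau)=\mathcal{H}_A(\tau)\psi(\tau)$ in $\mathscr{H}$. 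Moreover $\widetilde{\mathcal{H}}_{A,s}$, and hence $U_A(t,s)$, acts trivially on the factor $L^2(\R^{3},\C)$ belonging to each coordinate $\mathbf{x}_j$ with $j\notin A$, so support in those coordinates is untouched and it remains to bound the spread in the active coordinates.

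The core step is the energy estimate. Writing $\beta_j=\gamma^0_j$ and $\alpha_j^a=\gamma^0_j\gamma^a_j$ for the Hermitian Dirac matrices (so that $(\mathbf{n}\cdot\boldsymbol{\alpha}_j)^2=|\mathbf{n}|^2$ and $\mathcal{H}^0_j=-i\sum_a\alpha_j^a\partial_{x_j^a}+\beta_jm$), I would establish the pointwise continuity equation
\[
\partial_\tau\big\|\psi(\tau,\mathbf{x})\big\|_{\F^K}^2=2\,\Im\big\langle\psi(\tau,\mathbf{x}),\mathcal{H}_A(\tau)\psi(\tau,\mathbf{x})\big\rangle_{\F^K}=-\sum_{j\in A}\nabla_{\mathbf{x}_j}\!\cdot\mathbf{J}_j(\tau,\mathbf{x}),\qquad \mathbf{J}_j^a:=\big\langle\psi,\alpha_j^a\psi\big\rangle_{\F^K}.
\]
Here the contributions of $\beta_jm$ and of $\mathcal{H}_f$ vanish because those are self-adjoint on $\F^K$, and so does the contribution of $\varphi_j(\tau)$, which at fixed $(\tau,\mathbf{x}_j)$ is the self-adjoint field operator $\int d^3\mathbf{k}\,[h(\mathbf{k})a(\mathbf{k})+\overline{h(\mathbf{k})}a^\dagger(\mathbf{k})]$ with the $L^2$-function $h(\mathbf{k})=\omega(\mathbf{k})^{-1/2}\hat{\rho}(\mathbf{k})e^{-i\omega(\mathbf{k})\tau}e^{i\mathbf{k}\cdot\mathbf{x}_j}$; only the kinetic terms survive and produce $-\nabla_{\mathbf{x}_j}\cdot\mathbf{J}_j$, and $|\mathbf{n}\cdot\mathbf{J}_j|\le\|\psi\|_{\F^K}^2$ for every unit vector $\mathbf{n}$. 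Fixing $\mathbf{x}^*=(\mathbf{x}_j^*)_{j\in A}$ and $r>0$ and setting $E(\tau):=\int_{\Omega(\tau)}\|\psi(\tau,\mathbf{x})\|_{\F^K}^2\,d\mathbf{x}$ over the shrinking cone $\Omega(\tau):=\{\mathbf{x}:\|\mathbf{x}_j-\mathbf{x}_j^*\|\le r-|\tau-s|\ \forall j\in A\}$ (unconstrained in the remaining coordinates, where the integrals converge by compact support), I would differentiate via the Reynolds transport theorem: the divergence theorem turns $\int_\Omega\partial_\tau\|\psi\|^2$ into boundary fluxes that are dominated pointwise by $\|\psi\|_{\F^K}^2$, which are exactly cancelled by the unit-speed contraction of $\Omega(\tau)$. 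Hence $E'(\tau)\le0$ for $\tau\ge s$ (and $\ge0$ for $\tau\le s$, by the same computation reversed), so $\psi$ vanishing on $\Omega(s)$ forces $\psi(t,\cdot)$ to vanish on $\Omega(t)$; letting $r\downarrow|t-s|$ and recombining with the trivial behaviour in the coordinates $j\notin A$ yields \eqref{eq:harry} with closed balls, which is equivalent to the stated open-ball form since $\supp\psi$ is closed.

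Part~2 is the dual statement and I would obtain it from part~1 by linearity. If $\psi^0_1,\psi^0_2\in\mathscr{D}$ coincide on (a neighbourhood of) the set $S:=\sum_{j\in A}\overline{B}_{|t|}(\mathbf{x}_j)$, then $w^0:=\psi^0_1-\psi^0_2$ vanishes there, so part~1 gives $\supp(U_A(t,0)w^0)\subset\supp w^0+\sum_{j\in A}\overline{B}_{|t|}(\mathbf{x}_j)$; the point $(\mathbf{x}_1,\dots,\mathbf{x}_N)$ lies in the right-hand side only if $\supp w^0$ meets $S$, which is excluded, whence $(U_A(t,0)\psi^0_1)(\mathbf{x}_1,\dots,\mathbf{x}_N)=(U_A(t,0)\psi^0_2)(\mathbf{x}_1,\dots,\mathbf{x}_N)$. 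Thus the equal-time value $\psi(t,\mathbf{x}_1,\dots,t,\mathbf{x}_N)$ depends only on the restriction of $\psi^0$ to $S$. The hard part is the rigorous justification of the pointwise continuity equation and of differentiating $E(\tau)$ under the integral sign; this is precisely where the regularity delivered by the invariance of $\mathscr{D}$ (smoothness and compact support in $\mathbf{x}$, pointwise membership in $\dom(\mathcal{H}_f^\infty)\subset\dom(\varphi_j)$) is indispensable, since otherwise neither $\mathcal{H}_A(\tau)\psi$ nor the current $\mathbf{J}_j$ would be pointwise defined in $\mathbf{x}$.
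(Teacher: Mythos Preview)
Your proposal is correct and fleshes out precisely the alternative route the paper alludes to. The paper's own proof of part~1 is essentially a pair of citations: its primary reference is to the functional-analytic finite-propagation-speed argument of Stockmeyer--Zenk (a Chernoff-type argument carried out at the level of the unitary group), while it notes in parentheses that one may ``also adapt the arguments using current conservation \dots\ since the continuity equation holds for our model, as well.'' You have chosen the second route and written it out explicitly: identify the pointwise $\F^K$-continuity equation $\partial_\tau\|\psi\|^2=-\sum_{j\in A}\nabla_{\mathbf{x}_j}\!\cdot\mathbf{J}_j$, observe that the self-adjoint multiplication terms $\beta_jm$ and $\varphi_j(\tau)$ drop out, and run the standard shrinking-cone energy estimate. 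What the functional-analytic approach buys is that it sidesteps the regularity bookkeeping you rightly flag as ``the hard part'' (pointwise-in-$\mathbf{x}$ differentiability of $\tau\mapsto\psi(\tau,\mathbf{x})$ and justification of the Reynolds/divergence step on the product-of-balls domain); what your approach buys is a self-contained argument that makes the causal mechanism transparent and uses nothing beyond the invariance of $\mathscr{D}$. For part~2 your argument by linearity is exactly the paper's.

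One minor slip: $\mathcal{H}_f$ is not a summand of $\mathcal{H}_A(t)$ (only of $\widetilde{\mathcal{H}}_{A,s}$), so there is no ``contribution of $\mathcal{H}_f$'' to dismiss in the continuity equation; this is harmless since the remaining terms are handled correctly.
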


\begin{proof} \noindent
\begin{enumerate}
\item This lightcone property of the free Dirac equation is well-known (compare \cite[theorem 2.20]{Deckert2014}). The claim for our model is a direct generalization to the many-particle case of the functional analytic arguments in \cite[theorem 3.4]{zenkstockmeyer}. (Note that it is also feasible to adapt the arguments using current conservation in \cite[lemma 14]{multitimeNoPotentials} since the continuity equation holds for our model, as well.)
\item This follows directly from 1. since if $\psi$ and ${\psi}'$ are two solutions whose initial values $\psi^0$ and $\psi'^0$ agree on $\sum_{j \in A} \overline{B}_{|t|}(\mathbf{x}_j)$, then
\begin{equation}
\supp (\psi^0 - \psi'^0) \subset \R^{3N} \setminus \sum_{j \in A} \overline{B}_{|t|}(\mathbf{x}_j)
\end{equation}
implies by \eqref{eq:harry}
\begin{equation}
\supp (U_A(t,0) (\psi^0 -  \psi'^0)) \subset \R^{3N} \setminus \sum_{j \in A} \overline{B}_{|t|}(\mathbf{x}_j) + \sum_{j \in A} {B}_{|t|}(\mathbf{x}_j) =  \R^{3N} \setminus \{ (\mathbf{x}_1,...,\mathbf{x}_N) \},
\end{equation}
which is the claim. \qed
\end{enumerate}
\end{proof}

Another necessary information is which domains stay invariant under the time evolutions we have just constructed. The idea is to exploit a theorem by Huang \cite[thm.\ 2.3]{huang}, which we cite here adopted to our notation.

\begin{theorem} \textbf{(Huang).} \label{thm:huang}
Let $K$ be a positive self-adjoint operator and define $Z_j(t) = K^{j-1} \left[ \mathcal{H}_{A}(t),K \right] K^{-j}$. Suppose that $Z_k(t)$ is bounded with $ \| Z_k( \cdot ) \| \in L^1_{loc}(\R)$ for all $k \leq j$. Then $U_A(t,s) [ \dom(K^j) ] =  \dom(K^j)$.
\end{theorem}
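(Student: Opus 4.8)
Since this statement is quoted from Huang \cite{huang}, let me describe the route I would take to reconstruct its proof. The plan is to reduce the assertion to a single a priori estimate: for $\psi$ in a suitable dense core and all $t,s$ ranging over a bounded interval, $\|K^j U_A(t,s)\psi\|$ stays finite, with a bound governed by $\|K^j\psi\|$ and the $L^1_{loc}$ weight $g(t):=\sum_{l=1}^j\|Z_l(t)\|$. Once this is shown on a core, the closedness of $K^j$ extends it to all of $\dom(K^j)$, giving $U_A(t,s)[\dom(K^j)]\subseteq\dom(K^j)$ for all $t,s$; applying this inclusion with $s$ and $t$ interchanged and using the group law $U_A(t,s)U_A(s,t)=\id$ from Theorem~\ref{thm:existsU} upgrades it to the claimed equality.

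The estimate itself I would obtain by a Yosida-type regularisation. Introduce the bounded self-adjoint contractions $K_\epsilon:=K(1+\epsilon K)^{-1}$; they commute with $K$, converge strongly to $K$ on $\dom(K)$, and $K_\epsilon^j=K^j(1+\epsilon K)^{-j}$ is bounded, so $t\mapsto\|K_\epsilon^j U_A(t,s)\psi\|^2$ is differentiable for $\psi$ in a core on which $U_A$ is differentiable with values in $\dom(\mathcal{H}_A(t))$. Writing $\psi(t):=U_A(t,s)\psi$ and differentiating, the contribution in which $\mathcal{H}_A(t)$ acts ``diagonally'' drops out because $\mathcal{H}_A(t)$ is self-adjoint, so $\langle\chi,i\mathcal{H}_A(t)\chi\rangle$ is purely imaginary, leaving only a commutator term:
\begin{equation}
\frac{d}{dt}\|K_\epsilon^j\psi(t)\|^2 = 2\,\Re\big\langle K_\epsilon^j\psi(t),\, -i\,[K_\epsilon^j,\mathcal{H}_A(t)]\,\psi(t)\big\rangle \le 2\,\|K_\epsilon^j\psi(t)\|\,\big\|[K_\epsilon^j,\mathcal{H}_A(t)]\psi(t)\big\|.
\end{equation}
The algebraic heart of the argument is the telescoping identity $[K^j,\mathcal{H}_A(t)]=\sum_{l=0}^{j-1}K^l[K,\mathcal{H}_A(t)]K^{j-1-l}$, which, after inserting the definition of $Z_l(t)$, reads $[K^j,\mathcal{H}_A(t)]K^{-j}=-\sum_{l=1}^j Z_l(t)$ and hence $\|[K^j,\mathcal{H}_A(t)]\chi\|\le g(t)\,\|K^j\chi\|$. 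Passing from $K^j$ to $K_\epsilon^j$ produces, besides $[K^j,\mathcal{H}_A(t)]R_\epsilon^j$ with $R_\epsilon:=(1+\epsilon K)^{-1}$, only extra terms in which every additional factor is either a contraction $R_\epsilon$ commuting with $K$ or the operator $\epsilon K R_\epsilon$, which also has norm $\le 1$; collecting these and using the $Z_l$-identity again gives $\|[K_\epsilon^j,\mathcal{H}_A(t)]\psi(t)\|\le C\,g(t)\,\big(\|K_\epsilon^j\psi(t)\|+\|K^{j-1}\psi(t)\|\big)$ with $C$ and the bound uniform in $\epsilon$.

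Feeding this into the displayed inequality and invoking Gronwall's lemma with the $L^1_{loc}$ coefficient $g$ (the lower-order term $\|K^{j-1}\psi(t)\|$ being already controlled by the previous induction step) yields $\|K_\epsilon^j\psi(t)\|$ bounded uniformly in $\epsilon$ on compact time intervals; letting $\epsilon\to0$ and using that $K^j$ is closed, hence its graph is weakly closed, shows $\psi(t)\in\dom(K^j)$ with the same bound, and density plus closedness transfers this from the core to all of $\dom(K^j)$. The main obstacle is not the Gronwall step but making the operator-algebraic manipulations legitimate: one needs a dense set that is simultaneously a core for $K^j$ and invariant under the propagator with values in $\dom(\mathcal{H}_A(t))$, so that the differentiation of $\|K_\epsilon^j\psi(t)\|^2$ and the rearrangement of the a priori only formal commutators are justified. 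This is cleanest to organise as an induction on $j$, where the invariance of $\dom(K^{j-1})$ — requiring only the hypotheses on $Z_1,\dots,Z_{j-1}$ — supplies the finiteness of the lower-order quantities and thereby closes the induction; the uniform-in-$\epsilon$ bookkeeping of the regularised commutator is the one genuinely technical point, everything else being a standard energy estimate.
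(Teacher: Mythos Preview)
The paper does not prove this statement at all: it is quoted verbatim as ``a theorem by Huang \cite[thm.\ 2.3]{huang}, which we cite here adopted to our notation'' and then applied as a black box in Lemma~\ref{thm:invariance}. So there is no proof in the paper to compare against.

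That said, your reconstruction is a faithful outline of how such invariance results are proven (and is in the spirit of Huang's original argument): regularise $K$ by a bounded approximant commuting with $K$, differentiate $\|K_\epsilon^j U_A(t,s)\psi\|^2$, exploit self-adjointness of $\mathcal{H}_A(t)$ to reduce to a commutator, telescope $[K^j,\mathcal{H}_A(t)]$ to recognise the $Z_l$'s, close by Gronwall with the $L^1_{loc}$ weight, and pass to the limit using closedness of $K^j$. You correctly flag the genuine subtlety, namely the existence of a common core on which the propagator is differentiable and the commutator manipulations are legitimate, and the natural inductive organisation in $j$. Nothing in your sketch is wrong; it is simply a proof the paper chose to outsource.
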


We will use a family of comparison operators for $j \in \N$, abbreviating $\sum_{k=1}^N -\triangle_k =: -\triangle$,
\begin{equation}
K_n:=  (-\triangle)^n + (\mathcal{H}_f)^n + 1.
\end{equation} 
The operator $K_n$ resembles the $n-th$ power of the operator $K_A$  we defined in \eqref{eq:Kcommutatortheorem} for the commutator theorem. Its domain of self-adjointness is denoted by $\dom(K_n)$. 

\begin{lemma} \label{thm:invariance}
The family of operators $U_A(t,s)$ with $t,s \in \R$ leaves the set $\dom(K_n)$ invariant for all $n \in \N$.
\end{lemma}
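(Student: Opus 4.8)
The plan is to verify the hypotheses of Huang's theorem (Theorem~\ref{thm:huang}) for each comparison operator $K=K_n$, and then apply it directly. Fix $n\in\N$. The first step is to observe that $K_n = (-\triangle)^n + \mathcal{H}_f^n + 1$ is a positive self-adjoint operator and that $\mathscr{D}$ is a core for it, so that all commutator computations can be carried out on $\mathscr{D}$ and then extended by density. We must then show that for every $k\le n$ the operator $Z_k(t) := K_n^{k-1}[\mathcal{H}_A(t),K_n]K_n^{-k}$ extends to a bounded operator on $\mathscr{H}$ with $\|Z_k(\cdot)\|\in L^1_{\mathrm{loc}}(\R)$ (in fact we expect local boundedness in $t$, which is stronger). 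Note that here Huang's $j$ is our fixed $n$, so we need boundedness of $Z_k(t)$ for all $k\le n$ simultaneously; hence the estimates must be done for a general power.

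The core computation is the commutator $[\mathcal{H}_A(t),K_n]$. Since $\mathcal{H}_A(t)=\sum_{j\in A}(\mathcal{H}^0_j+\varphi_j(t))$, and since $\mathcal{H}^0_j$ commutes with $-\triangle$ (constant-coefficient differential operators commute) and with $\mathcal{H}_f$ (they act on different tensor factors), the free Dirac part contributes nothing: $[\mathcal{H}^0_j,K_n]=0$. Likewise $\mathcal{H}_f$ commutes with $-\triangle$ and with itself, so the only surviving contributions come from $[\varphi_j(t),(-\triangle)^n]$ and $[\varphi_j(t),\mathcal{H}_f^n]$. For these I would proceed exactly as in the proof of Lemma~\ref{thm:katorellichHtilde}: expanding $[\varphi_j(t),(-\triangle)^n]$ via the Leibniz-type telescoping $[\varphi_j,(-\triangle)^n]=\sum_{\ell=0}^{n-1}(-\triangle)^\ell[\varphi_j,-\triangle](-\triangle)^{n-1-\ell}$, each inner commutator $[\varphi_j,-\triangle]$ being a field operator whose form factor $\hat\rho(\mathbf k)/\sqrt{\omega(\mathbf k)}$ gets replaced by something like $(\mathbf k^2 + \mathbf k\cdot\nabla_j)\hat\rho(\mathbf k)/\sqrt{\omega(\mathbf k)}$ --- still an $L^2$ function times at most one derivative $\nabla_j$, because $\hat\rho$ is Schwartz. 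Similarly $[\varphi_j(t),\mathcal{H}_f^n]$ telescopes into terms built from $[\varphi_j(t),\mathcal{H}_f]=-i\dot\varphi_j(t)$, which is again a field operator with form factor $\omega(\mathbf k)\hat\rho(\mathbf k)/\sqrt{\omega(\mathbf k)}=\sqrt{\omega(\mathbf k)}\,\hat\rho(\mathbf k)\in L^2$. Using the Nelson-type bounds \eqref{eq:boundsoncreationannihilation}, each such field operator is controlled by $\mathcal{H}_f^{1/2}+1$, and the accompanying powers of $-\triangle$ and $\mathcal{H}_f$ are absorbed: roughly, $[\varphi_j(t),K_n]$ is an operator of ``order'' strictly less than that of $K_n$, so that $K_n^{k-1}[\mathcal{H}_A(t),K_n]K_n^{-k}$ has total order $\le 0$ and is bounded. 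The $t$-dependence enters only through the bounded phases $e^{\pm i\omega(\mathbf k)t}$ and through the polynomial-in-$t$ factors coming from commuting $\mathcal{H}_f$ past phases; all of these are continuous in $t$, hence locally bounded, giving the $L^1_{\mathrm{loc}}$ requirement.

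The main obstacle I anticipate is bookkeeping: making the order-counting rigorous, i.e.\ showing that after telescoping, every monomial in $K_n^{k-1}[\mathcal{H}_A(t),K_n]K_n^{-k}$ really has non-positive order and that the ``numerator'' powers of $-\triangle$ and $\mathcal{H}_f$ are genuinely dominated by the $2k$ powers of $K_n$ in the denominator. One has to track that each inner commutator lowers the $\triangle$-degree by one unit (because $[\varphi_j,-\triangle]$ involves only first-order $\nabla_j$, not second-order), and that mixed terms like $(-\triangle)^\ell(\mathcal{H}_f^{1/2})(-\triangle)^{n-1-\ell}K_n^{-1}$ are bounded --- this needs that $(-\triangle)^a\mathcal{H}_f^b K_n^{-1}$ is bounded whenever $a\le n$ and $b\le n/2$ (say), which follows from the spectral theorem since $(-\triangle)^a\mathcal{H}_f^b \le \tfrac12((-\triangle)^{2a}+\mathcal{H}_f^{2b})\le C\,K_n$, using that $-\triangle$ and $\mathcal{H}_f$ commute (they act on different tensor factors) and Young's inequality on eigenvalues. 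A secondary subtlety is that all these identities are first established on the core $\mathscr{D}$ (which is invariant under the relevant operations and dense in $\dom(K_n)$), and the boundedness then lets us extend; one should also record that $Z_k(t)$ maps into $\mathscr{H}$ and is symmetric on $\mathscr{D}$ where needed. Once all $Z_k(t)$, $k\le n$, are shown bounded with locally bounded norm, Huang's theorem yields $U_A(t,s)[\dom(K_n)]=\dom(K_n)$ immediately, which is the claim.
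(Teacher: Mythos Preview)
Your approach is the same as the paper's --- apply Huang's theorem with $K=K_n$ --- but you make it much harder than necessary by misidentifying the exponent. You write that ``Huang's $j$ is our fixed $n$'' and therefore set out to bound $Z_k(t)=K_n^{k-1}[\mathcal{H}_A(t),K_n]K_n^{-k}$ for every $k\le n$. But the target domain is $\dom(K_n)=\dom(K_n^{1})$, so one may take $j=1$ in Huang's theorem and bound only $Z_1(t)=[\mathcal{H}_A(t),K_n]K_n^{-1}$. That is exactly what the paper does: after observing (as you do) that the free Dirac pieces commute with $K_n$ and only the terms $[\varphi_j(t),(-\triangle)^n]$ and $[\varphi_j(t),\mathcal{H}_f^n]$ survive, it bounds the commutator directly against $K_n$ and uses boundedness of $K_n^{-1}$. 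No order-counting for mixed powers $K_n^{k-1}(\cdots)K_n^{-k}$ with $k>1$ is needed.

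Your programme would still yield the lemma, since bounding $Z_1$ is the $k=1$ case of what you propose; but the ``main obstacle'' you anticipate --- the bookkeeping for general $k$ --- is entirely avoidable. The commutator analysis you sketch (telescoping, Schwartz form factors, Nelson-type bounds) is the same in spirit as the paper's and is the right ingredient; you just need far less of it.
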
 

\begin{proof}
Let $n \in \N$. It is known that $K_n$ is self-adjoint and strictly positive. We prove the invariance of $\dom(K^n)$ using Thm.\ \ref{thm:huang}, hence we only need the case $j=1$   
and need to bound $Z_1(t) = [ \mathcal{H}_A(t), K_n ] K_n^{-1}$. 
\\ Note that, since $K_n$ is positive, $0$ is in its resolvent set. This means that $K_n: \dom(K_n) \to \mathscr{H}$ is bijective, so its inverse $K_n^{-1}: \mathscr{H} \to \dom(K_n)$ is bounded by the closed graph theorem. Because the Laplacian commutes with the free Dirac operator (in the sense of self-adjoint operators, which can e.g.\ be seen by their resolvents), this carries over to $(-\triangle)^n$ and the commutator gives
\begin{equation}
\left[ \mathcal{H}_A(t), K_n \right] = \sum_{j \in A}  \left[ \varphi_j(t), (-\triangle)^n \right] + \sum_{j \in A} \left[ \varphi_j(t), \mathcal{H}^n_f \right] .
\end{equation} 
The commutator terms give rise to derivatives of the field terms $\varphi$, similarly as in the calculation \eqref{eq:calculationphiderivative}. It becomes apparent that arbitrary derivatives with respect to time or space variables lead to the multiplication of $\hat{\rho}(\mathbf{k})$  in \eqref{eq:definitionoffield} by a product of $k^a$ and $\omega(\mathbf{k})$ factors, which still keep the rapid decay at infinity. Therefore, also the derivative is a quantum field with an $L^2$-function as cut-off function. This means that the bound \eqref{eq:boundonphi} can analogously be applied to the commutator and we have some $C > 0$ with
\begin{equation}
\left\| \left[ \mathcal{H}_A(t), K_n \right] \eta \right\| \leq C \left( \| \mathcal{H}_f \eta \| + \| \eta \| \right) \quad \forall \eta \in \dom(K).
\end{equation}
By the inequality of arithmetic and geometric mean, 
\begin{equation}
\| \mathcal{H}_f \eta \| = \| \sqrt[n]{\mathcal{H}_f^n} \eta \| \leq C ( \| \mathcal{H}_f^n \eta \| + \| \eta \|) \leq  C ( \| K_n \eta \| + \| \eta \|)
\end{equation} 

Since $K_n^{-1} \psi \in \dom(K_n)$, we can apply this to $Z_1(t)$,
\begin{equation}
\left\| Z_1(t) \psi \right\| = \left\| \left( \left[ \mathcal{H}_A(t), K_n \right] \right) K_n^{-1} \psi \right\| \leq C \left( \| K_n K_n^{-1} \psi \| + \| K_n^{-1} \psi \| \right) = C \left(1 +  \|K_n^{-1}\|_{op}   \right) |\psi\|,
\end{equation}
which implies that $Z_1(t)$ is bounded with $\| Z_1(\cdot)\| \in L^1_{loc}(\R)$. Hence, application of Theorem \ref{thm:huang} yields the claim. 
\qed
\end{proof}

\begin{corollary} \label{thm:corollaryDinvariant}
The family of operators $U_A(t,s)$ with $t,s \in \R$ leaves the set $\mathscr{D}$, defined in \eqref{eq:defD}, invariant.
\end{corollary}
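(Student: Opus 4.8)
The plan is to deduce the invariance of $\mathscr{D}$ from the invariance of the spaces $\dom(K_n)$ established in Lemma~\ref{thm:invariance}, together with the causality/support property of Lemma~\ref{thm:lemmacausal}. Recall that $\mathscr{D} = C^\infty_c(\R^{3N},\C^K) \otimes \F \cap L^2(\R^{3N},\C^K) \otimes \dom(\mathcal{H}_f^\infty)$, so a function $\psi \in \mathscr{D}$ must satisfy two types of conditions: its dependence on the spatial variables $(\mathbf{x}_1,\ldots,\mathbf{x}_N)$ is smooth and compactly supported, and its Fock-space component lies in $\dom(\mathcal{H}_f^\infty) = \bigcap_j \dom(\mathcal{H}_f^j)$. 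The first step is to observe that $\psi \in \mathscr{D}$ implies $\psi \in \dom(K_n)$ for every $n \in \N$: smoothness and compact support give $(-\triangle)^n \psi \in L^2$ for all $n$, while $\psi \in L^2 \otimes \dom(\mathcal{H}_f^\infty)$ gives $\mathcal{H}_f^n \psi \in \mathscr{H}$ for all $n$; hence $\psi \in \bigcap_{n} \dom(K_n)$. Conversely, I would argue that membership in $\bigcap_n \dom(K_n)$ already captures the $\dom(\mathcal{H}_f^\infty)$ half of the definition of $\mathscr{D}$, since $\|K_n\psi\| < \infty$ for all $n$ forces $\|\mathcal{H}_f^n \psi\| < \infty$ for all $n$.

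Next I would apply Lemma~\ref{thm:invariance}: since $\psi \in \dom(K_n)$ for every $n$ and each $\dom(K_n)$ is left invariant by $U_A(t,s)$, we get $U_A(t,s)\psi \in \bigcap_n \dom(K_n)$ for all $t,s$. This immediately yields that $U_A(t,s)\psi$ lies in $L^2(\R^{3N},\C^K)\otimes\dom(\mathcal{H}_f^\infty)$ and, moreover, that its spatial dependence is such that all spatial derivatives up to any order are in $L^2$ — by elliptic regularity / Sobolev embedding on $\R^{3N}$, membership in $\dom((-\triangle)^n)$ for all $n$ means the function is (after modification on a null set) smooth in the spatial variables. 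The only remaining point to verify is that $U_A(t,s)\psi$ retains \emph{compact} spatial support. This is precisely where Lemma~\ref{thm:lemmacausal} enters: if $\supp\psi \subset R$ with $R$ bounded, then $\supp(U_A(t,s)\psi) \subset R + \sum_{j\in A} B_{|t-s|}(\mathbf{x}_j)$, which is again a bounded set since it is the sum of a bounded set with finitely many bounded balls (the balls being applied only to the finitely many coordinates $j \in A$, and those coordinates were already bounded in $R$). Hence $U_A(t,s)\psi$ has compact spatial support.

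Combining these observations, $U_A(t,s)\psi$ is smooth and compactly supported in the spatial variables and has Fock component in $\dom(\mathcal{H}_f^\infty)$, i.e. $U_A(t,s)\psi \in \mathscr{D}$, which is the claim. I do not expect a genuine obstacle here — the corollary is essentially a bookkeeping step that packages Lemmas~\ref{thm:invariance} and~\ref{thm:lemmacausal}. The one point requiring a little care is the translation between the abstract statement ``$\psi \in \dom((-\triangle)^n)$ for all $n$'' and the concrete statement ``$\psi$ has a representative that is $C^\infty$ in the spatial variables'': this is a standard consequence of the Sobolev embedding theorem applied sectorwise in the Fock-space decomposition, together with the fact that the relevant Sobolev norms are controlled by $\|K_n\psi\|$. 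If one prefers to avoid invoking elliptic regularity, one can alternatively note that the construction in the proof of Theorem~\ref{thm:mainthm} produces the solution directly in a form where spatial smoothness is transparent, but for the purposes of this corollary the argument via $\dom(K_n)$-invariance is the cleanest.
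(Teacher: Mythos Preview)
Your proposal is correct and follows essentially the same approach as the paper: invoke Lemma~\ref{thm:invariance} to get $U_A(t,s)\psi \in \bigcap_n \dom(K_n)$, extract from this both the $\dom(\mathcal{H}_f^\infty)$ condition and spatial smoothness via Sobolev embedding, and use Lemma~\ref{thm:lemmacausal} for compact support. The paper makes one point slightly more explicit than you do, namely the identification $\dom(K_n) = \dom((-\triangle)^n)\otimes\F \cap L^2\otimes\dom(\mathcal{H}_f^n)$ (citing Schm\"udgen for closedness of a sum of positive operators on the intersection of domains), which is precisely the step you flag as ``requiring a little care.''
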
 

\begin{proof}
By Lemma \ref{thm:invariance}, $U_A(t,s)$ with $t,s \in \R$ leaves $\dom(K_n)$ invariant for each $n \in \N$. We claim that 
\begin{equation} \label{eq:67}
\dom(K_n) = \dom((-\triangle)^n) \otimes \F \cap L^2(\R^{3N},\C^K) \otimes \dom(\mathcal{H}_f^n).
\end{equation}
The operator $K_n$ is of the form $(-\triangle)^n \otimes \id + \id \otimes \mathcal{H}^n_f + 1$, where the bounded operator $1$ is irrelevant for the domain. By \cite[chap.\ VIII.10]{reedsimon1}, an operator of this structure on a tensor product space is essentially self-adjoint on the domain $\dom((-\triangle)^n) \otimes \F \cap L^2(\R^{3N},\C^K) \otimes \dom(\mathcal{H}_f^n)$. The domain of self-adjointness arises when we take the closure of that operator. It is, however, known from \cite[p.\ 160]{schmuedgen} that a sum of positive operators is already closed on the domain \eqref{eq:67}. Thus, \eqref{eq:67} is actually the domain of self-adjointness of $K_n$.  

Let $\psi \in \mathscr{D}$, then also $\psi \in \dom(K_n)$ for all $n \in \N$. Thus, $U_A(t,s)\psi \in \dom(K_n)$ for all $n \in \N$. For the Fock space part, this directly gives
\begin{equation}
U_A(t,s) \psi \in \bigcap_{n = 1}^\infty L^2(\R^{3N},\C^K) \otimes \dom(\mathcal{H}_f^n) = L^2(\R^{3N},\C^K) \otimes \dom(\mathcal{H}_f^\infty).
\end{equation}
In the $L^2$-part, we first note that Lemma \ref{thm:lemmacausal} gives an upper bound on the growth of supports, so compactness of the support is preserved under the time evolution $U_A(t,s)$. Secondly, we have 
\begin{equation}
U_A(t,s) \psi \in \bigcap_{n = 1}^\infty \dom((-\triangle)^n) \otimes \F \subset C^\infty(\R^{3N},\C^K) \otimes \F,
\end{equation}
which follows from Sobolev's lemma as contained in the proposition in \cite[chap.\ IX.7]{reedsimon2}. These two facts imply that the time evolution leaves $C_c^\infty$ invariant. Thus we infer $U_A(t,s) \psi \in \mathscr{D}$. \qed
\end{proof}

Another result that will be helpful later is that not only the time evolutions leave the set $\mathscr{D}$ invariant, but also the terms in the Hamilton operators themselves.

\begin{lemma} \label{thm:invariantdomainforops}
The set $\mathscr{D}$ is left invariant by $\mathcal{H}_f$, $\mathcal{H}^0_j$ and $\varphi_j(t)$ for each $1 \leq j \leq N$ and $t \in \R$.
\end{lemma}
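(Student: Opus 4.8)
The plan is to verify the claim term by term, since $\mathscr{D}$ is defined as an intersection of a tensor product with $C^\infty_c$ in the $L^2$-factor and a tensor product with $\dom(\mathcal{H}_f^\infty)$ in the Fock-space factor, so it suffices to check that each of the three operators preserves both of these conditions. Let $\psi \in \mathscr{D}$, so that $\psi \in C^\infty_c(\R^{3N},\C^K)\otimes\F$ and simultaneously $\psi \in L^2(\R^{3N},\C^K)\otimes\dom(\mathcal{H}_f^\infty)$.

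First I would treat $\mathcal{H}_f$. Since $\mathcal{H}_f$ acts only on the Fock-space factor as a multiplication operator by $\sum_j \omega(\mathbf{k}_j)$, it commutes with multiplication by $x$ and with the spatial derivatives, hence it preserves the $C^\infty_c$-structure in the $L^2$-variable: $\mathcal{H}_f\psi$ is still a compactly supported smooth $\F$-valued function. For the Fock-space factor, $\mathcal{H}_f$ clearly maps $\dom(\mathcal{H}_f^\infty) = \bigcap_{n}\dom(\mathcal{H}_f^n)$ into itself. Next, $\mathcal{H}^0_j = -i\gamma^0_j\boldsymbol{\gamma}_j\cdot\boldsymbol{\nabla}_j + \gamma^0_j m$ acts only on the $L^2(\R^{3N},\C^K)$-factor; it maps $C^\infty_c$ into $C^\infty_c$ (differentiation and multiplication by constant matrices preserve smoothness and do not enlarge the support), and it commutes with $\mathcal{H}_f$, so it leaves $L^2\otimes\dom(\mathcal{H}_f^\infty)$ invariant as well.

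The slightly more delicate term is $\varphi_j(t)$, built from the smeared creation and annihilation operators with coefficient $\hat\rho(\mathbf{k})/\sqrt{\omega(\mathbf{k})}$ times the plane-wave factor $e^{\pm i\mathbf{k}\cdot\hat{\mathbf{x}}_j}$. Here I would argue as follows. The operator $e^{i\mathbf{k}\cdot\hat{\mathbf{x}}_j}$ is multiplication by a phase in the $j$-th spatial variable, which does not change the support and, being smooth in $\mathbf{x}_j$ with all derivatives bounded uniformly once $\mathbf{k}$ is integrated against the rapidly decaying $\hat\rho$, keeps $C^\infty_c$ invariant. For the Fock-space side, the key point is that for each $n$ one has bounds of the type $\|\mathcal{H}_f^n \varphi_j(t)\psi\| \lesssim_{n} \|(\mathcal{H}_f+1)^{n+1/2}\psi\|$: moving $\mathcal{H}_f^n$ through $a(\mathbf{k})$ and $a^\dagger(\mathbf{k})$ produces extra factors of $\omega(\mathbf{k})$ (and lower-order terms), which are absorbed into the Schwartz coefficient $\hat\rho(\mathbf{k})$, and then one applies the standard estimates \eqref{eq:boundsoncreationannihilation} — this is exactly the mechanism already used in Lemma~\ref{thm:katorellichHtilde} and Lemma~\ref{thm:invariance}. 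Since $\psi \in \dom(\mathcal{H}_f^\infty)$, the right-hand side is finite for every $n$, so $\varphi_j(t)\psi \in \dom(\mathcal{H}_f^n)$ for all $n$, i.e.\ $\varphi_j(t)\psi \in L^2\otimes\dom(\mathcal{H}_f^\infty)$; combined with the compact-support-preserving phase argument we get $\varphi_j(t)\psi \in \mathscr{D}$.

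The only real obstacle is the bookkeeping for $\varphi_j(t)$: one must be careful that $\psi$ is not literally a finite tensor in the Fock-space variable (it only lies in $\dom(\mathcal{H}_f^\infty)$, which allows infinitely many occupied sectors), so the commutation of $\mathcal{H}_f^n$ with $a(\mathbf{k})$, $a^\dagger(\mathbf{k})$ and the resulting $\omega$-weighted estimates must be justified as operator inequalities on the appropriate domains rather than by naive manipulation on finitely many sectors — but this is routine and follows the template already established earlier in the paper. I would therefore present the proof as three short paragraphs, one per operator, citing \eqref{eq:boundsoncreationannihilation} and the argument of Lemma~\ref{thm:invariance} for the $\varphi_j(t)$ case and leaving the elementary verifications for $\mathcal{H}_f$ and $\mathcal{H}^0_j$ to a sentence each.
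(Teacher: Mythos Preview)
Your proposal is correct and follows essentially the same route as the paper: both treat the three operators separately, noting that $\mathcal{H}^0_j$ and $\mathcal{H}_f$ act on only one tensor factor each, and for $\varphi_j(t)$ both observe that supports are not enlarged and that $\dom(\mathcal{H}_f^\infty)$ is preserved by commuting $\mathcal{H}_f^n$ through the field operator at the cost of extra $\omega(\mathbf{k})$-weights absorbed by the Schwartz cut-off $\hat\rho$. The paper phrases this last step via the identity $[\mathcal{H}_f,\varphi_j(t)]=i\dot\varphi_j(t)$ (so the commutator terms appear as time-derivatives of the field), whereas you describe it directly in terms of $\omega$-factors on $a,a^\dagger$; these are the same computation, and the paper likewise defers the $C^\infty_c$-side for $\varphi_j(t)$ to ``an analogous argument'' with spatial derivatives, exactly as you sketch.
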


\begin{proof} \begin{enumerate}
\item $\mathcal{H}^0_j$ only acts on the first tensor component and on that one, it leaves $C^\infty_c$-functions invariant because it is a linear combination of partial derivatives and the identity. 
\item $\mathcal{H}_f$ only acts on the second tensor component and on that one, it leaves $\dom(\mathcal{H}_f^\infty)$ invariant by definition.
\item First we note that $\varphi_j$ does not increase supports. Now let $k \in \N, t \in \R$ and $\psi \in \dom(\mathcal{H}_f^{k+1})$. Then, using the same estimates as in the proof of Lemma \ref{thm:katorellichHtilde},
\begin{equation}
\left\| \mathcal{H}_f^{k} ~ \varphi_j(t) \psi \right\| \leq \left\|  \varphi_j(t) \mathcal{H}_f^{k} \psi \right\| + \left\|  \frac{\partial^{k}}{\partial t^{k}} \varphi_j(t) \psi \right\| \leq C \left(  \left\|  \mathcal{H}_f^{k+1} \psi \right\| + \left\|  \mathcal{H}_f \psi \right\| + \left\|  \psi \right\| \right) < \infty,
\end{equation}
which shows that $\varphi_j(t) \psi \in \dom(\mathcal{H}_f^{k})$ for every $t \in \R$. An analogous argument can be done for the operators $\mathcal{H}^0_j$, which together implies that $\varphi_j(t)$ leaves $\mathscr{D}$ invariant.  \qed
\end{enumerate} 
\end{proof}

\subsection{Construction of the multi-time evolution} \label{sec:proofsec2}

The construction of the solution of our multi-time system \eqref{eq:systemstrongsolution} relies on the consistency condition which we prove now.

\begin{lemma} \label{thm:lemmacommutators}
Let $\psi \in \mathscr{D}$ and $A, B$ be disjoint subsets of $\{ 1,...,N\}$, then the consistency condition
\begin{equation} 
\left[ \mathcal{H}_A (t_A), \mathcal{H}_B(t_B) \right] \psi (\mathbf{x}_1,...,\mathbf{x}_N)=0
\end{equation}
is satisfied whenever $\forall j \in A, k \in B: \|\mathbf{x}_j - \mathbf{x}_k\| > \delta + |t_A-t_B|$.
\end{lemma}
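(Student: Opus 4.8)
The plan is to reduce the full commutator to a sum of particle-pair contributions and to show that each one vanishes pointwise on the given configurations. Since $A$ and $B$ are disjoint, for $j\in A$ and $k\in B$ one always has $j\ne k$. Working on the domain $\mathscr{D}$, which is left invariant by $\mathcal{H}^0_j$ and $\varphi_j(t)$ by Lemma~\ref{thm:invariantdomainforops} so that every operator product below is well-defined, we may expand
\[
 [\mathcal{H}_A(t_A),\mathcal{H}_B(t_B)]=\sum_{j\in A}\sum_{k\in B}\Big([\mathcal{H}^0_j,\mathcal{H}^0_k]+[\mathcal{H}^0_j,\varphi_k(t_B)]+[\varphi_j(t_A),\mathcal{H}^0_k]+[\varphi_j(t_A),\varphi_k(t_B)]\Big).
\]
The first three summands vanish identically: $\mathcal{H}^0_j$ is a linear combination of the $\partial_{x_j^a}$ and the identity with constant spinor coefficients, acting only on the degrees of freedom of particle $j$, whereas $\varphi_k(t_B)$ acts on $\F$ through $a(\mathbf{k}),a^\dagger(\mathbf{k})$ and on the $L^2$-factor only through the multiplication operator $\hat{\mathbf{x}}_k$; for $j\ne k$ these act on independent variables and commute, and likewise $[\mathcal{H}^0_j,\mathcal{H}^0_k]=0$.

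The remaining term $[\varphi_j(t_A),\varphi_k(t_B)]$ with $j\ne k$ is the one carrying the interaction (the two $\varphi$'s are the \emph{same} scalar field evaluated at $\hat{\mathbf{x}}_j$ and at $\hat{\mathbf{x}}_k$), and computing it is the heart of the proof. Inserting definition~\eqref{eq:definitionoffield}, I would use that the multiplication operators $e^{\pm i\mathbf{k}\cdot\hat{\mathbf{x}}_j}$ and $e^{\pm i\mathbf{k}'\cdot\hat{\mathbf{x}}_k}$ commute with one another (because $j\ne k$) and with all of the $a,a^\dagger$, and apply the canonical commutation relations $[a(\mathbf{k}),a^\dagger(\mathbf{k}')]=\delta^{(3)}(\mathbf{k}-\mathbf{k}')$. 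The $aa$- and $a^\dagger a^\dagger$-pieces commute, and carrying out the $\mathbf{k}'$-integration against the $\delta^{(3)}$, and then substituting $\mathbf{k}\to-\mathbf{k}$ in one of the two surviving pieces (legitimate since $\hat{\rho}$ is the Fourier transform of a real function), the commutator collapses to the c-number multiplication operator
\[
 [\varphi_j(t_A),\varphi_k(t_B)]=-2i\int d^3\mathbf{k}\ \frac{|\hat{\rho}(\mathbf{k})|^2}{\omega(\mathbf{k})}\,\sin\!\big(\omega(\mathbf{k})(t_A-t_B)\big)\,e^{i\mathbf{k}\cdot(\hat{\mathbf{x}}_j-\hat{\mathbf{x}}_k)}=:i\,\Delta_\rho(t_A-t_B,\hat{\mathbf{x}}_j-\hat{\mathbf{x}}_k).
\]
The interchange of the $\mathbf{k},\mathbf{k}'$ integrations and the convergence on $\mathscr{D}$ are guaranteed by $|\hat{\rho}|^2/\omega\in L^1(\R^3)$ — the same integrability already used to make $\varphi_j$ well-defined — which moreover shows that $\Delta_\rho(t_A-t_B,\hat{\mathbf{x}}_j-\hat{\mathbf{x}}_k)$ is a bounded operator. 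One recognizes $\Delta_\rho$ as ($i$ times) the Pauli--Jordan function $\Delta$ of~\eqref{eq:Paulijordan} smeared in its spatial argument by the double convolution~\eqref{eq:doubleconvolution} with the form factor $\rho$, up to the standard normalization.

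It remains to read off the support. For each fixed time difference $\tau$, the Pauli--Jordan function $\Delta(\tau,\cdot)$ is supported in the closed ball $\overline{B}_{|\tau|}(\mathbf{0})$ by finite propagation speed of the wave equation; since $\supp\rho\subset B_{\delta/2}(\mathbf{0})$, the double convolution $\rho ** \Delta(\tau,\cdot)$ is supported in $\overline{B}_{|\tau|+\delta}(\mathbf{0})$. Hence, when $\|\mathbf{x}_j-\mathbf{x}_k\|>|t_A-t_B|+\delta$ for all $j\in A$ and $k\in B$, each operator $\Delta_\rho(t_A-t_B,\hat{\mathbf{x}}_j-\hat{\mathbf{x}}_k)$ is multiplication by a function that vanishes at the configuration $(\mathbf{x}_1,\dots,\mathbf{x}_N)$, so $\big([\mathcal{H}_A(t_A),\mathcal{H}_B(t_B)]\psi\big)(\mathbf{x}_1,\dots,\mathbf{x}_N)=0$, which is the claim. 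I expect the main obstacle to be the careful handling, on the common invariant domain $\mathscr{D}$, of the doubly-smeared Fourier integral in the computation of $[\varphi_j(t_A),\varphi_k(t_B)]$ — justifying that it converges and really reduces to the c-number $i\,\Delta_\rho(t_A-t_B,\hat{\mathbf{x}}_j-\hat{\mathbf{x}}_k)$ — together with the (classical but indispensable) support property of $\Delta$; this latter fact is precisely where the $\delta$-enlargement built into the definition of $\mathscr{S}_\delta$ enters.
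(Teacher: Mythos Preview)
Your proof is correct and follows essentially the same route as the paper's: reduce $[\mathcal{H}_A,\mathcal{H}_B]$ to the field--field commutators $[\varphi_j(t_A),\varphi_k(t_B)]$, compute these via the CCR as a c-number multiplication operator, identify that c-number with the Pauli--Jordan function $\Delta$ doubly convolved with $\rho$, and then invoke the space-like support of $\Delta$ together with $\supp\rho\subset B_{\delta/2}(\mathbf{0})$ to get the vanishing on the stated configurations. The only cosmetic difference is that you write the result as a $\sin$-integral while the paper keeps it in the $e^{i\omega t}-\text{c.c.}$ form before passing to the convolution representation; the normalization/sign in your displayed formula differs from the paper's by a harmless constant, which is irrelevant for the support argument.
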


\begin{proof}
Let $t_A, t_B \in \R$. The commutator reads 
\begin{equation}
\left[ \mathcal{H}_A (t_A), \mathcal{H}_B(t_B) \right] = \left[ \sum_{j \in A} \mathcal{H}^0_j + \varphi_j(t_A),  \sum_{k \in B} \mathcal{H}^0_k + \varphi_k(t_B) \right] = \sum_{j \in A, k \in B} \left[ \varphi_j(t_A), \varphi_k(t_B) \right],
\end{equation}
since, by definition, the free Dirac Hamiltonians commute with the other terms. We will now show that each of the summands in the double sum applied to $\psi \in \mathscr{D}$ vanishes when evaluated at $(\mathbf{x}_1,...,\mathbf{x}_N) \in \R^{3N}$ with $\forall j \in A, k \in B: \|\mathbf{x}_j - \mathbf{x}_k\| > \delta + |t_A-t_B|$.
\\ It is well-known (e.g.\ \cite[thm\ X.41]{reedsimon2}) that field operators as defined in \eqref{eq:definitionoffield} satisfy the CCR, which means
\begin{equation}\label{eq:comparwiththis} \begin{split}
& \left[  \varphi_j(t_A),  \varphi_k(t_B)  \right] \psi(\mathbf{x}_1,...,\mathbf{x}_N) 
\\ = ~ &   i \Im \int \frac{d^3\mathbf{k}}{\omega(\mathbf{k})} ~  \hat{\rho}(\mathbf{k})^\dagger \hat{\rho}(\mathbf{k})  e^{i \omega(\mathbf{k}) t_A -i \mathbf{k} \cdot {\mathbf{x}}_j} e^{-i \omega(\mathbf{k}) t_B + i \mathbf{k} \cdot {\mathbf{x}}_k} \psi(\mathbf{x}_1,...,\mathbf{x}_N)  
 \\ = ~ & \frac{1}{2} \int \frac{d^3\mathbf{k}}{\omega(\mathbf{k})}  \left(  \hat{\rho}(\mathbf{k})^\dagger \hat{\rho}(\mathbf{k}) e^{i \omega(\mathbf{k}) (t_A - t_B)} e^{-i\mathbf{k} \cdot (\mathbf{x}_j - \mathbf{x}_k)} - \ \text{c.c} \ \right) \psi(\mathbf{x}_1,...,\mathbf{x}_N)  
 \\ = ~ & \frac{1}{2} \int \frac{d^3\mathbf{k}}{\omega(\mathbf{k})}  \left( \int d^3\mathbf{y}_1 d^3\mathbf{y}_2  \rho(\mathbf{y}_1) e^{-i\mathbf{k} \cdot \mathbf{y}_1} \rho(\mathbf{y}_2) e^{i\mathbf{k} \cdot \mathbf{y}_2} e^{i \omega(\mathbf{k}) (t_A - t_B)} e^{-i\mathbf{k} \cdot (\mathbf{x}_j - \mathbf{x}_k)} \right.  \\ & \hspace{1.7cm} \ - \text{c.c} \ \bigg)  \psi(\mathbf{x}_1,...,\mathbf{x}_N) , 
\end{split}\end{equation}
upon insertion of the Fourier transforms. We compare this to the so-called Pauli-Jordan function \cite[p.\ 88]{scharf}, i.e. the distribution
\begin{equation} \label{eq:Paulijordan}
\Delta(x_j,x_k) := c \int \frac{d^3\mathbf{k}}{\omega(\mathbf{k})} \left( e^{i \omega(\mathbf{k}) (t_j-t_k) - i \mathbf{k} \cdot (\mathbf{x}_j - \mathbf{x}_k) } - \ \text{c.c.} \ \right),
\end{equation}
where $c = \frac{i}{16\pi^3}$. It is known that $\Delta(x_1,x_2)=0$ whenever $x_1$ is space-like to $x_2$ \cite[p.\ 89]{scharf}. We define a double convolution by
\begin{equation} \label{eq:doubleconvolution} \begin{split}
(\rho * * \Delta) (t_j, \mathbf{x}_j, t_k, \mathbf{x}_k) := \int d^3\mathbf{y}_1 d^3\mathbf{y}_2 ~ \rho(\mathbf{y}_1) \rho(\mathbf{y}_2) \Delta(t_j, \mathbf{x}_j - \mathbf{y}_1, t_k, \mathbf{x}_k - \mathbf{y}_2)
\\ =  c \int \frac{d^3\mathbf{k}}{\omega(\mathbf{k})} \int  d^3\mathbf{y}_1 d^3\mathbf{y}_2 \rho(\mathbf{y}_1) \rho(\mathbf{y}_2)  \left( e^{i \omega(\mathbf{k}) (t_j-t_k) - i \mathbf{k} \cdot (\mathbf{x}_j - \mathbf{y}_1 - \mathbf{x}_k + \mathbf{y}_2) } - \ \text{c.c.} \ \right)
,
\end{split} \end{equation}
which is a well-defined integral since $\rho \in C^\infty_c(\R^3)$. Comparison to \eqref{eq:comparwiththis} yields
\begin{equation} \label{eq:commutatorofphis}
\frac{2}{c} \left[  \varphi_j(t_A),  \varphi_k(t_B)  \right] \psi(\mathbf{x}_1,...,\mathbf{x}_N)  = (\rho * * \Delta) (t_A, \mathbf{x}_j, t_B, \mathbf{x}_k) \psi(\mathbf{x}_1,...,\mathbf{x}_N). 
\end{equation}
We know that  $\|\mathbf{x}_j - \mathbf{x}_k\| > |t_A-t_B| + \delta$ and by \eqref{eq:cutoff}, $\rho(\mathbf{y}) \neq 0$ only  if $\| \mathbf{y} \| < \frac{\delta}{2}$. Thus the argument of the function $\Delta$ in the double convolution \eqref{eq:doubleconvolution} satisfies 
\begin{equation}\label{eq:spacelikedeltavanishes} \begin{split}
\| \mathbf{x}_j -\mathbf{y}_1 - ( \mathbf{x}_k -  \mathbf{y}_2) \| & \geq \| \mathbf{x}_k -  \mathbf{x}_j \| - \| \mathbf{y}_1 \| - \| \mathbf{y}_2 \| \\ & \geq \| \mathbf{x}_j - \mathbf{x}_k \| -  \delta \\ & > |t_A-t_B|,  \end{split}
\end{equation} 
i.e.\ it is space-like, which implies that $(\rho * * \Delta) (t_A, \mathbf{x}_j, t_B, \mathbf{x}_k) = 0$ and hence also the commutator is zero. \qed
\end{proof}
With all the previous results at hand, the existence of solutions can be treated constructively. We first prove a lemma which contains the crucial ingredient for the subsequent theorem.

\begin{lemma} \label{thm:subclaim}
Let $\zeta \in \mathscr{D}$. Let $A, B$ be arbitrary subsets of $\{1,...,N\}$ with $A \cap B = \emptyset$, let $t_B \geq s \geq t_A$, then 
\begin{equation} \label{eq:thiscommutatorisit}
\left( \left[  \mathcal{H}_A(t_A),U_B(t_B,s) \right] \zeta \right) (\mathbf{x}_1,...,\mathbf{x}_N) = 0.
\end{equation}  
holds at every point $(\mathbf{x}_1,...,\mathbf{x}_N) \in \R^{3N}$ for which $\forall j \in A, k \in B$, $\| \mathbf{x}_j - \mathbf{x}_k\| > \delta + t_B - t_A$.
\end{lemma}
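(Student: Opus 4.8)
The plan is to turn $[\mathcal{H}_A(t_A),U_B(t_B,s)]\zeta$ into a Duhamel integral over the equal-time field commutators $[\varphi_j(t_A),\varphi_k(t)]$ and then to propagate supports so that the vanishing statement of Lemma~\ref{thm:lemmacommutators} becomes applicable at the prescribed point. For $t\in[s,t_B]$ put
\begin{equation*}
G(t):=U_B(t_B,t)\,\mathcal{H}_A(t_A)\,U_B(t,s)\,\zeta ,
\end{equation*}
which is well defined since $U_B(t,s)\zeta\in\mathscr{D}$ by Corollary~\ref{thm:corollaryDinvariant} and $\mathscr{D}$ is left invariant by each $\mathcal{H}^0_j$ and $\varphi_j(t_A)$ (Lemma~\ref{thm:invariantdomainforops}). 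Because $U_B(t_B,t_B)=U_B(s,s)=\id$ we have $G(t_B)=\mathcal{H}_A(t_A)U_B(t_B,s)\zeta$ and $G(s)=U_B(t_B,s)\mathcal{H}_A(t_A)\zeta$, so differentiating $G$ (using $\partial_t U_B(t,s)\psi=-i\mathcal{H}_B(t)U_B(t,s)\psi$ and $\partial_t U_B(t_B,t)\psi=iU_B(t_B,t)\mathcal{H}_B(t)\psi$ for $\psi\in\mathscr{D}$, which follow from Theorem~\ref{thm:existsU} and the group law) yields
\begin{equation*}
\big[\mathcal{H}_A(t_A),U_B(t_B,s)\big]\zeta \;=\; G(t_B)-G(s)\;=\; i\int_s^{t_B} U_B(t_B,t)\,\big[\mathcal{H}_B(t),\mathcal{H}_A(t_A)\big]\,U_B(t,s)\,\zeta\;dt .
\end{equation*}

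Fix $t\in[s,t_B]$ and analyse the integrand pointwise. Since $\mathcal{H}^0_j$ commutes with all $\varphi_k$ and the free Dirac operators commute among themselves, $[\mathcal{H}_B(t),\mathcal{H}_A(t_A)]=\sum_{j\in A,\,k\in B}[\varphi_j(t_A),\varphi_k(t)]$, and by \eqref{eq:commutatorofphis} each term acts as multiplication by a $c$-number kernel which, by the estimate \eqref{eq:spacelikedeltavanishes}, vanishes as soon as $\|\mathbf{x}_j-\mathbf{x}_k\|>\delta+|t-t_A|$. Applied to $U_B(t,s)\zeta\in\mathscr{D}$, the vector $\eta_t:=[\mathcal{H}_B(t),\mathcal{H}_A(t_A)]U_B(t,s)\zeta$ is therefore smooth and vanishes on the open set $S_t:=\{\mathbf{y}\in\R^{3N}:\|\mathbf{y}_j-\mathbf{y}_k\|>\delta+|t-t_A|\ \text{ for all }j\in A,k\in B\}$, hence $\supp\eta_t\subset\R^{3N}\setminus S_t$. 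By the finite-propagation-speed bound of Lemma~\ref{thm:lemmacausal}, $\supp\big(U_B(t_B,t)\eta_t\big)$ is contained in the set obtained from $\R^{3N}\setminus S_t$ by fattening each coordinate $\mathbf{y}_k$ with $k\in B$ by a ball of radius $t_B-t$ (leaving the $A$-coordinates untouched).

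It remains to verify that the fixed point $\mathbf{x}=(\mathbf{x}_1,\dots,\mathbf{x}_N)$ from the statement, which satisfies $\|\mathbf{x}_j-\mathbf{x}_k\|>\delta+t_B-t_A$ for all $j\in A,k\in B$, lies outside this set for every $t\in[s,t_B]$. If it did not, there would exist $\mathbf{y}\notin S_t$ with $\mathbf{y}_j=\mathbf{x}_j$ for $j\in A$ and $\|\mathbf{y}_k-\mathbf{x}_k\|\le t_B-t$ for $k\in B$; picking $j^*\in A,k^*\in B$ with $\|\mathbf{y}_{j^*}-\mathbf{y}_{k^*}\|\le\delta+|t-t_A|$ and using $t_A\le s\le t\le t_B$ we would get
\begin{equation*}
\|\mathbf{x}_{j^*}-\mathbf{x}_{k^*}\|\;\le\;\|\mathbf{y}_{j^*}-\mathbf{y}_{k^*}\|+\|\mathbf{x}_{k^*}-\mathbf{y}_{k^*}\|\;\le\;\delta+(t-t_A)+(t_B-t)\;=\;\delta+t_B-t_A ,
\end{equation*}
contradicting the hypothesis. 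Thus $\big(U_B(t_B,t)\eta_t\big)(\mathbf{x})=0$ for all $t\in[s,t_B]$; since all the $U_B(t,s)\zeta$ with $t$ in the compact interval have supports inside a single fixed compact set and lie in $\mathscr{D}$ with the $K_n$-graph norms bounded locally uniformly in $t$ (Corollary~\ref{thm:corollaryDinvariant}, Lemma~\ref{thm:lemmacausal}, Lemma~\ref{thm:katorellichHtilde}), the integrand is continuous in $t$ uniformly on that compact set, so pointwise evaluation commutes with $\int_s^{t_B}\!\cdot\;dt$, and the claimed identity follows.

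I expect the genuine work to be the rigorous justification of the Duhamel step and of the interchange of $t$-integration with pointwise evaluation: $\mathcal{H}_A(t_A)$ is unbounded, so differentiating $G$ strongly in $\mathscr{H}$ and controlling $t\mapsto\mathcal{H}_A(t_A)U_B(t,s)\zeta$ requires combining the $\dom(K_n)$-invariance of Lemma~\ref{thm:invariance} (hence the $\mathscr{D}$-invariance) with the $K_A$-relative boundedness of $\mathcal{H}_A(t_A)$ established in the proof of Lemma~\ref{thm:katorellichHtilde}; once the integral representation is in place, the support bookkeeping and the geometric inequality above are elementary.
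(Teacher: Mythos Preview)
Your approach and the paper's share the same backbone: a Duhamel-type representation of $[\mathcal{H}_A(t_A),U_B(t_B,s)]\zeta$ as a time integral involving $[\mathcal{H}_B(t),\mathcal{H}_A(t_A)]$, followed by support tracking via Lemma~\ref{thm:lemmacommutators} and the finite propagation speed of Lemma~\ref{thm:lemmacausal}; the geometric inequality you write down is exactly the one that appears (implicitly) in the paper's support analysis. The difference lies precisely at the point you flag in your last paragraph. The paper does \emph{not} differentiate $G(t)=U_B(t_B,t)\mathcal{H}_A(t_A)U_B(t,s)\zeta$ directly. Instead it first replaces $\varphi_A(t_A)$ by the bounded approximant $\varphi_A^\varepsilon(t_A)=\varphi_A(t_A)(1+i\varepsilon\varphi_A(t_A))^{-1}$, works with $\Omega_t:=[U_B(t,s),\varphi_A^\varepsilon(t_A)]\zeta$, and obtains the Duhamel formula by a \emph{weak} argument: one computes $i\partial_t\langle\eta,\Omega_t\rangle$ for $\eta\in\mathscr{D}$, constructs an explicit $\phi_t$ (an integral against the propagator generated by $\mathcal{H}_f+\mathcal{H}_B(s)$) solving the same weak equation with the same initial value, and invokes uniqueness of weak solutions to conclude $\Omega_t=\phi_t$. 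Only after the support conclusion is reached for every $\varepsilon>0$ does one let $\varepsilon\to0$.

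The authors take this detour for exactly the reason you anticipate: they state that it is not clear whether $t\mapsto\mathcal{H}_A(t_A)U_B(t,s)\zeta$ is strongly differentiable (or even continuous), since the available input (Theorem~\ref{thm:existsU}, Lemma~\ref{thm:invariance}, Corollary~\ref{thm:corollaryDinvariant}) gives invariance of $\dom(K_n)$ and of $\mathscr{D}$ but not, as written, continuity or differentiability of $t\mapsto U_B(t,s)\zeta$ in a $K_n$-graph norm. Your direct route would be shorter and conceptually cleaner if that graph-norm regularity were established (and Huang's theorem typically does give local-in-$t$ bounds on $\|K_n U_B(t,s)\psi\|$, so this is plausible), but the paper's $\varepsilon$-regularization plus weak-uniqueness argument sidesteps the issue entirely at the cost of a longer proof and one extra ingredient, namely that the resolvent $(1+i\varepsilon\varphi_A(t_A))^{-1}$ does not enlarge spatial supports.
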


The idea of the proof is to take the derivative of the commutator in \eqref{eq:thiscommutatorisit} with respect to $t_B$ to get an expression where the consistency condition proven in Lemma \ref{thm:lemmacommutators} becomes useful. However, it is not immediately clear if a term of the form $\mathcal{H}_A(t_A) U_B(t_B,s)$ is differentiable or even continuous in $t_B$ because $\mathcal{H}_A$ is not a continuous operator. Therefore, we have to take a detour and approximate $\mathcal{H}_A$ by bounded operators. A similar approximation by bounded operators is used in the proof of the Hille-Yosida theorem in \cite[ch.\ 7.4]{evans}.

\begin{proof}
Let $A,B \subset \{1,...,N\}$ with $A \cap B = \emptyset$, $s, t_A, t_B \in \R$ with $t_B \geq s \geq t_A$, $\zeta \in \mathscr{D}$ and $(\mathbf{x}_1,...,\mathbf{x}_N) \in \R^{3N}$ such that $\forall j \in A, k \in B$: $\| \mathbf{x}_j - \mathbf{x}_k\| > \delta + t_B - t_A$. 

We abbreviate $\sum_{k \in A} \varphi_k(t) =: \varphi_A(t)$ for $t \in \R$. First note that the free Dirac terms in $\mathcal{H}_A$ trivially commute, so 
\begin{equation}
\left( \left[ U_B(t_B,s), \mathcal{H}_A(t_A) \right] \zeta \right) (\mathbf{x}_1,...,\mathbf{x}_N)  = \left( \left[ U_B(t_B,s), \varphi_A(t_A) \right] \zeta \right) (\mathbf{x}_1,...,\mathbf{x}_N).
\end{equation}
Now define for $\varepsilon>0, t \in \R$ a family of auxiliary operators 
\begin{equation}
\varphi^\varepsilon_A(t):= \frac{\varphi_A(t)}{1+i\varepsilon \varphi_A(t)},
\end{equation}
which are well-defined since $\varphi_A(t)$ is self-adjoint for all $t$ \cite{spohn}. For $\lambda \in \R, \varepsilon > 0$, 
\begin{equation}
\left| \frac{\lambda}{1 + i \varepsilon \lambda} \right| \leq \frac{1}{\varepsilon} \quad \Longrightarrow \|\varphi^\varepsilon_A(t)\| \leq \frac{1}{\varepsilon}
\end{equation}
where the implication follows by the spectral theorem.
The difference of field operator $\varphi_A$ and its approximation  $\varphi_A^\varepsilon$ can be recast into
\begin{equation}
( \varphi_A(t_A) - \varphi_A^\varepsilon(t_A)) =  \frac{\varphi_A(t_A) + i\varepsilon \varphi_A(t_A)^2}{1 + i \varepsilon \varphi_A(t_A)} - \frac{\varphi_A(t_A)}{1 + i \varepsilon \varphi_A(t_A)} =   \frac{i \varepsilon}{1 + i \varepsilon \varphi_A(t_A)} \varphi_A(t_A)^2
\end{equation}
and we note the bound for all $\varepsilon > 0$:
\begin{equation}
\left\| \frac{1}{1 + i \varepsilon \varphi_A(t_A)} \right\| \leq 1.
\end{equation}
Because $U_B(t_B,s)\zeta \in \mathscr{D}$ by corollary \ref{thm:corollaryDinvariant}, we find the bound
 \begin{equation} \begin{split}
\left\| \left[ U_B(t_B,s), \varphi_A(t_A) - \varphi_A^\varepsilon(t_A) \right] \zeta \right\| & \leq \ \left\| (\varphi_A(t_A) - \varphi_A^\varepsilon(t_A)) \zeta \right\| \\ & + \ \left\| (\varphi_A(t_A) - \varphi_A^\varepsilon(t_A)) U_B(t_B,s) \zeta \right\|
\\ & \leq \ \varepsilon \left( \| \varphi_A(t_A)^2 \zeta \| + \|  \varphi_A(t_A)^2 U_B(t_B,s) \zeta \| \right).
 \end{split}
 \end{equation}
Since we can take $\varepsilon \to 0$, the norm of the left hand side has to vanish. Because we furthermore know that $\left[ U_B(t_B,s), \varphi_A(t_A) - \varphi_A^\varepsilon(t_A) \right]$ is a continuous function, the following implication holds: 
\begin{equation} \label{eq:somethingfollows}
\left( \left[ U_B(t_B,s), \varphi_A^\varepsilon(t_A) \right] \zeta \right) (\mathbf{x}_1,...,\mathbf{x}_N) = 0 \ \forall \varepsilon>0 \Rightarrow \left( \left[ U_B(t_B,s), \varphi_A(t_A) \right] \zeta \right) (\mathbf{x}_1,...,\mathbf{x}_N) = 0.
\end{equation}
Thus it remains to prove that the commutator defined for $t \in \R$,
\begin{equation}
\Omega_{t}:= \left[ U_B(t,s), \varphi^\varepsilon_A(t_A) \right] \zeta,
\end{equation}
vanishes at $(\mathbf{x}_1,...,\mathbf{x}_N)$. Note that $\Omega_t$ depends on $\varepsilon$, which we do not write for brevity. As a merit of our approximation, $t \mapsto \Omega_t$ is a continuous map $\R \to \mathscr{H}$. We proceed in four steps:
\begin{enumerate}
\item Construct an auxiliary function $\phi_t$ that solves for $\eta \in \mathscr{D}$
\begin{equation}
i \partial_t \left\langle \eta, \phi_t \right\rangle  = \left\langle \eta, \left[ \varphi_B(t), \varphi_A^\varepsilon(t_A) \right] U_B(t,s) \eta \right\rangle + \left\langle \mathcal{H}_B(t) \eta,  \phi_t  \right\rangle.
\end{equation}
\item Show that $\forall \eta \in \mathscr{D}: i \partial_t \left\langle \eta, \phi_t - \Omega_{t} \right\rangle = \left\langle \mathcal{H}_B(t) \eta, \phi_t - \Omega_{t} \right\rangle $.
\item Show that the weak equation proven in step 2 has a unique solution, thus $\phi_t = \Omega_t$.
\item Investigate the support properties of $\phi_t$ and conclude that $\Omega_t$ vanishes at $(\mathbf{x}_1,...,\mathbf{x}_N)$.
\end{enumerate}
\textit{Step 1:} We introduce the abbreviation for $t \in \R$
\begin{equation} \label{eq:defineft}
f_t:= \left[ \varphi_B(t), \varphi_A^\varepsilon(t_A) \right] U_B(t,s) \zeta 
\end{equation}
and recognize that the function $f: \R \to \mathscr{H}, t \mapsto f_t$ is bounded and measurable. Define
\begin{equation} \label{eq:definedphit}
\phi_t := \int_s^t d\tau \ e^{i\mathcal{H}_f(t-s)} e^{-i (\mathcal{H}_f + \mathcal{H}_B(s))(t-\tau)} e^{-i\mathcal{H}_f (\tau-s)} f_\tau.
\end{equation}
For $\eta \in \mathscr{D}, t \in \R$, we compute using Fubini's theorem,
\begin{equation} \label{eq:phiderivativeeta} \begin{split}
i \partial_t \left\langle \eta, \phi_t \right\rangle  = & i \partial_t \int^t_s d\tau \left\langle e^{i\mathcal{H}_f (\tau-s)} e^{i (\mathcal{H}_f + \mathcal{H}_B(s))(t-\tau)}e^{-i\mathcal{H}_f (t-s)}\eta, f_\tau \right\rangle
\\  = & \left\langle e^{i\mathcal{H}_f (t-s)} e^{i (\mathcal{H}_f + \mathcal{H}_B(s))(t-t)}e^{-i\mathcal{H}_f (t-s)}\eta, f_t \right\rangle
\\ & + \int_s^t d\tau \left\langle e^{i\mathcal{H}_f (\tau-s)} \mathcal{H}_B(s) e^{i (\mathcal{H}_f + \mathcal{H}_B(s))(t-\tau)}e^{-i\mathcal{H}_f (t-s)}\eta, f_\tau \right\rangle
\\ = & \left\langle \eta, f_t \right\rangle + \left\langle \mathcal{H}_B(t) \eta,  \phi_t  \right\rangle.
\end{split} \end{equation}
\textit{Step 2:} A calculation similar to the one above is now possible for $\Omega_t, t \in \R$:
\begin{equation}
 \begin{split}
i \partial_t \left\langle \eta, \Omega_t \right\rangle  = & i \partial_t \left( \left\langle U_B(s,t) \eta, \varphi_A^\varepsilon(t_A) \zeta \right\rangle - \left\langle \varphi_A^\varepsilon(t_A)^\dagger \eta,  U_B(t,s) \zeta \right\rangle \right)
\\  = & \left\langle U_B(s,t) \mathcal{H}_B(t) \eta, \varphi_A^\varepsilon(t_A) \zeta \right\rangle - \left\langle \varphi_A^\varepsilon(t_A)^\dagger \eta, \mathcal{H}_B(t) U_B(t,s) \zeta \right\rangle
\\ & - \left\langle \mathcal{H}_B(t) \eta, \varphi^\varepsilon_A(t_A) U_B(t,s) \zeta \right\rangle + \left\langle \mathcal{H}_B(t) \eta, \varphi^\varepsilon_A(t_A) U_B(t,s) \zeta \right\rangle
\\  = & \left\langle \mathcal{H}_B(t) \eta, \Omega_t \right\rangle + \left\langle \eta, \left[ \mathcal{H}_B(t), \varphi^\varepsilon_A(t_A) \right] U_B(t,s) \zeta \right\rangle 
\\ = & \left\langle \mathcal{H}_B(t) \eta, \Omega_t \right\rangle + \left\langle \eta, \left[ \varphi_B(t), \varphi^\varepsilon_A(t_A) \right] U_B(t,s) \zeta \right\rangle +  \sum_{k \in B} \left\langle \eta, \underbrace{\left[ \mathcal{H}^0_k, \varphi^\varepsilon_A(t_A) \right]}_{=0} \zeta \right\rangle
\\ = & \left\langle \eta,f_t\right\rangle + \left\langle \mathcal{H}_B(t) \eta, \Omega_t \right\rangle .
\end{split} \end{equation}
This together with \eqref{eq:phiderivativeeta} yields that the difference $\phi_t - \Omega_t$ is a weak solution of the Dirac equation in the sense that $\forall \eta \in \mathscr{D}$:
\begin{equation} \label{eq:Diracweak}
i \partial_t \left\langle \eta, \phi_t - \Omega_t \right\rangle =  \left\langle \mathcal{H}_B(t) \eta, \phi_t - \Omega_t \right\rangle \quad .
\end{equation}  
\textit{Step 3:} For all $s \in \R$, $U_B(s,s) = \id$ implies $\Omega_s=0$ and by definition, $\phi_s=0$. To show that $\Omega_t$ and $\phi_t$ are actually equal for all times $t \in \R$, it thus suffices to prove uniqueness of solutions to Eq.\ \eqref{eq:Diracweak}. 
\\ To this end, let $\rho: \R \to \mathcal{H},  t \mapsto \rho_t$ be continuous and for every $\eta \in \mathscr{D}$ a solution to 
\begin{equation} \label{eq:einundneunzig}
i \partial_t \left\langle \eta, \rho_t \right\rangle =  \left\langle \mathcal{H}_B(t) \eta, \rho_t \right\rangle.
\end{equation}
We claim that then, for all $t \in \R$, $\rho_t = U_B(t,s) \rho_s$. To see this we consider $t \mapsto \left\langle U_B(t,s) \eta,  \rho_t \right\rangle$, we prove that this is differentiable with zero derivative. For $h>0$, we find
\begin{equation} \begin{split}
& \ \frac{1}{h} \Big\|   \left\langle U_B(t+h,s) \eta,  \rho_{t+h} \right\rangle - \left\langle U_B(t,s) \eta,  \rho_{t} \right\rangle \Big\| 
\\ \leq & \ \Big\| \frac{1}{h} \left\langle U_B(t+h,s)\eta -U_B(t,s) \eta, \rho_{t+h} \right\rangle  -  \left\langle i \mathcal{H}_B(t) U_B(t,s) \eta,  \rho_{t+h} \right\rangle \Big\| \\ & +  \Big\| \frac{1}{h} \left\langle U_B(t,s) \eta, \rho_{t+h}- \rho_{t} \right\rangle - i \left\langle \mathcal{H}_B(t) U_B(t,s) \eta, \rho_{t+h} \right\rangle \Big\|
\\ \leq & \ \Big\| \frac{1}{h} \big( U_B(t+h,s)\eta -U_B(t,s) \eta \big) - i \mathcal{H}_B(t) U_B(t,s) \eta \Big\| \| \rho_{t+h} \| \\  & +  \Big\| \frac{1}{h} \left\langle U_B(t,s) \eta, \rho_{t+h}- \rho_{t} \right\rangle - i \left\langle \mathcal{H}_B(t) U_B(t,s) \eta, \rho_{t} \right\rangle \Big\| + \Big\| \left\langle \mathcal{H}_B(t) U_B(t,s) \eta, \rho_{t+h} - \rho_{t} \right\rangle \Big\| .
\end{split}
\end{equation}
The first term goes to zero as $h \rightarrow 0$ because $\eta \in \mathscr{D}$ and since $\rho_t$ is continuous, the norm $\rho_{t+h}$ is bounded in a neighbourhood of $t$. The second term vanishes using \eqref{eq:einundneunzig}, noting that also $U_B(t,s)\eta \in \mathscr{D}$ by Corollary \ref{thm:corollaryDinvariant}. The last term also goes to zero by continuity of $\rho_t$. We have thus proven that 
\begin{equation}
\partial_t \left\langle U_B(t,s) \eta, \rho_t \right\rangle = 0 \ \Rightarrow  \left\langle \eta, U_B(s,t) \rho_t \right\rangle = const.
\end{equation}
This implies the desired uniqueness statement $\left\langle \eta, U_B(t,s) \rho_s - \rho_t \right\rangle = 0$ for all $\eta \in \mathscr{D}$. Since $\mathscr{D} \subset \mathscr{H}$ is dense, $\rho_t = U_B(t,s) \rho_s$ follows. 
\\ In the special case of \eqref{eq:Diracweak}, the initial value is $\rho_s=\phi_s-\Omega_s=0$. Furthermore, $t \mapsto \Omega_t - \phi_t$ is continuous, hence  
\begin{equation}
\forall t \in \R: \phi_t - \Omega_t = 0.
\end{equation}
\textit{Step 4:} Thanks to Eq.\ \eqref{eq:definedphit}, we now have an explicit formula for $\Omega_t$ by means of $\Omega_t=\phi_t$. Next, we investigate its support.
\\ To treat the commutator term in \eqref{eq:defineft}, we insert two identities:
\begin{equation} \begin{split}
  \left[ \varphi_B(t),\varphi_A^\varepsilon(t_A) \right] = & \ \frac{1}{1 + i \varepsilon \varphi_A(t_A)} (1 + i \varepsilon \varphi_A(t_A)) \varphi_B(t) \varphi_A(t_A) \frac{1}{1 + i \varepsilon \varphi_A(t_A)} 
\\  & -  \ \frac{1}{1 + i \varepsilon \varphi_A(t_A)} \varphi_A(t_A) \varphi_B(t) (1 + i \varepsilon \varphi_A(t_A))\frac{1}{1 + i \varepsilon \varphi_A(t_A)} 
\\ = &  \ \frac{1}{1 + i \varepsilon \varphi_A(t_A)} \left[ \varphi_B(t), \varphi_A(t_A) \right] \frac{1}{1 + i \varepsilon \varphi_A(t_A)}.
\end{split}
\end{equation}
The operator $\frac{1}{1 + i \varepsilon \varphi_A(t_A)}$ does not increase the domain of functions since it is the resolvent of $\varphi_A(t_A)$ that can be written as a direct fiber integral, compare \cite[thm.\ 3.4]{zenkstockmeyer} and \cite[thm. XIII.85]{reedsimon4}. Hence, Lemma \ref{thm:lemmacommutators} guarantees that $f_t(\mathbf{x}_1,...,\mathbf{x}_N)=0$ whenever $
\| \mathbf{x}_j - \mathbf{x}_k \| > \delta + |t - t_A|$ for all $j \in A, k \in B$. 

The spatial support is not altered by the $\mathcal{H}_f$ operators and their exponentials, so we have
\begin{equation}
\supp \left(e^{-i \mathcal{H}_f(\tau-s)} f_\tau \right) \subset \left\{ (\mathbf{x}_1,...,\mathbf{x}_N) \in \R^{3N} \big| \exists j \in A, k \in B: \| \mathbf{x}_j - \mathbf{x}_k \| \leq \delta + \tau - t_A  \right\}.
\end{equation}
Applying Lemma \ref{thm:lemmacausal}, this support can grow by at most $\sum_{k \in B} B_{t-\tau}(\mathbf{x}_j) $ when acted on by $e^{-i( \mathcal{H}_f + \mathcal{H}_B(s))(t-\tau)}$. So this implies
\begin{equation} \label{eq:drueber}
\supp \left(e^{-i( \mathcal{H}_f + \mathcal{H}_B(s))(t-\tau)} e^{-i \mathcal{H}_f(\tau-s)} f_\tau \right) \subset \left\{ (\mathbf{x}_1,...,\mathbf{x}_N) \in \R^{3N} \Big| \begin{array}{c}\exists  j \in A, k \in B: \\
\| \mathbf{x}_j - \mathbf{x}_k \| \leq \delta + t- t_A  \end{array} \right\}. 
\end{equation}
Consider $\Omega_{t_B}=\phi_{t_B}$. By \eqref{eq:drueber}, the integrand in Eq.\ \eqref{eq:definedphit} vanishes whenever $\| \mathbf{x}_j - \mathbf{x}_k \| > \delta + t_B- t_A$. This is satisfied for $(\mathbf{x}_1,...,\mathbf{x}_N)$ by assumption, which yields
\begin{equation}
\Omega_{t}(\mathbf{x}_1,...,\mathbf{x}_N) = \left( \left[ U_B(t,s), \varphi^\varepsilon_A(t_A) \right] \zeta \right) (\mathbf{x}_1,...,\mathbf{x}_N) = 0
\end{equation}
for every positive $\varepsilon$, and thus with \eqref{eq:somethingfollows} the claim of the lemma. \qed
\end{proof}

We are now ready to prove the existence Theorem \ref{thm:mainthm}. In addition to the claim in Thm.\  \ref{thm:mainthm} we also prove the following extended claim that states the form of the solution.

\begin{theorem} \label{thm:consistencydeltafree}
For each $\psi^0 \in \mathscr{D}$, there exists a solution $\psi$ of the multi-time system in the sense of Def.\ \ref{def:solution} on $\mathscr{S}_\delta$ with initial data $\psi(0, ... ,0) = \psi^0$ and with $\psi(t_1,...,t_N) \in \mathscr{D}$. 
\\ Let $\sigma$ be a permutation on $\{1,...,N\}$ such that $t_{\sigma(1)} \geq t_{\sigma(2)} \geq \dots \geq t_{\sigma(N)}$, then one such solution is given by 
\begin{align} \label{eq:solutiond}
& \psi(t_1,...,t_N) 
\\  &= U_{\{ \sigma(1) \}}(t_{\sigma(1)}, t_{\sigma(2)}) \dots U_{\{ \sigma(1),...,\sigma(N-1) \} } (t_{\sigma(N-1)},t_{\sigma(N)})U_{\{ 1,2,...,N \} } (t_{\sigma(N)},0) \psi^0 . \nonumber
\end{align}
\end{theorem}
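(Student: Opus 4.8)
The plan is to verify that the function $\psi$ defined by the right-hand side of \eqref{eq:solutiond} satisfies conditions i)--iii) of Definition~\ref{def:solution} and that it stays in $\mathscr{D}$.

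\emph{Elementary properties and well-definedness.} First I would check that \eqref{eq:solutiond} does not depend on the chosen decreasing ordering $\sigma$: two such orderings differ by adjacent transpositions of positions $k,k+1$ with $t_{\sigma(k)}=t_{\sigma(k+1)}$, and, writing $A_k:=\{\sigma(1),\dots,\sigma(k)\}$, the identity $U_A(t,t)=\id$ together with the bookkeeping of the sets $A_k$ shows each such transposition leaves the product unchanged. Setting all $t_j=0$ turns every factor into the identity, so $\psi(0,\dots,0)=\psi^0$. Since every $U_A(t,s)$ maps $\mathscr{D}$ into itself (Corollary~\ref{thm:corollaryDinvariant}), we obtain $\psi(t_1,\dots,t_N)\in\mathscr{D}$ for all times; as elements of $\mathscr{D}$ are represented by smooth compactly supported $\F^K$-valued functions of $(\mathbf{x}_1,\dots,\mathbf{x}_N)$, and both $\mathcal H_j(t_j)$ and the time derivatives below map $\mathscr{D}$ into $\mathscr{D}$ (Lemma~\ref{thm:invariantdomainforops}), all the pointwise evaluations in \eqref{eq:pointwiseev} make sense. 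This establishes condition ii) and \eqref{eq:psibarisgood}.

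\emph{Differentiability (condition i)).} On each open chamber $R_\sigma:=\{t_{\sigma(1)}>\dots>t_{\sigma(N)}\}$ the function $\psi$ equals \eqref{eq:solutiond} for that fixed $\sigma$. From property 3 of Theorem~\ref{thm:existsU}, the group law, and $\mathscr{D}$-invariance, each factor $U_A(\cdot,\cdot)\zeta$ with $\zeta\in\mathscr{D}$ is $C^1$ in its two time arguments, with $\partial_t U_A(t,s)\zeta=-i\mathcal H_A(t)U_A(t,s)\zeta$ and $\partial_s U_A(t,s)\zeta=iU_A(t,s)\mathcal H_A(s)\zeta$. Hence $\psi$ is $C^1$ on $R_\sigma$, and the product rule combined with $\mathcal H_{A_k}=\mathcal H_{A_{k-1}}+\mathcal H_{\{\sigma(k)\}}$ collapses the two contributions to $\partial_{t_{\sigma(k)}}\psi$ into the telescoped form
\begin{equation}\label{eq:planderiv}
\partial_{t_{\sigma(k)}}\psi=-i\,U_{\{\sigma(1)\}}(t_{\sigma(1)},t_{\sigma(2)})\cdots U_{A_{k-1}}(t_{\sigma(k-1)},t_{\sigma(k)})\,\mathcal H_{\{\sigma(k)\}}(t_{\sigma(k)})\,U_{A_k}(t_{\sigma(k)},t_{\sigma(k+1)})\cdots U_{\{1,\dots,N\}}(t_{\sigma(N)},0)\,\psi^0.
\end{equation}
I would then show that \eqref{eq:planderiv} extends continuously to $\overline{R_\sigma}$ and that the extensions coming from two chambers sharing a wall $\{t_{\sigma(k)}=t_{\sigma(k+1)}\}$ coincide there (again using $U_A(t,t)=\id$ and comparing \eqref{eq:planderiv} for $\sigma$ and for the ordering $\sigma'$ obtained from $\sigma$ by swapping positions $k$ and $k+1$); a routine piecing argument then gives that $\psi$ is differentiable on all of $\R^N$ with globally continuous partials given by \eqref{eq:planderiv}. \textbf{I expect this step to be the main obstacle}: the boundary limit in \eqref{eq:planderiv} requires passing the unbounded operator $\mathcal H_{\{\sigma(k)\}}(t_{\sigma(k)})$ through a converging family of $U$-compositions, which calls for continuity of these compositions in the graph norm of $K_1:=-\triangle+\mathcal H_f+1$. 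I would deduce the latter from the uniform bound $\|\mathcal H_{\{j\}}(t)\phi\|\le c\|K_1\phi\|$ underlying Lemma~\ref{thm:katorellichHtilde}, the $\dom(K_1)$-invariance of Lemma~\ref{thm:invariance}, and a Duhamel/Gr\"onwall estimate for $K_1U_A(t,s)\phi$, together with the continuity of $t\mapsto\mathcal H_{\{j\}}(t)\phi$ for fixed $\phi\in\mathscr{D}$.

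\emph{Evolution equations (condition iii)).} Fix $x^*=(t^*_1,\mathbf{x}_1,\dots,t^*_N,\mathbf{x}_N)\in\mathscr{S}_\delta$ with corresponding partition $P_1,\dots,P_L$, and choose $\sigma$ decreasing at $t^*$ and listing the elements of each block consecutively, say $P_j$ occupying positions $p,\dots,q$. At $t=t^*$ every factor $F_i:=U_{A_i}(t^*_{\sigma(i)},t^*_{\sigma(i+1)})$ with $p\le i\le q-1$ equals $\id$; evaluating \eqref{eq:planderiv} at $t^*$ and summing over $\sigma(k)\in P_j$ then yields $\partial_{t_{P_j}}\psi(t^*)=-i\,(F_1\cdots F_{p-1})\,\mathcal H_{P_j}(t^*_{P_j})\,\zeta$ with $\zeta:=(F_q\cdots F_{N-1}U_{\{1,\dots,N\}}(t^*_{\sigma(N)},0))\psi^0\in\mathscr{D}$, while $\psi(t^*)=(F_1\cdots F_{p-1})\zeta$. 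It therefore remains to prove that $[\mathcal H_{P_j}(t^*_{P_j}),F_1\cdots F_{p-1}]\zeta$ vanishes at $(\mathbf{x}_1,\dots,\mathbf{x}_N)$. I would telescope this commutator into terms $F_1\cdots F_{i-1}[\mathcal H_{P_j}(t^*_{P_j}),F_i]F_{i+1}\cdots F_{p-1}\zeta$ with $1\le i\le p-1$: since $A_i$ is disjoint from $P_j$ and $t^*_{\sigma(i)}\ge t^*_{\sigma(i+1)}\ge t^*_{P_j}$, Lemma~\ref{thm:subclaim} makes $[\mathcal H_{P_j}(t^*_{P_j}),F_i]F_{i+1}\cdots F_{p-1}\zeta$ vanish wherever $\|\mathbf{y}_a-\mathbf{y}_b\|>\delta+t^*_{\sigma(i)}-t^*_{P_j}$ for all $a\in P_j$, $b\in A_i$; the outer propagators $F_1\cdots F_{i-1}$ enlarge these supports only inside light cones, by amounts controlled by the differences $t^*_{\sigma(\cdot)}$ (Lemma~\ref{thm:lemmacausal}), and the strict spacelike separations $\|\mathbf{x}_a-\mathbf{x}_b\|>|t^*_a-t^*_b|+\delta$ built into $x^*\in\mathscr{S}_\delta$ are exactly enough to keep $(\mathbf{x}_1,\dots,\mathbf{x}_N)$ outside the enlarged supports, so each term vanishes there. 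This gives $\partial_{t_{P_j}}\psi(t^*)(\mathbf{x}_1,\dots,\mathbf{x}_N)=-i\,(\mathcal H_{P_j}(t^*_{P_j})\psi(t^*))(\mathbf{x}_1,\dots,\mathbf{x}_N)$, i.e.\ \eqref{eq:systemstrongsolution}, for every $j$, which completes the proof.
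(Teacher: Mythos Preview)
Your proposal is correct and follows essentially the same route as the paper: define $\psi$ by \eqref{eq:solutiond}, use Corollary~\ref{thm:corollaryDinvariant} and Lemma~\ref{thm:invariantdomainforops} for $\mathscr{D}$-invariance and pointwise evaluability, compute $\partial_{t_A}\psi$ by the product rule to obtain $\mathcal H_A(t_A)\psi$ plus a commutator, telescope that commutator and kill each piece on $\mathscr{S}_\delta$ via Lemma~\ref{thm:subclaim} together with the finite-propagation-speed support bounds of Lemma~\ref{thm:lemmacausal}. Your treatment of condition~iii) matches the paper's computation \eqref{eq:126}--\eqref{eq:supportargument} almost verbatim (you first compute single $\partial_{t_{\sigma(k)}}$ and then sum over a partition block, the paper directly differentiates in the common time $t_A$; the telescoping is the same).

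The only noteworthy difference is your handling of condition~i): you explicitly address well-definedness of \eqref{eq:solutiond} under change of ordering $\sigma$ and differentiability across chamber walls $\{t_{\sigma(k)}=t_{\sigma(k+1)}\}$, identifying the passage of the unbounded $\mathcal H_{\{\sigma(k)\}}$ through a limit of $U$-compositions as the delicate point and proposing a graph-norm continuity argument based on the $K_1$-bounds from Lemma~\ref{thm:katorellichHtilde} and Lemma~\ref{thm:invariance}. The paper, by contrast, dispatches differentiability in one line (``Since $\psi:\R^N\to\mathscr{D}\subset\mathscr{H}$, we may infer by Theorem~\ref{thm:existsU} part~3 that $\psi$ is differentiable''), working w.l.o.g.\ in a fixed ordering without discussing the walls. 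Your extra care here is warranted and your proposed resolution is sound; it does not constitute a different method, only a more explicit treatment of a step the paper leaves implicit.
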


For the proof, it will be helpful to abbreviate formulas like \eqref{eq:solutiond} using the $\bigcirc$-symbol for the ordered product of operators, $\bigcirc_{k=1}^l A_k := A_1 A_2 ... A_l$.
In this notation, expression \eqref{eq:solutiond} reads
\begin{equation}
 \left( \left( \bigcirc_{k=1}^{N-1} U_{\{ \sigma(j) | j \leq k \}}(t_{\sigma(k)},t_{\sigma(k+1)} ) \right) U_{\{ 1,,...,N \} } (t_{\sigma(N)},0) \psi^0 \right) (\mathbf{x}_1,...,\mathbf{x}_N). 
\end{equation}
Compare also fig.\ \ref{fig:Us} for a depiction of the successive application of the $U_A$ operators in a simple case.

\begin{proof}
Let $\psi^0 \in \mathscr{D}$, and define $\psi: \R^{N} \to \mathscr{H}$ by Eq.\ \eqref{eq:solutiond}. Property $U_A(t,t) = \id$ stated in Theorem \ref{thm:existsU} ensures $\psi(0,...,0)=\psi^0$, so the correct initial value is attained. $\psi^0 \in \mathscr{D}$ implies that for all $t_1,...,t_N \in \R$, ${\psi}(t_1,...,t_N) \in \mathscr{D}$ since $\mathscr{D}$ is preserved by the operators $U_A$ by virtue of Corollary \ref{thm:corollaryDinvariant}. 

We now show the three points from Definition \ref{def:solution}. 

\emph{i)} Since $\psi: \R^N \to \mathscr{D} \subset \mathscr{H}$, we may infer by Theorem \ref{thm:existsU} part 3 that $\psi$ is differentiable.

\emph{ii)} Let $j \in \{1,...,N\}$. By Lemma \ref{thm:invariantdomainforops} also $\mathcal{H}_j(t_j) \psi(t_1,...,t_N) \in \mathscr{D}$, so both expressions are pointwise evaluable. The same is true for $\partial_{t_j} \psi(t_1,...,t_N)$ since it amounts to a successive application of $U_A$ operators and of $\mathcal{H}_j$, which all leave $\mathscr{D}$ invariant.

\emph{iii)} We now have to check that $\psi$ satisfies the respective equations \eqref{eq:systemstrongsolution} in $S_\delta$. Given a set $A \subset \{ 1,...,N\}$ and a time $t_A \in \R$, consider a configuration $(t_1, \mathbf{x}_1,...,t_N,\mathbf{x}_N) \in \mathscr{S}_\delta$ where $t_j = t_A ~ \forall j \in A$. We assume w.l.o.g.\ that the times are already ordered $t_1 \geq t_2 \geq \dots \geq t_N$, so that the permutation in \eqref{eq:solutiond} is the identity. Let $a := \min (A)$ and $b := \max (A)$, then
\begin{align}
 \psi(t_1,...,t_N) \nonumber
 = &  \left( \bigcirc_{k=1}^{a-2} U_{\{ j | j \leq k \}}(t_{k},t_{k+1} ) \right) U_{\{ j | j \leq a-1 \}}(t_{a-1},t_A ) U_{\{j| j \leq b \}}(t_A, t_{b + 1})  \nonumber
\\ &  \left( \bigcirc _{k=b+1}^{N-1} U_{\{ j | j \leq k \}}(t_{k},t_{k+1} ) \right) U_{\{ 1,,...,N \} }  (t_{N},0) \psi^0  \label{eq:this}
\end{align}
We take the derivative of \eqref{eq:this} with respect to $t_A$ and use that  for $\zeta \in \mathscr{D}$,
\begin{equation}
i \frac{d}{dt} U_B(s,t)\zeta = - U_B(s,t) \mathcal{H}_B(t) \zeta, \quad \forall \ s,t \in \R, B \subset \{1,...,N\},
\end{equation}
which follows directly from the properties of the time evolution operators. Abbreviating 
\begin{equation}
\psi' := U_{\{j| j \leq b \}}(t_A, t_{b + 1}) \left( \bigcirc _{k=b+1}^{N-1} U_{\{ j | j \leq k \}}(t_{k},t_{k+1} ) \right) U_{\{ 1,,...,N \} }  (t_{\sigma(N)},0) \psi^0,
\end{equation}
we obtain
\begin{equation}
\begin{split}
& i \frac{\partial}{\partial t_A} \psi(t_1,...,t_N)   \\ = ~ & \left( \left( \bigcirc_{k=1}^{a-2} U_{\{ j | j \leq k \}}(t_{k},t_{k+1} ) \right) U_{\{ j | j \leq a-1 \}}(t_{a-1},t_A ) \right.  
 \left. \left( -\mathcal{H}_{\{ j | j \leq a-1 \}}(t_A) + \mathcal{H}_{\{ j | j \leq b \}}(t_A) \right)  \psi' \right) 
\\ = ~ & \mathcal{H}_A(t_A) \psi (t_1,...,t_N)  
 + \left( \left[  \bigcirc_{k=1}^{a-1} U_{\{ j | j \leq k \}}(t_{k},t_{k+1} ) , \mathcal{H}_A(t_A) \right] \psi' \right) . \label{eq:126}
\end{split} \end{equation}
We rewrite the second term as
\begin{equation}
\begin{split}
&  \left[  \bigcirc_{k=1}^{a-1} U_{\{ j | j \leq k \}}(t_{k},t_{k+1} ) , \mathcal{H}_A(t_A) \right] \psi' 
\\ = & \sum_{l=1}^{a-1} \left( \bigcirc_{k=1}^{l-1} U_{\{ j | j \leq k \}}(t_{k},t_{k+1})  \right) \left[ U_{\{ j | j \leq l \}}(t_{l},t_{l+1}), \mathcal{H}_A(t_A) \right] \left( \bigcirc_{k=l+1}^{a-1} U_{\{ j | j \leq k \}}(t_{k},t_{k+1})  \right) \psi',
\end{split} \end{equation}
where empty products such as $\bigcirc_{k=1}^{0}$ denote $\id$. Lemma \ref{thm:subclaim} implies that for any $\zeta \in \mathscr{D}$ and $l < a$, 
\begin{equation}
\supp \left( \left[ U_{\{ j | j \leq l \}}(t_l,t_{l+1}), \mathcal{H}_A(t_A)  \right] \zeta \right) \subset \{ (\mathbf{x}_1,...,\mathbf{x}_N) | \exists  k \in A, j \leq l: \| \mathbf{x}_j - \mathbf{x}_k \| \leq \delta + t_l - t_A  \}.
\end{equation}

The support properties of the evolution operators (Lemma \ref{thm:lemmacausal}) imply that if $\supp (\xi) \subset R$, then $\supp \left( \bigcirc_{k=1}^{l-1} U_{\{ j | j \leq k \}}(t_{k},t_{k+1}) \xi \right)$ is a subset of
\begin{equation}
 \left\{ (\mathbf{y}_1,...,\mathbf{y}_N) \in \R^{3N} \Big| \exists  (\mathbf{x}_1,...\mathbf{x}_N) \in R: \begin{array}{c} \mathbf{x}_j = \mathbf{y}_j \ \text{if} \ j > l. 
\\ \| \mathbf{x}_j - \mathbf{y}_j \| \leq t_j - t_l \ \text{if}  \ j \leq l.
\end{array} \right\} \label{eq:supportargument}
\end{equation}

Now we see that the support growth described by \eqref{eq:supportargument} is exactly such that the term $\left[  \bigcirc_{k=1}^{a-1} U_{\{ j | j \leq k \}}(t_{k},t_{k+1} ) , \mathcal{H}_A(t_A) \right] \psi' (\mathbf{x}_1,...,\mathbf{x}_N)=0,$ whenever $\| \mathbf{x}_j - \mathbf{x}_k \| > \delta + |t_j - t_k|$ holds for all $j \in A, k \notin A$. Thus \eqref{eq:126} evaluated inside of $\mathscr{S}_\delta$ becomes
\begin{equation}
 \left( i \frac{\partial}{\partial t_A} \psi(t_1,...,t_N)  \right) (\mathbf{x}_1,...,\mathbf{x}_N) = \left( \mathcal{H}_A(t_A) \psi (t_1,...,t_N)\right)(\mathbf{x}_1,...,\mathbf{x}_N) , 
\end{equation}
which proves that $\psi$ indeed is a solution of the multi-time system \eqref{eq:systemstrongsolution}. \qed
\end{proof}

\subsection{Uniqueness of solutions} \label{sec:uniqueness}

Uniqueness of solutions can be proven by induction over the particle number, using the key features of our multi-time system that the Hamiltonians $\mathcal{H}_k$ are self-adjoint and that the propagation speed is bounded by the speed of light (see Lemma \ref{thm:lemmacausal}).
 
\begin{proofU}
Let $\psi_1$, $\psi_2$ be solutions to \eqref{eq:systemstrongsolution} in the sense of Def.\ \ref{def:solution} with $\psi_1(0,...,0) = \psi_2(0,...,0) = \psi^0$. Due to linearity, $\omega := \psi_1 - \psi_2$ is a solution to \eqref{eq:systemstrongsolution} in the sense of Def.\ \ref{def:solution} with initial value $\omega(0,...,0) = \psi^0 - \psi^0= 0$. In particular, the point-wise evaluations of $\omega$ as in \eqref{eq:pointwiseev} are also well-defined. By induction over $L \in \{1,...,N\}$, we prove the statement: 
\\ $\mathbf{A(L)}$: \emph{At all points $(t_1,\mathbf{x}_1,...,t_N,\mathbf{x}_N) \in \mathscr{S}_\delta$ with at most $L$ different time coordinates, we have
$\left( \omega (t_1,...,t_N) \right) (\mathbf{x}_1,...,\mathbf{x}_N) = 0$.}
\\ For the \textbf{base case} $\mathbf{A(1)}$, we consider configurations with all times equal, where $\omega$ satisfies
\begin{equation}
i \partial_t \omega(t,...,t) = \mathcal{H}_{\{1,...,N\}}(t) \omega (t,...,t).
\end{equation}
By the uniqueness statement in Theorem \ref{thm:existsU}, this implies 
\begin{equation}
\omega(t,...,t)  = U_{\{1,...,N\}}(t,0) \omega^0= 0.
\end{equation}
$\mathbf{A(L) \Longrightarrow A(L+1)}$: We assume that $A(L)$ holds, and let $(t_1,\mathbf{x}_1,...,t_N,\mathbf{x}_N) \in \mathscr{S}_\delta$ with exactly $L+1$ different time coordinates. This means there is a unique partition of $\{1,...,N\}$ into disjoint sets $\Pi_1,...,\Pi_{L+1}$ which groups together particles with the same time coordinate in an ascending way:
\begin{equation} \begin{split}
\Pi_1 & \ := \left\{ j \in \{1,...,N\} \Big| t_j = \min_{k \in \{1,...,N\}} t_k \right\} \\
\Pi_2 & \ := \left\{ j \in \{1,...,N\} \Big| t_j = \min_{k \in \{1,...,N\} \setminus \Pi_1} t_k \right\}
\\ & \cdots
\\ \Pi_m & \ := \left\{ j \in \{1,...,N\} \Big| t_j = \min_{k \in \{1,...,N\} \setminus \cup_{i=1}^{m-1}\Pi_i} t_k \right\}.
 \end{split}
\end{equation}
Denote the largest time by $t_{L+1}$ and the second largest one by $t_L$. We define the backwards lightcone with respect to the particles in $\Pi_{L+1}$ as follows,
\begin{equation}
B:=\left\{ (y_1,...,y_N) \in \R^{4N} \left| \begin{array}{c} y_j = x_j \ \text{if} \ j \notin \Pi_{L+1} 
\\  \forall j \in \Pi_{L+1}: y^0_j = \tau \ \text{with} \  t_L \leq \tau \leq t_{L+1}, \\ \ |\mathbf{y}_j - \mathbf{x}_j| \leq t_{L+1} - \tau  
\end{array}  \right. \right\}.
\end{equation}
We show that $B \subset \mathscr{S}_{\delta}$. If $(y_1,...,y_N) \in B$, consider $j \in \Pi_{L+1}$ and $k \notin \Pi_{L+1}$, then 
\begin{equation} \begin{split}
|y^0_k-y^0_j|+\delta & = \tau - t_k + \delta =  (\tau - t_{L+1}) + (t_{L+1} - t_k + \delta) 
\\ & < -|\mathbf{y}_j - \mathbf{x}_j| + |\mathbf{x}_k - \mathbf{x}_j| \leq |\mathbf{x}_k - \mathbf{y}_j| = |\mathbf{y}_k - \mathbf{y}_j| .
 \end{split}
\end{equation}
Thus, all points in $B$ are still in our domain $\mathscr{S}_\delta$. In particular, we have
\begin{equation}
\left( i \partial_\tau  \omega(y^0_1,...,y^0_N) \right) (\mathbf{y}_1,...,\mathbf{y}_N) = \left( \mathcal{H}_{\Pi_{L+1}}(\tau) \omega(y^0_1,...,y^0_N ) \right) (\mathbf{y}_1,...,\mathbf{y}_N)  \quad \forall (y_1,...,y_N) \in B.
\end{equation}
Since $B$ contains the domain of dependence, i.e.\ the set that uniquely determines the value of $\omega$ at $(t_1,\mathbf{x}_1,...,t_N,\mathbf{x}_N)$ according to Lemma \ref{thm:lemmacausal}, Theorem \ref{thm:existsU} tells us that 
\begin{equation}
\omega(x_1,...,x_N) = \left( U_{\Pi_{L+1}}(t_{L+1},t_L) \omega^{t_L} \right) (\mathbf{x}_1,...,\mathbf{x}_N),
\end{equation}
where $\omega^{t_L}$ denotes the function $\omega$ evaluated at the time coordinates as in $(t_1,...,t_N)$ but where $t_{L+1}$ is replaced by $t_L$. This only has $L$ different times and is thus given according to the induction hypothesis ${A(L)}$ as $\omega^{t_L}=0$ in the whole domain of dependence. Consequently,
\begin{equation}
\left( \omega (t_1,...,t_N) \right) (\mathbf{x}_1,...,\mathbf{x}_N) = 0,
\end{equation}
which concludes the uniqueness proof.
 \qed 
\end{proofU}

\subsection{Interaction} \label{sec:interactionofdfp}

We now demonstrate that our model is indeed interacting, providing a rigoros version of Eq.\ \eqref{eq:ehrenfestdelta}.

\begin{proof}[Proof of Theorem~\ref{thm:interaction}] Let $t \in \R$ and $\mathbf{x} \in \R^3$.
The first step just uses that $\psi^t$ solves the Dirac equation,
\begin{equation}
i \partial_t \left\langle \psi^t, \varphi(t,\mathbf{x}) \psi^t \right\rangle = \left\langle - \mathcal{H}^t \psi^t, \varphi(t,\mathbf{x}) \psi^t \right\rangle + \left\langle   \psi^t, \varphi(t,\mathbf{x}) \mathcal{H}^t \psi^t \right\rangle + \left\langle \psi^t, i\dot{\varphi}(t,\mathbf{x}) \psi^t \right\rangle .
\end{equation}
We already encountered $\dot{\varphi}$, the time-derivative of the operator $\varphi$, in the proof of Lemma \ref{thm:katorellichHtilde}.
Since $\mathcal{H}^t$ and $\varphi(t,\mathbf{x})$ commute at equal times, only the third summand survives and the second derivative is 
\begin{equation} \begin{split}
\partial_t^2 \left\langle \psi^t, \varphi(t,\mathbf{x}) \psi^t \right\rangle & = -i \partial_t \left\langle \psi^t, i\dot{\varphi}(t,\mathbf{x}) \psi^t \right\rangle 
\\ & =  i \left\langle  \mathcal{H}^t \psi^t, \dot{\varphi}(t,\mathbf{x}) \psi^t \right\rangle - i \left\langle   \psi^t, \dot{\varphi}(t,\mathbf{x}) \mathcal{H}^t \psi^t \right\rangle + \left\langle  \psi^t, \ddot{\varphi}(t,\mathbf{x}) \psi^t \right\rangle 
\\ & = i \left\langle  \psi^t,  \left[ \mathcal{H}^t, \dot{\varphi}(t,\mathbf{x}) \right] \psi^t \right\rangle + \left\langle  \psi^t, \triangle_{\mathbf{x}} \varphi(t,\mathbf{x}) \psi^t \right\rangle .
\end{split} \end{equation}
Hence,
\begin{equation} \label{eq:felix}
\square \left\langle \psi^t, \varphi(t,\mathbf{x}) \psi^t \right\rangle = i \left\langle  \psi^t,  \left[ \mathcal{H}^t, \dot{\varphi}(t,\mathbf{x}) \right] \psi^t \right\rangle = i \sum_{k=1}^N \left\langle  \psi^t,  \left[ \varphi_k(t), \dot{\varphi}(t,\mathbf{x}) \right] \psi^t \right\rangle.
\end{equation}
So we need to compute, with the integration variable $x=(\mathbf{x}_1,...,\mathbf{x}_N)$,
\begin{equation} \begin{split}
&  i \left\langle  \psi^t,  \left[ \varphi_k(t), \dot{\varphi}(t,\mathbf{v}) \right] \psi^t \right\rangle  \\ = \ & i \int d^{3N}\hspace*{-0.1cm}x \ \psi^{t \dagger}(x) \int  \frac{d^3\mathbf{k}}{2\omega(\mathbf{k})}\left( \hat{\rho}^\dagger(\mathbf{k}) \hat{\rho}(\mathbf{k}) i \omega(\mathbf{k}) e^{-i\mathbf{k}(\mathbf{x}_k-\mathbf{v})}- c.c. \right) \psi^t(x)
\\ = \ & - \frac{1}{2}  \int d^{3N}\hspace*{-0.1cm}x \ \psi^{t \dagger}(x) \int  d^3\mathbf{k} \ \hat{\rho}^\dagger(\mathbf{k}) \hat{\rho}(\mathbf{k})  \left( e^{-i\mathbf{k}(\mathbf{x}_k-\mathbf{v})}+ e^{i\mathbf{k}(\mathbf{x}_k-\mathbf{v})} \right) \psi^t(x).
\end{split} \end{equation}
Denoting the function $\mathbf{y} \mapsto \rho(\mathbf{y}+\mathbf{v}-\mathbf{x}_k)$ by $\omega$, we have $\hat{\omega}(\mathbf{k}) = \hat{\rho}(\mathbf{k}) e^{i\mathbf{k}(\mathbf{x}_k-\mathbf{v})}$. Thus, the above formula can be rewritten with the help of the Plancherel theorem,
\begin{equation} \begin{split}
& \ = - \frac{1}{2} \int d^{3N}\hspace*{-0.1cm}x \ \psi^{t \dagger}(x) \left( \left\langle \hat{\rho}, \hat{\omega} \right\rangle_{L^2(\R^3)} + \left\langle \hat{\omega}, \hat{\rho} \right\rangle_{L^2(\R^3)} \right) \psi^t(x)
\\ & \ = - \frac{1}{2} \int d^{3N}\hspace*{-0.1cm}x \ \psi^{t \dagger}(x) \left( \left\langle \rho, \omega \right\rangle_{L^2(\R^3)} + \left\langle \omega, \rho \right\rangle_{L^2(\R^3)} \right) \psi^t(x)
\\ & \ = - \int d^{3N}\hspace*{-0.1cm}x \ \psi^{t \dagger}(x)  \left\langle \rho, \omega \right\rangle_{L^2(\R^3)}  \psi^t(x).
\\ & \ = -  \int d^{3N}\hspace*{-0.1cm}x \ \psi^{t \dagger}(x) \int d^3\mathbf{y}_1  {\rho}(\mathbf{y}_1) \rho(\mathbf{v} - \mathbf{x_k} + \mathbf{y}_1)  \psi^t(x)
\end{split} \end{equation}
We have used that $\rho$ and $\omega$ are real-valued. The result contains the term we wrote as $\rho ** \delta(\mathbf{x}_k-\mathbf{v})$ in \eqref{eq:deltastarstar}. Inserting this into \eqref{eq:felix} gives
\begin{equation} \begin{split}
\square \left\langle \psi^t, \varphi(t,\mathbf{x}) \psi^t \right\rangle & = - \sum_{k=1}^N \int d^{3N}\hspace*{-0.1cm}x \ \psi^{t \dagger}(x) \int d^3\mathbf{y}_1  {\rho}(\mathbf{y}_1) \rho(\mathbf{x} - \mathbf{x_k} + \mathbf{y}_1)  \psi^t(x)
 \\ & \equiv  - \sum_{k=1}^N \left\langle \psi^t, \rho ** \delta(\hat{\mathbf{x}}_k - \mathbf{x}) \psi^t \right\rangle,
\end{split} \end{equation}
which concludes the proof. \qed \end{proof}

 \section*{Acknowledgements} The authors wish to thank Heribert Zenk and Roderich Tumulka for helpful exchange as well as Lea Bo{\ss}mann, Robin Schlenga and Felix H{\"a}nle for discussions, support and encouragement. This work was partly funded by the Elite Network of Bavaria through the Junior Research Group \textit{Interaction between Light and Matter}. L.N.\ gratefully acknowledges financial support by the \textit{Studienstiftung des deutschen Volkes}.

\end{document}